\definecolor{darkred}  {rgb}{0.5,0,0}
\definecolor{darkblue} {rgb}{0,0,0.5}
\definecolor{darkgreen}{rgb}{0,0.5,0}
\newcommand{\calH}{{\cal H }}
\newcommand{\calX}{{\cal X }}
\newcommand{\mc}{\mathcal}
\newcommand{\ints}{\mathbb{Z}}
\newcommand{\Cc}{\mathcal{C}}
\newcommand{\fat}{\mathsf{fat}}
\newcommand{\be}{\begin{equation}}
\newcommand{\ee}{\end{equation}}
\newcommand{\bq}{\begin{eqnarray}}
\newcommand{\eq}{\end{eqnarray}}
\newcommand{\bea}{\begin{eqnarray}}
\newcommand{\eea}{\end{eqnarray}}
\newcommand{\ba}{\begin{align}}
\newcommand{\ea}{\end{align}}
\newcommand{\DLP}{\mathsf{DLP}}
\newcommand{\QKE}{\mathsf{QKE}}
\newcommand{\SVM}{\mathsf{SVM}}
\DeclareMathOperator{\Var}{Var}
\newtheorem{theorem}{Theorem}
\newtheorem{remark}{Remark}
\newtheorem{lemma}[theorem]{Lemma}
\newtheorem{corollary}[theorem]{Corollary}
\newtheorem{definition}[theorem]{Definition}
\newcommand{\pmset}[1]{\{-1,1\}^{#1}}
\renewcommand{\E}{\mathop{\mathbb E\/}}
\definecolor{myexpcolor}{RGB}{139,0,139}
\definecolor{myinnercolor}{RGB}{102,102,0}
\definecolor{myburg}{rgb}{0.7,0.11,0.11}
\definecolor{mygray}{gray}{0.6}
\newenvironment{nalign}{
    \begin{equation}
    \begin{aligned}
}{
    \end{aligned}
    \end{equation}
    \ignorespacesafterend{}
}
\DeclareMathOperator{\sign}{sign}
\begin{document}

\title{A rigorous and robust quantum speed-up in supervised machine learning}

\author{Yunchao Liu}
 \email{yunchaoliu@berkeley.edu}
\affiliation{Department of Electrical Engineering and Computer Sciences, University of California, Berkeley, CA 94720}
\affiliation{IBM Quantum, T.J.\ Watson Research Center, Yorktown Heights, NY 10598}
\author{Srinivasan Arunachalam}
\email{Srinivasan.Arunachalam@ibm.com}
\affiliation{IBM Quantum, T.J.\ Watson Research Center, Yorktown Heights, NY 10598}
\author{Kristan Temme}
\email{kptemme@ibm.com}
\affiliation{IBM Quantum, T.J.\ Watson Research Center, Yorktown Heights, NY 10598}
\date{\today}

\begin{abstract}

Over the past few years several quantum machine learning algorithms were proposed that promise quantum  speed-ups  over  their  classical  counterparts. Most  of  these  learning  algorithms either assume quantum access to data -- making it unclear if quantum speed-ups still exist without making these strong assumptions, or are heuristic in nature with no provable advantage over classical algorithms. In this paper, we establish a rigorous quantum speed-up for supervised classification using a general-purpose quantum learning algorithm that only requires classical access to data. Our quantum classifier is a conventional support vector machine that uses a fault-tolerant quantum computer to estimate a kernel function. Data samples are mapped to a quantum feature space and the kernel entries can be estimated as the transition amplitude of a quantum circuit. We construct a family of datasets and show that no classical learner can classify the data inverse-polynomially better than random guessing, assuming the widely-believed hardness of the discrete logarithm problem. Meanwhile, the quantum classifier achieves high accuracy and is robust against  additive errors in the kernel entries that arise from finite sampling statistics.
\end{abstract}

\maketitle

  Finding potential applications for quantum computing which demonstrate quantum speed-ups is a central goal of the field. Much attention has been drawn towards establishing a quantum advantage in machine learning due to its wide applicability~\cite{Biamonte2017,Arunachalam2017survey,dunjko2018machine,Ciliberto2018quantum,RevModPhys.91.045002}. 
  In this direction there have been several quantum algorithms for machine learning tasks that promise polynomial and exponential speed-ups. A family of such quantum algorithms assumes that classical data is encoded in amplitudes of a quantum state, which uses a number of qubits that is only logarithmic in the size of the dataset. These quantum algorithms are therefore able to achieve exponential speed-ups over classical  approaches~\cite{Harrow2009quantum,wiebe2012quantum,lloyd2013quantum,Lloyd2014,Rebentrost2014quantum,lloyd2014quantumtda,Cong_2016,kerenidis2016quantum,Brandao2019sdp,Rebentrost2018svd,Zhao2019quantum}. However, it is not known whether data can be efficiently provided this way in practically relevant settings. This raises the question of whether the advantage comes from the quantum algorithm, or from the way data is provided~\cite{aaronson2015read}. Indeed, recent works have shown that if classical algorithms have an analogous sampling access to data, then some of the proposed exponential speed-ups do no longer exist~\cite{Tang2019quantuminspired,tang2018quantuminspired,gilyn2018quantuminspired,chia2018quantuminspired,ding2019quantuminspired,Chia2020samplingbased}. 

  Consequently a different class of quantum algorithms has been developed which only assumes classical access to data. Most of these algorithms use variational circuits for learning, where a candidate circuit is selected from a parameterized circuit family via classical optimization~\cite{mitarai2018quantum,farhi2018classification,grant2018hierarchical,schuld2020circuit,Benedetti_2019}. 
  Although friendly to experimental implementation, these algorithms are heuristic in nature since no formal evidence has been provided which shows that they have a genuine advantage over classical algorithms. An important challenge is therefore to find one example of such a heuristic quantum machine learning algorithm, which given \emph{classical access} to data can \emph{provably} outperform all classical learners for some learning~problem.

In this paper, we answer this in the affirmative. We show that an exponential quantum speed-up can be obtained via the use of a quantum-enhanced feature space~\cite{Havlicek2019,Schuld2019quantum}, where each data point is mapped non-linearly to a quantum state and then classified by a linear classifier in the high dimensional Hilbert space. To efficiently learn a linear classifier in feature space from training data, we use the standard kernel method in \emph{support vector machines} ($\SVM$s), a well-known family of supervised classification algorithms~\cite{boser1992training,vapnik2013nature}. We obtain the kernel matrix by measuring the pairwise inner product of the feature states on a quantum computer, a procedure we refer to as \emph{quantum kernel estimation} ($\QKE$). This kernel matrix is then given to a classical optimizer that efficiently finds the linear classifier that optimally separates the training data in feature space by running a convex quadratic program.

The advantage of our quantum learner stems from its ability to recognize classically intractable complex patterns using the feature map. We prove an end-to-end quantum advantage based on this intuition, where our quantum classifier is guaranteed to achieve high accuracy for a classically hard classification problem. We show that under a suitable quantum feature map, the classical data points, which are indistinguishable from having random labels by efficient classical algorithms, are linearly separable with a large margin in high-dimensional Hilbert space. Based on this property, we then combine ideas from classic results on the generalization of soft margin classifiers~\cite{anthony_bartlett_2000,Shawe-Taylor1998structural,BST1999,Shawe-Taylor2002generalization} to rigorously bound the misclassification error of the $\SVM$-$\QKE$ algorithm. The optimization for large margin classifiers in the $\SVM$ program is crucial in our proof, as it allows us to learn the optimal separator in the exponentially large feature space, while also making our quantum classifier robust against additive sampling errors.

Our classification problem that shows the exponential quantum speed-up is constructed based on the discrete logarithm problem ($\DLP$). We prove that no efficient classical algorithm can achieve an accuracy that is inverse-polynomially better than random guessing, assuming the widely-believed classical hardness of $\DLP$. In computational learning theory, the use of one-way functions for constructing classically hard learning problems is a well-known technique~\cite{kearns1990computational}. Rigorous separations between quantum and classical learnability have been established using this idea in the quantum oracular and PAC model~\cite{Servedio2004equivalence,Arunachalam2017survey}, as well as in the classical generative setting~\cite{sweke2020quantum}. There the quantum algorithms are constructed specifically to solve the problems for showing a separation, and in general are not applicable to other learning problems. Based on different complexity-theoretic assumptions, evidences of an exponential quantum speed-up were shown for a quantum generative model~\cite{Gao2018quantum}, where the overall performance is not guaranteed.

\bigskip
{\noindent\bf A classically intractable learning problem}\\[1mm]
The task of supervised classification is to assign a label $y\in\{-1,1\}$ to a datum $x \in \calX$ from data space $\calX$ according to some unknown decision rule $f$ (usually referred to as a \emph{concept}), by learning from labeled examples $S=\{(x_i,y_i)\}_{i=1,\ldots, m}$ that are generated from this concept, $y_i=f(x_i)$. Given the training set $S$, an efficient learner needs to compute a classifier $f^*$ in time that is polynomial in the size of $S$,  
with the goal of achieving high \emph{test accuracy},
\begin{equation}
    \mathrm{acc}_{f}(f^*)=\Pr_{x \sim \calX}\left[f^*(x)=f(x)\right],
\end{equation}
the probability of agreeing with $f$ on unseen examples. Here we assume that the datum $x$ is sampled uniformly random from $\calX$, in both training and testing, and the size of $S$ is polynomial in the data dimension.

An important ingredient of machine learning is prior knowledge, i.e., additional information given to the learning algorithm besides the training set. In standard computational learning theory~\cite{kearns1990computational,kearns1994introduction}, this is modeled as a concept class -- a (often exponentially large) set of labeling rules, and the target concept is promised to be chosen from the concept class. A concept class $\mc C$ is efficiently learnable if for every $f\in \mc C$, an efficient algorithm can achieve 0.99 test accuracy by learning from examples labeled according to $f$ with high success probability. See Appendix~\ref{app:definitions} for detailed definitions. 

Our concept class that separates quantum and classical learnability is based on the discrete logarithm problem ($\DLP$). For a large prime number $p$ and a generator $g$ of $\ints_p^*=\{1,2,\dots,p-1\}$, it is a widely-believed conjecture that no classical algorithm can compute $\log_g x$ on input $x \in \ints_p^*$, in time polynomial in $n = \lceil \log_2(p) \rceil$, the number of bits needed to represent $p$. Meanwhile, $\DLP$ can be solved by Shor's quantum algorithm~\cite{Shor1997polynomial} in polynomial time. 

Based on $\DLP$, we define our concept class $\mc C=\{f_s\}_{s\in \ints_p^*}$ over the data space $\calX=\ints_p^*$ as follows,
\begin{equation}\label{eq:defconceptclass}
    f_s(x)=\begin{cases}+1,&\text{if }\log_g x\in[s,s+\frac{p-3}{2}],\\ -1, &\text{else.}\end{cases}
\end{equation}
Each concept $f_s:\ints_p^*\to \{-1,1\}$ maps half the elements in $\ints_p^*$ to $+1$ and half of them to $-1$.

\begin{figure}[t]
    \centering
    \includegraphics[width=\linewidth]{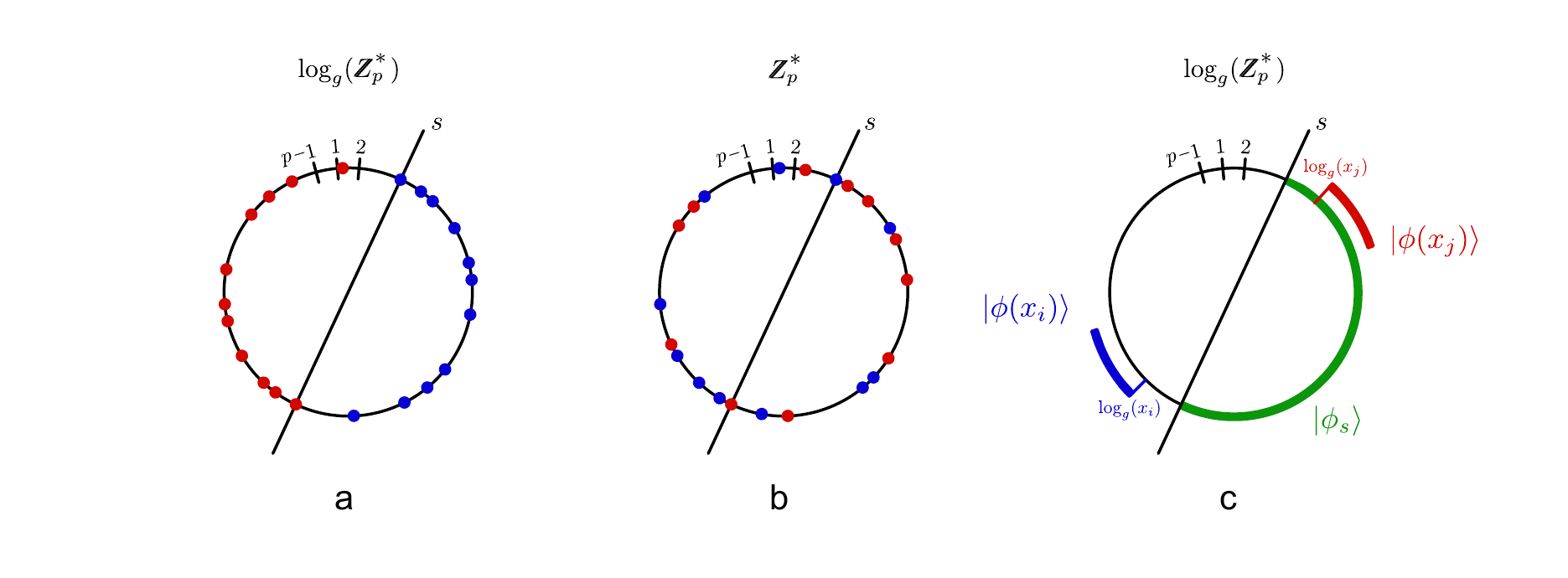}
    \caption{Learning the concept class $\mc C$ by a quantum feature map. (\textbf{a}) After taking the discrete log of the data samples, they become separated in log space by the concept $s$. (\textbf{b}) However, in the original data space, the data samples look like randomly labeled and cannot be learned by an efficient classical algorithm. (\textbf{c}) Using the quantum feature map, each $x\in\ints_p^*$ is mapped to a quantum state $\ket{\phi(x)}$, which corresponds to a uniform superposition of an interval in log space starting with $\log_g x$. This feature map creates a large margin, as the $+1$ labeled example (red interval) has high overlap with a separating hyperplane (green interval), while the $-1$ labeled example (blue interval) has zero overlap.}
    \label{fig:data}
\end{figure}

To see why the discrete logarithm is important in our definition, note that if we change $\log_g x$ to $x$ in Eq.~\eqref{eq:defconceptclass}, then learning the concept class $\mc C$ is a trivial problem. Indeed, if we imagine the elements of $\ints_p^*$ as lying on a circle, then each concept $f_s$ corresponds to a direction for cutting the circle in two halves (Fig.~\ref{fig:data}a). Therefore, the training set of labeled examples can be separated as two distinct clusters, where one cluster is labeled $+1$ and the other labeled $-1$. To classify a new example, a learning algorithm can simply decide based on which cluster is closer to the example. This intuition also explains why the original concept class $\mc C$ is learnable by a quantum learner, since the learner can use Shor's algorithm to compute the discrete log for every data sample it receives~\cite{Servedio2004equivalence}, and then solve the above trivial learning problem.

On the other hand, due to the classical intractability of $\DLP$, the training samples look like randomly labeled from the viewpoint of a classical learner (Fig.~\ref{fig:data}b). In fact, we can prove that the best a classical learner can do is randomly guess the label for new examples, which achieves 50\% test accuracy. These results are summarized below.

\begin{theorem}\label{thm:mainthm1}
Assuming the classical hardness of $\DLP$, no efficient classical algorithm can achieve $\frac{1}{2}+\frac{1}{\poly(n)}$ test accuracy for $\mc C$.
\end{theorem}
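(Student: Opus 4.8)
The plan is to establish the contrapositive: given an efficient classical learner $A$ that achieves test accuracy $\tfrac{1}{2}+\tfrac{1}{\poly(n)}$ for $\mc C$, I would build a probabilistic polynomial-time algorithm that computes $\log_g x$ on worst-case inputs $x\in\ints_p^*$, contradicting the hardness of $\DLP$. The first step is to reduce learning the whole class to weakly predicting the single ``most-significant-bit'' concept $f_0$. Observe that every concept is a translate, $f_s(x)=f_0(xg^{-s})$, and that the uniform distribution on $\ints_p^*$ is invariant under $x\mapsto xg^{-s}$; consequently an $f_s$-labeled sample is turned into an $f_0$-labeled sample of the same distribution by multiplying each point by $g^{-s}$, so $A$'s guarantee for $\mc C$ is equivalent to a guarantee for $f_0$. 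Crucially, an $f_0$-labeled training set can be produced \emph{without} computing any discrete logarithm: draw exponents $t_i$ uniformly and output examples $(g^{t_i},y_i)$ with $y_i=+1$ iff $t_i\in[0,\tfrac{p-3}{2}]$. Feeding such a self-generated sample to $A$ and evaluating the returned hypothesis on a fresh independent input gives a polynomial-time predictor $P$ with $\Pr_{x\sim\ints_p^*}[P(x)=f_0(x)]\ge\tfrac{1}{2}+\tfrac{1}{\poly(n)}$ --- a weak predictor, under the uniform distribution, for whether $\log_g x$ lies in the lower half of $\ints_{p-1}$.

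The second step turns this weak predictor into a discrete-log solver. By the random self-reducibility of $\DLP$ --- replace $x^*$ by $z=x^*g^{r}$ for uniform $r$, recover $\log_g z$, and subtract $r$ --- it suffices to compute $\log_g z$ for uniformly random $z=g^b$. For this I would run the classical Blum--Micali argument that the most-significant bit is hard-core for $x\mapsto g^x$: read off the least significant bit $\ell_0$ of $b$ from the Legendre symbol of $z$, take the two square roots of $zg^{-\ell_0}$ modulo $p$, use (an amplified form of) the predictor $P$ to single out the principal square root --- the one whose exponent lies in the lower half --- and recurse on it, peeling off one bit of $b$ per stage and finishing in $O(n)$ stages; the final candidate is verified by one modular exponentiation, so the overall success probability can be boosted at will. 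The learner's own failure probability, the fact that $p-1$ need not be a power of two, and a union bound over the $O(n)$ stages are routine to handle.

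The main obstacle is the amplification inside the second step. The Blum--Micali halving wants an MSB oracle that is correct at the \emph{particular} square-root points it queries, whereas $P$ is only correct on an inverse-polynomial fraction more than half of \emph{uniformly random} inputs, with its errors possibly placed adversarially --- in particular possibly straddling exactly the two candidate square roots. Bridging this gap is precisely the hard-core-predicate theorem for the discrete logarithm: one must regard the full family $\{f_s\}_s$ of interval concepts as a redundant, code-like encoding of $b=\log_g z$, query fresh hypotheses on uniform shifts $zg^{-s}$ to obtain noisy evaluations of this code, and list-decode it to a polynomial-size set of candidates for $b$ (following Akavia--Goldwasser--Safra, which is the same cryptographic-hardness-of-learning machinery developed by Kearns--Valiant). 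The translation-invariance reduction of the first step and the self-reduction and verification of the second are, by comparison, straightforward.
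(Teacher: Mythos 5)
Your reduction is essentially the one the paper uses: exploit the translation structure of $\mc C$ together with the fact that labeled samples for any fixed concept can be generated classically (the paper's Lemma~\ref{lemma:efficientdatageneration}) to turn a weak learner for any one $f_s$ into a polynomial-time procedure that decides whether $\log_g y$ lies in the lower half of $\ints_{p-1}$ correctly on a $\tfrac{1}{2}+\tfrac{1}{\poly(n)}$ fraction of uniform $y$, and then lift this weak decider to a full $\DLP$ solver via the hard-core-bit theorem for discrete log. The one place you diverge is that the paper invokes Blum and Micali's Theorem~3 as a black box (Lemma~\ref{lemma:dlpsmallbiashardness}) for precisely the amplification you flag as your ``main obstacle,'' whereas you start to unpack the bit-peeling recursion, correctly notice it cannot tolerate a merely weak oracle, and gesture at an Akavia--Goldwasser--Safra list-decoding fix; that concern is real but is already resolved inside the cited lemma, so your extra machinery is a valid alternative route through the same step rather than a necessary addition.
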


Our proof, c.f.\ Appendix~\ref{app:classicalhardness}, of classical hardness of learning $\mc C$ is based on an average-case hardness result for discrete log by Blum and Micali~\cite{blum1984how}. They showed that computing the most significant bit of $\log_g x$ for $\frac{1}{2}+\frac{1}{\poly(n)}$ fraction of $x\in\ints_p^*$ is as hard as solving $\DLP$. We then reduce our concept class learning problem to $\DLP$ using this result, by showing that if an efficient learner can achieve $\frac{1}{2}+\frac{1}{\poly(n)}$ test accuracy for $\mc C$, then it can be used to construct an efficient classical algorithm for~$\DLP$, which proves Theorem~\ref{thm:mainthm1}. 

In addition to establishing a separation between quantum and classical learnability for binary classification, we also note that this separation can be efficiently verified by a classical verifier in an interactive setting. This follows from a nice property of our concept class. We show that for every concept $f\in\mc C$, we can \emph{efficiently generate} labeled examples $(x,f(x))$ classically where $x\sim \ints_p^*$ is uniformly distributed, despite $f(x)$ being hard to compute by definition. To test if a prover can learn $\mc C$, a classical verifier can pick a random concept and efficiently generate two sets of data $(S,T)$, where $S$ is a training set of labeled examples and $T \subseteq \ints_p^*$ is a test set of examples with labels removed. The verifier then sends $(S,T)$ to the prover and asks for labels for $T$, and finally accepts or rejects based on the accuracy of these labels. As a corollary of Theorem~\ref{thm:mainthm1}, an efficient quantum learner can pass this challenge, while no efficient classical learner can pass it assuming the classical hardness of $\DLP$.

\bigskip
{\noindent\bf Efficient learnability with $\QKE$}\\[1mm]
We now turn our attention to general-purpose quantum learning algorithms which only requires classical access to data and in principle can be applied to a wide range of learning problems. Examples include quantum neural networks~\cite{mitarai2018quantum,farhi2018classification,grant2018hierarchical,schuld2020circuit,Benedetti_2019}, generative models~\cite{Gao2018quantum}, and kernel methods~\cite{Havlicek2019,Schuld2019quantum}. The main challenge of proving a quantum advantage for these algorithms is that they may not be able to utilize the full power of quantum computers, and it is unclear if the set of problems that they can solve is beyond the capability of classical algorithms. Indeed, previous analysis of these quantum algorithms only establish evidence that parts of the algorithm cannot be efficiently simulated classically~\cite{Gao2018quantum,Havlicek2019,Schuld2019quantum}, which does not guarantee that the algorithms can solve classically hard learning problems.

Recall that we have constructed a learning problem that is as hard as the discrete log problem. This implies classical intractability assuming the hardness of discrete log, while also assuring that the problem is within the power of quantum computers due to Shor's algorithm. This provides the basis to solving our main challenge -- we now show that the concept class $\mc C$ can be efficiently learned by our support vector machine algorithm with quantum kernel estimation ($\SVM$-$\QKE$). This formally establishes our intuition that quantum feature maps can recognize patterns that are unrecognizable by classical algorithms, even when the quantum classifier is inherently noisy due to finite sampling statistics. We have therefore established an end-to-end quantum advantage for quantum kernel methods.

The core component in our algorithm that leads to its ability to outperform classical learners is the \emph{quantum feature map}. For learning the concept class $\mc C$, the feature map is constructed prior to seeing the training samples and has the following form,
\begin{equation}
    x\mapsto\ket{\phi(x)}=\frac{1}{\sqrt{2^k}}\sum_{i=0}^{2^k-1}\ket{x\cdot g^i},
\end{equation}
which maps a classical data point $x\in\ints_p^*$ to a $n$-qubit quantum state $\ketbra{\phi(x)}$, and $k=n-t\log n$ for some constant $t$. This family of states was first introduced in~\cite{Aharonov2007adiabatic} to study the complexity of quantum state generation and statistical zero knowledge, where it is shown that  $\ket{\phi(x)}=U(x)\ket{0^n}$ can be efficiently prepared on a fault tolerant quantum computer by a circuit $U(x)$ which uses Shor's algorithm as a subroutine, c.f.\ Appendix~\ref{app:qfeaturemap}.

In learning algorithms, feature maps play the role of pattern recognition: the intrinsic labeling patterns for data, which are hard to recognize in the original space (Fig.~\ref{fig:data}b), become easy to identify once mapped to the feature space. Our feature map indeed achieves this by mapping low dimensional data vectors to the Hilbert space with exponentially large dimension. For each concept $f_s\in\mc C$, we show that there exists a separating hyperplane $w_s = \ketbra{\phi_s}$ in feature space. That is, for~$+1$ labeled examples, we have $\Tr[w_s\ketbra{\phi(x)}] = |\braket{\phi_s}{\phi(x)}|^2=1/\poly(n)$ , while for $-1$ labeled examples we have $|\braket{\phi_s}{\phi(x)}|^2=0$ with probability $1 - 1/\poly(n)$ (Fig.~\ref{fig:data}c). If we think of~$w_s$ as the normal vector of a hyperplane in feature space associated with the Hilbert-Schmidt inner product, then this property suggests that +1/-1 labeled examples are separated by this hyperplane by a large margin. This \emph{large margin property} is the fundamental reason that our algorithm can succeed: it suggests that to correctly classify the data samples, it suffices to find a good linear classifier in feature space, while also guaranteeing that such a good linear classifier~exists. 

The idea of applying a high-dimensional feature map to reduce a complex pattern recognition problem to linear classification is not new, and has been the foundation of a family of supervised learning algorithms called support vector machines ($\SVM$s)~\cite{boser1992training,vapnik2013nature}. Consider a general feature map $\phi:\calX\to\mc H$ that maps data to a feature space~$\mc H$ associated with an inner product $\langle\cdot,\cdot\rangle$. To find a linear classifier in $\mc H$, we consider the convex quadratic program  
\begin{align}
\label{eq:mainprimal}
\min_{w,\xi}\frac{1}{2}\|w\|_2^2 + \frac{\lambda}{2}\sum_{i=1}^m \xi_i^2 \hspace{2mm}  \text{ s.t. } y_i\cdot \langle \phi(x_i), w \rangle\geq 1-\xi_i
\end{align}
where $\xi_i\geq 0$. Here $\lambda>0$ is a constant, $w$ is a hyperplane in $\mc H$ which defines a linear classifier $y=\sign\left( \langle \phi(x),w \rangle\right)$, and $\xi_i$ are slack variables used in the soft margin constraints. Intuitively, this program optimizes for the hyperplane that maximally separates +1/-1 labeled data. Note that \eqref{eq:mainprimal} is efficient in the dimension of $\mc H$. However, once we map to a high-dimensional feature space, it takes exponential time to find the optimal hyperplane. The main insight which leads to the success of $\SVM$s is that this problem can be solved by running the dual program of Eq.~\eqref{eq:mainprimal}
\begin{equation}\label{eq:maindual}
    \max_{\alpha\geq 0}\sum_{i=1}^m\alpha_i-\frac{1}{2}\sum_{i,j=1}^m\alpha_i\alpha_j y_i y_j K(x_i,x_j)-\frac{1}{2\lambda}\sum_{i=1}^m\alpha_i^2,
\end{equation}
where $K(x_i,x_j)= \langle \phi(x_i),\phi(x_j)\rangle$ is the \emph{kernel matrix}. This dual program, which returns a linear classifier defined as $y=\sign\left(\sum_{i=1}^m \alpha_i y_i K(x,x_i)\right)$, is equivalent to the original program as guaranteed by strong duality. Effectively, this means that we can do optimization in the high-dimensional feature space efficiently, as long as the kernel $K(x_i,x_j)$ can be efficiently computed.

\begin{figure}[t]
    \centering
    \includegraphics[width=0.5\linewidth]{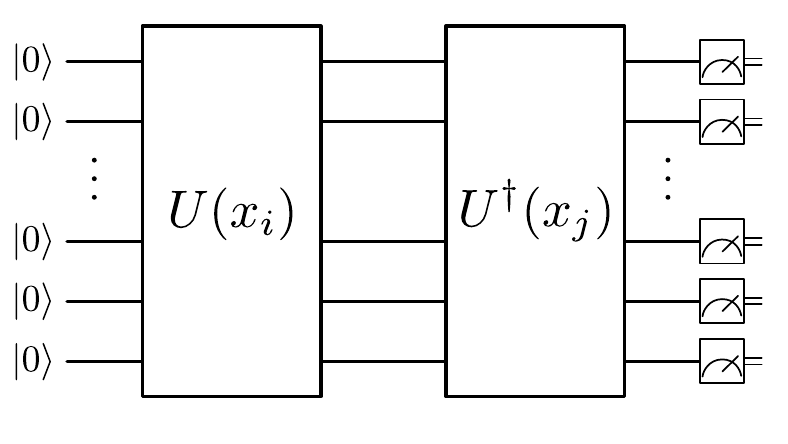}
    \caption{Quantum kernel estimation. A quantum feature map $x\mapsto\ket{\phi(x)}$ is represented by a circuit, $\ket{\phi(x)}=U(x)\ket{0^n}$. Each kernel entry $K(x_i,x_j)$ is obtained using a quantum computer by running the circuit $U^\dag (x_j)U(x_i)$ on input $\ket{0^n}$, and then estimate $\left|\expval{U^\dag (x_j)U(x_i)}{0^n}\right|^2$ by counting the frequency of the $0^n$ output.}
    \label{fig:kernel}
\end{figure}

The same insight can be applied to our quantum feature map: to utilize the full power of the quantum feature space, it suffices to compute the inner products $|\braket{\phi(x_i)}{\phi(x_j)}|^2$ between the feature states. To estimate such an inner product using a quantum computer, we simply apply $U^\dag (x_j)U(x_i)$ on input $\ket{0^n}$, and measure the probability of the $0^n$ output (see Fig.~\ref{fig:kernel}). We call such a procedure quantum kernel estimation ($\QKE$). The overall procedure for learning with quantum feature map is now clear. On input a set of $m$ labeled training examples $S$, run $\QKE$ to obtain the $m\times m$ kernel matrix, then run the dual $\SVM$ Eq.~\eqref{eq:maindual} on a classical computer to obtain the solution~$\alpha$. To classify a new example $x$, run $\QKE$ to obtain $K(x,x_i)$ for each $i=1,\ldots, m$, then~return 
\begin{equation}
    y=\sign\left(\sum_{i=1}^m \alpha_i y_i K(x,x_i)\right).
\end{equation}
Throughout the entire $\SVM$-$\QKE$ algorithm, $\QKE$ is the only subroutine that requires a quantum computer, while all other optimization steps can be performed classically. See Appendix~\ref{section:tutorial} for a detailed description of the algorithm.

Despite the seemingly direct analogy between quantum and classical feature maps, one important aspect of $\QKE$ makes the analysis of our quantum algorithm fundamentally different from classical $\SVM$s. Note that estimating the output probability of a quantum computer is inherently noisy due to finite sampling statistics, even when the quantum computer is fully error corrected. In $\QKE$, this finite sampling error can be modeled as i.i.d.\ additive errors for each kernel entry, with mean 0 and variance $\frac{1}{R}$, where $R$ is the number of measurement shots for each kernel estimation circuit. Our main result rigorously establishes the performance guarantee of $\SVM$-$\QKE$, which remains robust under this noise model.

\begin{theorem}\label{thm:mainthm2}
The concept class $\mc C$ is efficiently learnable by $\SVM$-$\QKE$. More specifically, for any concept $f\in~\mc C$, the $\SVM$-$\QKE$ algorithm returns a classifier with test accuracy at least 0.99 in polynomial time, with probability at least $2/3$ over the choice of random training samples and over noise in $\QKE$ estimation.
\end{theorem}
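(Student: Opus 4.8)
I would prove Theorem~\ref{thm:mainthm2} by combining a purely geometric statement about the feature map with a general robustness property of the $\ell_2$-soft-margin $\SVM$ under noisy kernels. The geometric input is a \emph{large-margin lemma}: for every $f_s\in\mc C$ there is a feature-space hyperplane $w_s$ that separates the feature states $\rho(x)=|\phi(x)\rangle\langle\phi(x)|$ of $+1$ and $-1$ examples by margin $\gamma=1/\poly(n)$, except on a $1/\poly(n)$ fraction of $\calX$. Feeding $w_s$ into the primal Eq.~\eqref{eq:mainprimal} bounds the norm of the optimizer; a perturbation argument then shows the $\QKE$ sampling noise does not spoil this bound; and a soft-margin generalization bound turns the norm bound into the claimed test-accuracy guarantee.

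\textbf{Large-margin lemma and norm control.} Take $w_s=|\phi_s\rangle\langle\phi_s|-|\phi_{s'}\rangle\langle\phi_{s'}|$, where $|\phi_s\rangle$ and $|\phi_{s'}\rangle$ are the uniform superpositions over the $+1$- and $-1$-exponent intervals of $f_s$; these are orthogonal, so $\|w_s\|_2=\sqrt2$. Since $|\phi(x)\rangle$ is a uniform superposition over the length-$2^k$ exponent window starting at $\log_g x$, and $2^k=2^n/n^t$ is tiny next to the half-interval length $\tfrac{p-1}{2}$, a short interval-counting step gives $\langle\rho(x),w_s\rangle=\gamma\,(n_+-n_-)/2^k$, where $n_\pm$ counts the window's exponents lying in the $\pm1$ region ($n_++n_-=2^k$) and $\gamma:=2^{k+1}/(p-1)=\Theta(n^{-t})$. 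Calling $x$ \emph{good} when its window avoids the two interval endpoints, one gets $\Pr_{x\sim\calX}[x\text{ bad}]\le\gamma$, and $\langle\rho(x),w_s\rangle$ equals $+\gamma$ on good $+1$ examples, $-\gamma$ on good $-1$ examples, and lies in $(-\gamma,\gamma)$ otherwise. Hence $w=\gamma^{-1}w_s$ (so $\|w\|_2^2=2\gamma^{-2}$), with $\xi_i=0$ on good training points and $\xi_i\le2$ on bad ones, is feasible for Eq.~\eqref{eq:mainprimal} with objective $\gamma^{-2}+O\big(\lambda\,\#\{\text{bad training points}\}\big)$. A Chernoff bound makes $\#\{\text{bad training points}\}=\tilde O(m\gamma)$ with probability $\ge5/6$, so the optimum of Eq.~\eqref{eq:mainprimal} has value $\mathrm{OPT}=\tilde O(\gamma^{-2}+\lambda m\gamma)=\poly(n)$, whence $\|w^\star\|_2^2\le 2\,\mathrm{OPT}$ and $\sum_i(\xi_i^\star)^2\le 2\,\mathrm{OPT}/\lambda$.

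\textbf{Robustness of the noisy program (the hard part).} The complication is that $\QKE$ hands the dual Eq.~\eqref{eq:maindual} the matrix $\widetilde K=K+E$, where $E$ is symmetric with i.i.d.\ entries of mean $0$ and variance $1/R$, and $\widetilde K$ need not be positive semidefinite. The feature that rescues us is precisely the $\ell_2$ slack penalty, whose dual signature $-\tfrac1{2\lambda}\|\alpha\|_2^2$ is strongly concave: by itself this forces the noisy maximizer $\widetilde\alpha$ to satisfy $\|\widetilde\alpha\|_2=\poly(n)$, whatever the spectrum of $\widetilde K$. A matrix tail bound then gives $\|E\|_{\mathrm{op}}=\tilde O(\sqrt{m/R})$ with probability $\ge5/6$, so $|\alpha^\top(\widetilde G-G)\alpha|\le\|E\|_{\mathrm{op}}\|\alpha\|_2^2=o(1)$ uniformly over the ball $\{\|\alpha\|_2\le\poly(n)\}$ once $R=\poly(n)$; hence the noisy and noiseless dual objectives agree up to $o(1)$ on that ball, $\widetilde\alpha$ is $o(1)$-suboptimal for the \emph{true} program, and the induced hyperplane $\widetilde w=\sum_i\widetilde\alpha_iy_i\rho(x_i)$ still obeys $\|\widetilde w\|_2^2,\ \lambda\sum_i\widetilde\xi_i^2=O(\mathrm{OPT})$ for the true feature map. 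The same noise corrupts every test-time evaluation, so the predicted label is $\sign\!\big(\langle\rho(x),\widetilde w\rangle+\zeta\big)$ with $\zeta$ mean-zero of variance $\le\|\widetilde\alpha\|_2^2/R$; since the norm and slack bounds on $\widetilde w$ force it (by a margin generalization bound) to classify all but an $o(1)$ fraction of $\calX$ with functional margin $\Omega(1)$, a large enough polynomial $R$ makes $|\zeta|$ exceed that margin on only an $o(1)$ further fraction.

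\textbf{Generalization and wrap-up.} Now invoke a standard $\ell_2$-soft-margin generalization bound---equivalently, hard-margin $\SVM$ in the augmented feature space $\rho(x)\mapsto(\rho(x),\lambda^{-1/2}e_i)$, which has radius $O(1)$ and margin $1/\sqrt{2\,\mathrm{OPT}}$, as in \cite{anthony_bartlett_2000,BST1999,Shawe-Taylor2002generalization}---to conclude that the noiseless classifier $x\mapsto\sign(\langle\rho(x),\widetilde w\rangle)$ has test error $\tilde O(\mathrm{OPT}/m)=\tilde O\big(\tfrac1{m\gamma^2}+\lambda\gamma\big)$; since $\lambda\gamma=O(\lambda n^{-t})\to0$ and $m$ is a sufficiently large polynomial, this is below $0.005$, and adding the $o(1)$ error injected by the test-time $\QKE$ noise keeps the total error below $0.01$. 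As each of the constantly many failure events was charged probability $\le1/6$, the guarantee holds with probability $\ge2/3$. Efficiency is routine: $U(x)$ uses $\poly(n)$ gates since it runs Shor's algorithm as a subroutine, $\QKE$ evaluates $m^2=\poly(n)$ circuits with $R=\poly(n)$ shots each, and Eq.~\eqref{eq:maindual} is a $\poly(m)$-size convex program solved classically. I expect the robustness step to be the main obstacle: one must reconcile a possibly non-PSD, noisily perturbed kernel---used both in training and in every test evaluation---with a margin-based generalization bound, for which the $\ell_2$ (rather than $\ell_1$) slack penalty is the indispensable structural ingredient that lets the perturbation bounds survive with only polynomially many measurement shots.
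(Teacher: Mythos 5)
Your proposal follows the same skeleton as the paper's proof: (i) a large-margin lemma from the quantum feature map, (ii) a norm/slack bound tied to the ground-truth hyperplane, (iii) robustness of the $\ell_2$-soft-margin dual under additive kernel noise through strong convexity, and (iv) a margin-buffer generalization bound. The choice of separating hyperplane ($\ketbra{\phi_s}-\ketbra{\phi_{s'}}$ rather than the paper's $\ketbra{\phi_s}$ with a bias absorbed into the kernel) is cosmetic and gives the same $\Theta(n^{-t})$ margin.

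Two places in the robustness step are, however, imprecise in a way the paper is careful about. First, the claim that strong concavity alone bounds $\|\tilde\alpha\|_2$ ``whatever the spectrum of $\tilde K$'' is wrong as stated: if the additive noise makes $\tilde Q + \tfrac{1}{\lambda}\mathbb{I}$ lose positive semidefiniteness, the noisy dual can be unbounded above (a nonnegative direction of negative curvature ruins the program), so you must invoke the $\|E\|_{\mathrm{op}}$ tail bound \emph{before}, not after, concluding boundedness. The paper instead cites Daniel's perturbation theorem (Lemma~\ref{lemma:quadraticprogramrubostness}), which explicitly requires $\|Q'-Q\|_F<1/\lambda$, and bounds $\|\alpha\|_2$ via the KKT identity $\alpha_i=\lambda\xi_i$ together with Lemma~\ref{lemma:boundxi}. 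Second, your decomposition applies the generalization bound directly to the noisy hyperplane $\tilde w$ and then treats the test-time noise $\zeta$ separately; this requires $\|\tilde w\|_2^2+\lambda\|\tilde\xi\|_2^2=O(\mathrm{OPT})$, which does \emph{not} follow from $o(1)$-suboptimality alone: a shift $\|\tilde w-w_0\|_2=\delta$ can add $\Theta(m\delta^2)$ to $\|\tilde\xi\|_2^2$, so you actually need $\delta=\tilde O(\sqrt{\mathrm{OPT}/m})$ and hence a quantitative, not merely $o(1)$, perturbation bound with $R$ set to a specific polynomial. The paper sidesteps this entirely by applying the generalization bound to the \emph{noiseless} primal optimizer $w_0$ (whose loss is automatically $\leq\mathrm{Loss}(w^*)$) and transferring to the noisy classifier via a uniform bound $|h(x)-h'(x)|\leq 0.01$ over all $x\in\calX$ (Lemma~\ref{lemma:noiserobustness}, with $R=O(m^4)$). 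Finally, the margin-buffer bound you invoke---controlling $\Pr[y\langle w,x\rangle<0.1]$ rather than $\Pr[y\langle w,x\rangle<0]$---is exactly the part the paper flags as not directly citable and proves as Lemma~\ref{lemma:margingeneralization} via the Shawe-Taylor--Cristianini augmented-space reduction and a fat-shattering argument. None of these is a fatal gap, but each is a place where your sketch presumes a step the paper found it necessary to establish.
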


The main idea in our proof, c.f.\ Appendix~\ref{app:qkelearnability} and~\ref{section:generalizationbound}, is to connect the large margin property to existing results on the generalization of soft margin classifiers~\cite{anthony_bartlett_2000,bartlett1998prediction,Shawe-Taylor1998structural,BST1999,Shawe-Taylor2002generalization}. There, it is shown that if the learning algorithm finds a hyperplane $w$ that has a large margin on the training set, then the linear classifier $y=\sign\left(\langle \phi(x),w\rangle\right)$ has high accuracy with high probability. To see how we can apply these results, recall that in the large margin property, we have established that there exists a hyperplane $w^*$ with a large margin on the training set. Therefore, as the $\SVM$ program optimizes for the hyperplane with largest margin, it is guaranteed to find a good hyperplane $w$, although not necessarily the same as $w^*$. Applying the generalization bounds to this $w$ gives us the desired result.

As discussed above, the missing piece in the proof sketch is to show that the performance of $\SVM$-$\QKE$ remains robust with additive noise in the kernel. In the following we prove noise robustness by introducing two additional results. First, we show that the dual $\SVM$ program (Eq.~\eqref{eq:maindual}) is robust, i.e., when the kernel used in~\eqref{eq:maindual} has a small additive perturbation, then the solution returned by the program also has a small perturbation. This follows from strong convexity of \eqref{eq:maindual} and standard perturbation analysis of positive definite quadratic programs~\cite{Daniel1973}. This result implies that the hyperplane $w'$ obtained by the noisy kernel is close to the noiseless solution $w$ with high probability. Second, we show that when $w'$ is close to $w$, the linear classifier obtained by $w'$ has high accuracy. This seemingly simple statement is not trivial, as the sign function is sensitive to noise. That is, if $\langle \phi(x),w\rangle$ is very close to 0, then a small perturbation in $w$ could change its sign. We provide a solution to this problem by proving a stronger generalization bound. We show that if a hyperplane $w$ has a large margin on the training set, then not only does $\langle \phi(x),w\rangle$ have the correct sign, it is also bounded away from 0 with high probability. Therefore, when the noisy solution $w'$ is close to $w$, $\langle \phi(x),w'\rangle$ also has the correct sign with high probability. Combining these two results with the proof sketch, we have the full proof of Theorem~\ref{thm:mainthm2}.

\bigskip
{\noindent\bf Conclusions and outlook}\\[1mm]
We show that learning with quantum feature maps provides a way to harness the computational power of quantum mechanics in machine learning problems. This idea leads to a simple quantum machine learning algorithm that makes no additional assumptions on data access and has rigorous and robust performance guarantees. While the learning problem we have presented here that demonstrates an exponential quantum speed-up is not practically motivated, our result sets a positive theoretical foundation for the search of practical quantum advantage in machine learning. An important future direction is to construct quantum feature maps that can be applied to practical machine learning problems that are classically challenging. The results we have established here can be useful for the theoretical analysis of such proposals.

An important advantage of the $\SVM$-$\QKE$ algorithm, which only uses quantum computers to estimate kernel entries, is that error-mitigation techniques can be applied~\cite{temme2017error,li2017efficient,kandala2019error} when the feature map circuit is sufficiently shallow. Our robustness analysis gives hope that an error-mitigated quantum feature map can still maintain its computational power. Finding quantum feature maps that are sufficiently powerful and shallow is therefore the stepping stone towards obtaining a quantum advantage in machine learning on near-term devices.

\begin{acknowledgments}
We thank Sergey Bravyi and Robin Kothari for helpful comments and discussions. Y.L. was supported by Vannevar Bush faculty fellowship N00014-17-1-3025 and DOE QSA grant \#FP00010905. Part of this work was done when Y.L.~was a research intern at IBM. S.A.~and K.T.~acknowledge support from the MIT-IBM  Watson  AI  Lab  under  the  project {\it Machine  Learning  in  Hilbert  Space}, the  IBM  Research  Frontiers  Institute and the ARO Grant W911NF-20-1-0014.
\end{acknowledgments}

\bibliographystyle{apsrev4-2}
\bibliography{ref}

\begin{thebibliography}{55}%
\makeatletter
\providecommand \@ifxundefined [1]{%
 \@ifx{#1\undefined}
}%
\providecommand \@ifnum [1]{%
 \ifnum #1\expandafter \@firstoftwo
 \else \expandafter \@secondoftwo
 \fi
}%
\providecommand \@ifx [1]{%
 \ifx #1\expandafter \@firstoftwo
 \else \expandafter \@secondoftwo
 \fi
}%
\providecommand \natexlab [1]{#1}%
\providecommand \enquote  [1]{``#1''}%
\providecommand \bibnamefont  [1]{#1}%
\providecommand \bibfnamefont [1]{#1}%
\providecommand \citenamefont [1]{#1}%
\providecommand \href@noop [0]{\@secondoftwo}%
\providecommand \href [0]{\begingroup \@sanitize@url \@href}%
\providecommand \@href[1]{\@@startlink{#1}\@@href}%
\providecommand \@@href[1]{\endgroup#1\@@endlink}%
\providecommand \@sanitize@url [0]{\catcode `\\12\catcode `\$12\catcode
  `\&12\catcode `\#12\catcode `\^12\catcode `\_12\catcode `\%12\relax}%
\providecommand \@@startlink[1]{}%
\providecommand \@@endlink[0]{}%
\providecommand \url  [0]{\begingroup\@sanitize@url \@url }%
\providecommand \@url [1]{\endgroup\@href {#1}{\urlprefix }}%
\providecommand \urlprefix  [0]{URL }%
\providecommand \Eprint [0]{\href }%
\providecommand \doibase [0]{https://doi.org/}%
\providecommand \selectlanguage [0]{\@gobble}%
\providecommand \bibinfo  [0]{\@secondoftwo}%
\providecommand \bibfield  [0]{\@secondoftwo}%
\providecommand \translation [1]{[#1]}%
\providecommand \BibitemOpen [0]{}%
\providecommand \bibitemStop [0]{}%
\providecommand \bibitemNoStop [0]{.\EOS\space}%
\providecommand \EOS [0]{\spacefactor3000\relax}%
\providecommand \BibitemShut  [1]{\csname bibitem#1\endcsname}%
\let\auto@bib@innerbib\@empty
\bibitem [{\citenamefont {Biamonte}\ \emph {et~al.}(2017)\citenamefont
  {Biamonte}, \citenamefont {Wittek}, \citenamefont {Pancotti}, \citenamefont
  {Rebentrost}, \citenamefont {Wiebe},\ and\ \citenamefont
  {Lloyd}}]{Biamonte2017}%
  \BibitemOpen
  \bibfield  {author} {\bibinfo {author} {\bibfnamefont {J.}~\bibnamefont
  {Biamonte}}, \bibinfo {author} {\bibfnamefont {P.}~\bibnamefont {Wittek}},
  \bibinfo {author} {\bibfnamefont {N.}~\bibnamefont {Pancotti}}, \bibinfo
  {author} {\bibfnamefont {P.}~\bibnamefont {Rebentrost}}, \bibinfo {author}
  {\bibfnamefont {N.}~\bibnamefont {Wiebe}},\ and\ \bibinfo {author}
  {\bibfnamefont {S.}~\bibnamefont {Lloyd}},\ }\href
  {https://doi.org/10.1038/nature23474} {\bibfield  {journal} {\bibinfo
  {journal} {Nature}\ }\textbf {\bibinfo {volume} {549}},\ \bibinfo {pages}
  {195} (\bibinfo {year} {2017})}\BibitemShut {NoStop}%
\bibitem [{\citenamefont {Arunachalam}\ and\ \citenamefont
  {de~Wolf}(2017)}]{Arunachalam2017survey}%
  \BibitemOpen
  \bibfield  {author} {\bibinfo {author} {\bibfnamefont {S.}~\bibnamefont
  {Arunachalam}}\ and\ \bibinfo {author} {\bibfnamefont {R.}~\bibnamefont
  {de~Wolf}},\ }\href {https://doi.org/10.1145/3106700.3106710} {\bibfield
  {journal} {\bibinfo  {journal} {SIGACT News}\ }\textbf {\bibinfo {volume}
  {48}},\ \bibinfo {pages} {41–67} (\bibinfo {year} {2017})}\BibitemShut
  {NoStop}%
\bibitem [{\citenamefont {Dunjko}\ and\ \citenamefont
  {Briegel}(2018)}]{dunjko2018machine}%
  \BibitemOpen
  \bibfield  {author} {\bibinfo {author} {\bibfnamefont {V.}~\bibnamefont
  {Dunjko}}\ and\ \bibinfo {author} {\bibfnamefont {H.~J.}\ \bibnamefont
  {Briegel}},\ }\href {https://doi.org/10.1088/1361-6633/aab406} {\bibfield
  {journal} {\bibinfo  {journal} {Reports on Progress in Physics}\ }\textbf
  {\bibinfo {volume} {81}},\ \bibinfo {pages} {074001} (\bibinfo {year}
  {2018})}\BibitemShut {NoStop}%
\bibitem [{\citenamefont {Ciliberto}\ \emph {et~al.}(2018)\citenamefont
  {Ciliberto}, \citenamefont {Herbster}, \citenamefont {Ialongo}, \citenamefont
  {Pontil}, \citenamefont {Rocchetto}, \citenamefont {Severini},\ and\
  \citenamefont {Wossnig}}]{Ciliberto2018quantum}%
  \BibitemOpen
  \bibfield  {author} {\bibinfo {author} {\bibfnamefont {C.}~\bibnamefont
  {Ciliberto}}, \bibinfo {author} {\bibfnamefont {M.}~\bibnamefont {Herbster}},
  \bibinfo {author} {\bibfnamefont {A.~D.}\ \bibnamefont {Ialongo}}, \bibinfo
  {author} {\bibfnamefont {M.}~\bibnamefont {Pontil}}, \bibinfo {author}
  {\bibfnamefont {A.}~\bibnamefont {Rocchetto}}, \bibinfo {author}
  {\bibfnamefont {S.}~\bibnamefont {Severini}},\ and\ \bibinfo {author}
  {\bibfnamefont {L.}~\bibnamefont {Wossnig}},\ }\href
  {https://doi.org/10.1098/rspa.2017.0551} {\bibfield  {journal} {\bibinfo
  {journal} {Proceedings of the Royal Society A: Mathematical, Physical and
  Engineering Sciences}\ }\textbf {\bibinfo {volume} {474}},\ \bibinfo {pages}
  {20170551} (\bibinfo {year} {2018})}\BibitemShut {NoStop}%
\bibitem [{\citenamefont {Carleo}\ \emph {et~al.}(2019)\citenamefont {Carleo},
  \citenamefont {Cirac}, \citenamefont {Cranmer}, \citenamefont {Daudet},
  \citenamefont {Schuld}, \citenamefont {Tishby}, \citenamefont
  {Vogt-Maranto},\ and\ \citenamefont {Zdeborov\'a}}]{RevModPhys.91.045002}%
  \BibitemOpen
  \bibfield  {author} {\bibinfo {author} {\bibfnamefont {G.}~\bibnamefont
  {Carleo}}, \bibinfo {author} {\bibfnamefont {I.}~\bibnamefont {Cirac}},
  \bibinfo {author} {\bibfnamefont {K.}~\bibnamefont {Cranmer}}, \bibinfo
  {author} {\bibfnamefont {L.}~\bibnamefont {Daudet}}, \bibinfo {author}
  {\bibfnamefont {M.}~\bibnamefont {Schuld}}, \bibinfo {author} {\bibfnamefont
  {N.}~\bibnamefont {Tishby}}, \bibinfo {author} {\bibfnamefont
  {L.}~\bibnamefont {Vogt-Maranto}},\ and\ \bibinfo {author} {\bibfnamefont
  {L.}~\bibnamefont {Zdeborov\'a}},\ }\href
  {https://doi.org/10.1103/RevModPhys.91.045002} {\bibfield  {journal}
  {\bibinfo  {journal} {Rev. Mod. Phys.}\ }\textbf {\bibinfo {volume} {91}},\
  \bibinfo {pages} {045002} (\bibinfo {year} {2019})}\BibitemShut {NoStop}%
\bibitem [{\citenamefont {Harrow}\ \emph {et~al.}(2009)\citenamefont {Harrow},
  \citenamefont {Hassidim},\ and\ \citenamefont {Lloyd}}]{Harrow2009quantum}%
  \BibitemOpen
  \bibfield  {author} {\bibinfo {author} {\bibfnamefont {A.~W.}\ \bibnamefont
  {Harrow}}, \bibinfo {author} {\bibfnamefont {A.}~\bibnamefont {Hassidim}},\
  and\ \bibinfo {author} {\bibfnamefont {S.}~\bibnamefont {Lloyd}},\ }\href
  {https://doi.org/10.1103/PhysRevLett.103.150502} {\bibfield  {journal}
  {\bibinfo  {journal} {Phys. Rev. Lett.}\ }\textbf {\bibinfo {volume} {103}},\
  \bibinfo {pages} {150502} (\bibinfo {year} {2009})}\BibitemShut {NoStop}%
\bibitem [{\citenamefont {Wiebe}\ \emph {et~al.}(2012)\citenamefont {Wiebe},
  \citenamefont {Braun},\ and\ \citenamefont {Lloyd}}]{wiebe2012quantum}%
  \BibitemOpen
  \bibfield  {author} {\bibinfo {author} {\bibfnamefont {N.}~\bibnamefont
  {Wiebe}}, \bibinfo {author} {\bibfnamefont {D.}~\bibnamefont {Braun}},\ and\
  \bibinfo {author} {\bibfnamefont {S.}~\bibnamefont {Lloyd}},\ }\href
  {https://doi.org/10.1103/PhysRevLett.109.050505} {\bibfield  {journal}
  {\bibinfo  {journal} {Phys. Rev. Lett.}\ }\textbf {\bibinfo {volume} {109}},\
  \bibinfo {pages} {050505} (\bibinfo {year} {2012})}\BibitemShut {NoStop}%
\bibitem [{\citenamefont {Lloyd}\ \emph {et~al.}(2013)\citenamefont {Lloyd},
  \citenamefont {Mohseni},\ and\ \citenamefont
  {Rebentrost}}]{lloyd2013quantum}%
  \BibitemOpen
  \bibfield  {author} {\bibinfo {author} {\bibfnamefont {S.}~\bibnamefont
  {Lloyd}}, \bibinfo {author} {\bibfnamefont {M.}~\bibnamefont {Mohseni}},\
  and\ \bibinfo {author} {\bibfnamefont {P.}~\bibnamefont {Rebentrost}},\
  }\href@noop {} {\bibinfo {title} {Quantum algorithms for supervised and
  unsupervised machine learning}} (\bibinfo {year} {2013}),\ \Eprint
  {https://arxiv.org/abs/1307.0411} {arXiv:1307.0411 [quant-ph]} \BibitemShut
  {NoStop}%
\bibitem [{\citenamefont {Lloyd}\ \emph
  {et~al.}(2014{\natexlab{a}})\citenamefont {Lloyd}, \citenamefont {Mohseni},\
  and\ \citenamefont {Rebentrost}}]{Lloyd2014}%
  \BibitemOpen
  \bibfield  {author} {\bibinfo {author} {\bibfnamefont {S.}~\bibnamefont
  {Lloyd}}, \bibinfo {author} {\bibfnamefont {M.}~\bibnamefont {Mohseni}},\
  and\ \bibinfo {author} {\bibfnamefont {P.}~\bibnamefont {Rebentrost}},\
  }\href {https://doi.org/10.1038/nphys3029} {\bibfield  {journal} {\bibinfo
  {journal} {Nature Physics}\ }\textbf {\bibinfo {volume} {10}},\ \bibinfo
  {pages} {631} (\bibinfo {year} {2014}{\natexlab{a}})}\BibitemShut {NoStop}%
\bibitem [{\citenamefont {Rebentrost}\ \emph {et~al.}(2014)\citenamefont
  {Rebentrost}, \citenamefont {Mohseni},\ and\ \citenamefont
  {Lloyd}}]{Rebentrost2014quantum}%
  \BibitemOpen
  \bibfield  {author} {\bibinfo {author} {\bibfnamefont {P.}~\bibnamefont
  {Rebentrost}}, \bibinfo {author} {\bibfnamefont {M.}~\bibnamefont
  {Mohseni}},\ and\ \bibinfo {author} {\bibfnamefont {S.}~\bibnamefont
  {Lloyd}},\ }\href {https://doi.org/10.1103/PhysRevLett.113.130503} {\bibfield
   {journal} {\bibinfo  {journal} {Phys. Rev. Lett.}\ }\textbf {\bibinfo
  {volume} {113}},\ \bibinfo {pages} {130503} (\bibinfo {year}
  {2014})}\BibitemShut {NoStop}%
\bibitem [{\citenamefont {Lloyd}\ \emph
  {et~al.}(2014{\natexlab{b}})\citenamefont {Lloyd}, \citenamefont
  {Garnerone},\ and\ \citenamefont {Zanardi}}]{lloyd2014quantumtda}%
  \BibitemOpen
  \bibfield  {author} {\bibinfo {author} {\bibfnamefont {S.}~\bibnamefont
  {Lloyd}}, \bibinfo {author} {\bibfnamefont {S.}~\bibnamefont {Garnerone}},\
  and\ \bibinfo {author} {\bibfnamefont {P.}~\bibnamefont {Zanardi}},\
  }\href@noop {} {\bibinfo {title} {Quantum algorithms for topological and
  geometric analysis of big data}} (\bibinfo {year} {2014}{\natexlab{b}}),\
  \Eprint {https://arxiv.org/abs/1408.3106} {arXiv:1408.3106 [quant-ph]}
  \BibitemShut {NoStop}%
\bibitem [{\citenamefont {Cong}\ and\ \citenamefont {Duan}(2016)}]{Cong_2016}%
  \BibitemOpen
  \bibfield  {author} {\bibinfo {author} {\bibfnamefont {I.}~\bibnamefont
  {Cong}}\ and\ \bibinfo {author} {\bibfnamefont {L.}~\bibnamefont {Duan}},\
  }\href {https://doi.org/10.1088/1367-2630/18/7/073011} {\bibfield  {journal}
  {\bibinfo  {journal} {New Journal of Physics}\ }\textbf {\bibinfo {volume}
  {18}},\ \bibinfo {pages} {073011} (\bibinfo {year} {2016})}\BibitemShut
  {NoStop}%
\bibitem [{\citenamefont {Kerenidis}\ and\ \citenamefont
  {Prakash}(2016)}]{kerenidis2016quantum}%
  \BibitemOpen
  \bibfield  {author} {\bibinfo {author} {\bibfnamefont {I.}~\bibnamefont
  {Kerenidis}}\ and\ \bibinfo {author} {\bibfnamefont {A.}~\bibnamefont
  {Prakash}},\ }\href@noop {} {\bibinfo {title} {Quantum recommendation
  systems}} (\bibinfo {year} {2016}),\ \Eprint
  {https://arxiv.org/abs/1603.08675} {arXiv:1603.08675 [quant-ph]} \BibitemShut
  {NoStop}%
\bibitem [{\citenamefont {Brand{\~a}o}\ \emph {et~al.}(2019)\citenamefont
  {Brand{\~a}o}, \citenamefont {Kalev}, \citenamefont {Li}, \citenamefont
  {Lin}, \citenamefont {Svore},\ and\ \citenamefont {Wu}}]{Brandao2019sdp}%
  \BibitemOpen
  \bibfield  {author} {\bibinfo {author} {\bibfnamefont {F.~G. S.~L.}\
  \bibnamefont {Brand{\~a}o}}, \bibinfo {author} {\bibfnamefont
  {A.}~\bibnamefont {Kalev}}, \bibinfo {author} {\bibfnamefont
  {T.}~\bibnamefont {Li}}, \bibinfo {author} {\bibfnamefont {C.~Y.-Y.}\
  \bibnamefont {Lin}}, \bibinfo {author} {\bibfnamefont {K.~M.}\ \bibnamefont
  {Svore}},\ and\ \bibinfo {author} {\bibfnamefont {X.}~\bibnamefont {Wu}},\
  }in\ \href {https://doi.org/10.4230/LIPIcs.ICALP.2019.27} {\emph {\bibinfo
  {booktitle} {46th International Colloquium on Automata, Languages, and
  Programming (ICALP 2019)}}},\ \bibinfo {series} {Leibniz International
  Proceedings in Informatics (LIPIcs)}, Vol.\ \bibinfo {volume} {132}\
  (\bibinfo {year} {2019})\ pp.\ \bibinfo {pages} {27:1--27:14}\BibitemShut
  {NoStop}%
\bibitem [{\citenamefont {Rebentrost}\ \emph {et~al.}(2018)\citenamefont
  {Rebentrost}, \citenamefont {Steffens}, \citenamefont {Marvian},\ and\
  \citenamefont {Lloyd}}]{Rebentrost2018svd}%
  \BibitemOpen
  \bibfield  {author} {\bibinfo {author} {\bibfnamefont {P.}~\bibnamefont
  {Rebentrost}}, \bibinfo {author} {\bibfnamefont {A.}~\bibnamefont
  {Steffens}}, \bibinfo {author} {\bibfnamefont {I.}~\bibnamefont {Marvian}},\
  and\ \bibinfo {author} {\bibfnamefont {S.}~\bibnamefont {Lloyd}},\ }\href
  {https://doi.org/10.1103/PhysRevA.97.012327} {\bibfield  {journal} {\bibinfo
  {journal} {Phys. Rev. A}\ }\textbf {\bibinfo {volume} {97}},\ \bibinfo
  {pages} {012327} (\bibinfo {year} {2018})}\BibitemShut {NoStop}%
\bibitem [{\citenamefont {Zhao}\ \emph {et~al.}(2019)\citenamefont {Zhao},
  \citenamefont {Fitzsimons},\ and\ \citenamefont
  {Fitzsimons}}]{Zhao2019quantum}%
  \BibitemOpen
  \bibfield  {author} {\bibinfo {author} {\bibfnamefont {Z.}~\bibnamefont
  {Zhao}}, \bibinfo {author} {\bibfnamefont {J.~K.}\ \bibnamefont
  {Fitzsimons}},\ and\ \bibinfo {author} {\bibfnamefont {J.~F.}\ \bibnamefont
  {Fitzsimons}},\ }\href {https://doi.org/10.1103/PhysRevA.99.052331}
  {\bibfield  {journal} {\bibinfo  {journal} {Phys. Rev. A}\ }\textbf {\bibinfo
  {volume} {99}},\ \bibinfo {pages} {052331} (\bibinfo {year}
  {2019})}\BibitemShut {NoStop}%
\bibitem [{\citenamefont {Aaronson}(2015)}]{aaronson2015read}%
  \BibitemOpen
  \bibfield  {author} {\bibinfo {author} {\bibfnamefont {S.}~\bibnamefont
  {Aaronson}},\ }\href {https://doi.org/10.1038/nphys3272} {\bibfield
  {journal} {\bibinfo  {journal} {Nature Physics}\ }\textbf {\bibinfo {volume}
  {11}},\ \bibinfo {pages} {291} (\bibinfo {year} {2015})}\BibitemShut
  {NoStop}%
\bibitem [{\citenamefont {Tang}(2019)}]{Tang2019quantuminspired}%
  \BibitemOpen
  \bibfield  {author} {\bibinfo {author} {\bibfnamefont {E.}~\bibnamefont
  {Tang}},\ }in\ \href {https://doi.org/10.1145/3313276.3316310} {\emph
  {\bibinfo {booktitle} {Proceedings of the 51st Annual ACM SIGACT Symposium on
  Theory of Computing}}},\ \bibinfo {series and number} {STOC}\ (\bibinfo
  {year} {2019})\ p.\ \bibinfo {pages} {217–228}\BibitemShut {NoStop}%
\bibitem [{\citenamefont {Tang}(2018)}]{tang2018quantuminspired}%
  \BibitemOpen
  \bibfield  {author} {\bibinfo {author} {\bibfnamefont {E.}~\bibnamefont
  {Tang}},\ }\href@noop {} {\bibinfo {title} {Quantum-inspired classical
  algorithms for principal component analysis and supervised clustering}}
  (\bibinfo {year} {2018}),\ \Eprint {https://arxiv.org/abs/1811.00414}
  {arXiv:1811.00414 [cs.DS]} \BibitemShut {NoStop}%
\bibitem [{\citenamefont {Gilyén}\ \emph {et~al.}(2018)\citenamefont
  {Gilyén}, \citenamefont {Lloyd},\ and\ \citenamefont
  {Tang}}]{gilyn2018quantuminspired}%
  \BibitemOpen
  \bibfield  {author} {\bibinfo {author} {\bibfnamefont {A.}~\bibnamefont
  {Gilyén}}, \bibinfo {author} {\bibfnamefont {S.}~\bibnamefont {Lloyd}},\
  and\ \bibinfo {author} {\bibfnamefont {E.}~\bibnamefont {Tang}},\ }\href@noop
  {} {\bibinfo {title} {Quantum-inspired low-rank stochastic regression with
  logarithmic dependence on the dimension}} (\bibinfo {year} {2018}),\ \Eprint
  {https://arxiv.org/abs/1811.04909} {arXiv:1811.04909 [cs.DS]} \BibitemShut
  {NoStop}%
\bibitem [{\citenamefont {Chia}\ \emph {et~al.}(2018)\citenamefont {Chia},
  \citenamefont {Lin},\ and\ \citenamefont {Wang}}]{chia2018quantuminspired}%
  \BibitemOpen
  \bibfield  {author} {\bibinfo {author} {\bibfnamefont {N.-H.}\ \bibnamefont
  {Chia}}, \bibinfo {author} {\bibfnamefont {H.-H.}\ \bibnamefont {Lin}},\ and\
  \bibinfo {author} {\bibfnamefont {C.}~\bibnamefont {Wang}},\ }\href@noop {}
  {\bibinfo {title} {Quantum-inspired sublinear classical algorithms for
  solving low-rank linear systems}} (\bibinfo {year} {2018}),\ \Eprint
  {https://arxiv.org/abs/1811.04852} {arXiv:1811.04852 [cs.DS]} \BibitemShut
  {NoStop}%
\bibitem [{\citenamefont {Ding}\ \emph {et~al.}(2019)\citenamefont {Ding},
  \citenamefont {Bao},\ and\ \citenamefont {Huang}}]{ding2019quantuminspired}%
  \BibitemOpen
  \bibfield  {author} {\bibinfo {author} {\bibfnamefont {C.}~\bibnamefont
  {Ding}}, \bibinfo {author} {\bibfnamefont {T.-Y.}\ \bibnamefont {Bao}},\ and\
  \bibinfo {author} {\bibfnamefont {H.-L.}\ \bibnamefont {Huang}},\ }\href@noop
  {} {\bibinfo {title} {Quantum-inspired support vector machine}} (\bibinfo
  {year} {2019}),\ \Eprint {https://arxiv.org/abs/1906.08902} {arXiv:1906.08902
  [cs.LG]} \BibitemShut {NoStop}%
\bibitem [{\citenamefont {Chia}\ \emph {et~al.}(2020)\citenamefont {Chia},
  \citenamefont {Gily\'{e}n}, \citenamefont {Li}, \citenamefont {Lin},
  \citenamefont {Tang},\ and\ \citenamefont {Wang}}]{Chia2020samplingbased}%
  \BibitemOpen
  \bibfield  {author} {\bibinfo {author} {\bibfnamefont {N.-H.}\ \bibnamefont
  {Chia}}, \bibinfo {author} {\bibfnamefont {A.}~\bibnamefont {Gily\'{e}n}},
  \bibinfo {author} {\bibfnamefont {T.}~\bibnamefont {Li}}, \bibinfo {author}
  {\bibfnamefont {H.-H.}\ \bibnamefont {Lin}}, \bibinfo {author} {\bibfnamefont
  {E.}~\bibnamefont {Tang}},\ and\ \bibinfo {author} {\bibfnamefont
  {C.}~\bibnamefont {Wang}},\ }in\ \href
  {https://doi.org/10.1145/3357713.3384314} {\emph {\bibinfo {booktitle}
  {Proceedings of the 52nd Annual ACM SIGACT Symposium on Theory of
  Computing}}},\ \bibinfo {series and number} {STOC}\ (\bibinfo {year} {2020})\
  p.\ \bibinfo {pages} {387–400}\BibitemShut {NoStop}%
\bibitem [{\citenamefont {Mitarai}\ \emph {et~al.}(2018)\citenamefont
  {Mitarai}, \citenamefont {Negoro}, \citenamefont {Kitagawa},\ and\
  \citenamefont {Fujii}}]{mitarai2018quantum}%
  \BibitemOpen
  \bibfield  {author} {\bibinfo {author} {\bibfnamefont {K.}~\bibnamefont
  {Mitarai}}, \bibinfo {author} {\bibfnamefont {M.}~\bibnamefont {Negoro}},
  \bibinfo {author} {\bibfnamefont {M.}~\bibnamefont {Kitagawa}},\ and\
  \bibinfo {author} {\bibfnamefont {K.}~\bibnamefont {Fujii}},\ }\href@noop {}
  {\bibfield  {journal} {\bibinfo  {journal} {Physical Review A}\ }\textbf
  {\bibinfo {volume} {98}},\ \bibinfo {pages} {032309} (\bibinfo {year}
  {2018})}\BibitemShut {NoStop}%
\bibitem [{\citenamefont {Farhi}\ and\ \citenamefont
  {Neven}(2018)}]{farhi2018classification}%
  \BibitemOpen
  \bibfield  {author} {\bibinfo {author} {\bibfnamefont {E.}~\bibnamefont
  {Farhi}}\ and\ \bibinfo {author} {\bibfnamefont {H.}~\bibnamefont {Neven}},\
  }\href@noop {} {\bibfield  {journal} {\bibinfo  {journal} {arXiv preprint
  arXiv:1802.06002}\ } (\bibinfo {year} {2018})}\BibitemShut {NoStop}%
\bibitem [{\citenamefont {Grant}\ \emph {et~al.}(2018)\citenamefont {Grant},
  \citenamefont {Benedetti}, \citenamefont {Cao}, \citenamefont {Hallam},
  \citenamefont {Lockhart}, \citenamefont {Stojevic}, \citenamefont {Green},\
  and\ \citenamefont {Severini}}]{grant2018hierarchical}%
  \BibitemOpen
  \bibfield  {author} {\bibinfo {author} {\bibfnamefont {E.}~\bibnamefont
  {Grant}}, \bibinfo {author} {\bibfnamefont {M.}~\bibnamefont {Benedetti}},
  \bibinfo {author} {\bibfnamefont {S.}~\bibnamefont {Cao}}, \bibinfo {author}
  {\bibfnamefont {A.}~\bibnamefont {Hallam}}, \bibinfo {author} {\bibfnamefont
  {J.}~\bibnamefont {Lockhart}}, \bibinfo {author} {\bibfnamefont
  {V.}~\bibnamefont {Stojevic}}, \bibinfo {author} {\bibfnamefont {A.~G.}\
  \bibnamefont {Green}},\ and\ \bibinfo {author} {\bibfnamefont
  {S.}~\bibnamefont {Severini}},\ }\href@noop {} {\bibfield  {journal}
  {\bibinfo  {journal} {npj Quantum Information}\ }\textbf {\bibinfo {volume}
  {4}},\ \bibinfo {pages} {1} (\bibinfo {year} {2018})}\BibitemShut {NoStop}%
\bibitem [{\citenamefont {Schuld}\ \emph {et~al.}(2020)\citenamefont {Schuld},
  \citenamefont {Bocharov}, \citenamefont {Svore},\ and\ \citenamefont
  {Wiebe}}]{schuld2020circuit}%
  \BibitemOpen
  \bibfield  {author} {\bibinfo {author} {\bibfnamefont {M.}~\bibnamefont
  {Schuld}}, \bibinfo {author} {\bibfnamefont {A.}~\bibnamefont {Bocharov}},
  \bibinfo {author} {\bibfnamefont {K.~M.}\ \bibnamefont {Svore}},\ and\
  \bibinfo {author} {\bibfnamefont {N.}~\bibnamefont {Wiebe}},\ }\href@noop {}
  {\bibfield  {journal} {\bibinfo  {journal} {Physical Review A}\ }\textbf
  {\bibinfo {volume} {101}},\ \bibinfo {pages} {032308} (\bibinfo {year}
  {2020})}\BibitemShut {NoStop}%
\bibitem [{\citenamefont {Benedetti}\ \emph {et~al.}(2019)\citenamefont
  {Benedetti}, \citenamefont {Lloyd}, \citenamefont {Sack},\ and\ \citenamefont
  {Fiorentini}}]{Benedetti_2019}%
  \BibitemOpen
  \bibfield  {author} {\bibinfo {author} {\bibfnamefont {M.}~\bibnamefont
  {Benedetti}}, \bibinfo {author} {\bibfnamefont {E.}~\bibnamefont {Lloyd}},
  \bibinfo {author} {\bibfnamefont {S.}~\bibnamefont {Sack}},\ and\ \bibinfo
  {author} {\bibfnamefont {M.}~\bibnamefont {Fiorentini}},\ }\href
  {https://doi.org/10.1088/2058-9565/ab4eb5} {\bibfield  {journal} {\bibinfo
  {journal} {Quantum Science and Technology}\ }\textbf {\bibinfo {volume}
  {4}},\ \bibinfo {pages} {043001} (\bibinfo {year} {2019})}\BibitemShut
  {NoStop}%
\bibitem [{\citenamefont {Havl{\'i}{\v{c}}ek}\ \emph
  {et~al.}(2019)\citenamefont {Havl{\'i}{\v{c}}ek}, \citenamefont
  {C{\'o}rcoles}, \citenamefont {Temme}, \citenamefont {Harrow}, \citenamefont
  {Kandala}, \citenamefont {Chow},\ and\ \citenamefont
  {Gambetta}}]{Havlicek2019}%
  \BibitemOpen
  \bibfield  {author} {\bibinfo {author} {\bibfnamefont {V.}~\bibnamefont
  {Havl{\'i}{\v{c}}ek}}, \bibinfo {author} {\bibfnamefont {A.~D.}\ \bibnamefont
  {C{\'o}rcoles}}, \bibinfo {author} {\bibfnamefont {K.}~\bibnamefont {Temme}},
  \bibinfo {author} {\bibfnamefont {A.~W.}\ \bibnamefont {Harrow}}, \bibinfo
  {author} {\bibfnamefont {A.}~\bibnamefont {Kandala}}, \bibinfo {author}
  {\bibfnamefont {J.~M.}\ \bibnamefont {Chow}},\ and\ \bibinfo {author}
  {\bibfnamefont {J.~M.}\ \bibnamefont {Gambetta}},\ }\href
  {https://doi.org/10.1038/s41586-019-0980-2} {\bibfield  {journal} {\bibinfo
  {journal} {Nature}\ }\textbf {\bibinfo {volume} {567}},\ \bibinfo {pages}
  {209} (\bibinfo {year} {2019})}\BibitemShut {NoStop}%
\bibitem [{\citenamefont {Schuld}\ and\ \citenamefont
  {Killoran}(2019)}]{Schuld2019quantum}%
  \BibitemOpen
  \bibfield  {author} {\bibinfo {author} {\bibfnamefont {M.}~\bibnamefont
  {Schuld}}\ and\ \bibinfo {author} {\bibfnamefont {N.}~\bibnamefont
  {Killoran}},\ }\href {https://doi.org/10.1103/PhysRevLett.122.040504}
  {\bibfield  {journal} {\bibinfo  {journal} {Phys. Rev. Lett.}\ }\textbf
  {\bibinfo {volume} {122}},\ \bibinfo {pages} {040504} (\bibinfo {year}
  {2019})}\BibitemShut {NoStop}%
\bibitem [{\citenamefont {Boser}\ \emph {et~al.}(1992)\citenamefont {Boser},
  \citenamefont {Guyon},\ and\ \citenamefont {Vapnik}}]{boser1992training}%
  \BibitemOpen
  \bibfield  {author} {\bibinfo {author} {\bibfnamefont {B.~E.}\ \bibnamefont
  {Boser}}, \bibinfo {author} {\bibfnamefont {I.~M.}\ \bibnamefont {Guyon}},\
  and\ \bibinfo {author} {\bibfnamefont {V.~N.}\ \bibnamefont {Vapnik}},\ }in\
  \href {https://doi.org/10.1145/130385.130401} {\emph {\bibinfo {booktitle}
  {Proceedings of the Fifth Annual Workshop on Computational Learning
  Theory}}},\ \bibinfo {series and number} {COLT}\ (\bibinfo {year} {1992})\
  p.\ \bibinfo {pages} {144–152}\BibitemShut {NoStop}%
\bibitem [{\citenamefont {Vapnik}(2013)}]{vapnik2013nature}%
  \BibitemOpen
  \bibfield  {author} {\bibinfo {author} {\bibfnamefont {V.}~\bibnamefont
  {Vapnik}},\ }\href@noop {} {\emph {\bibinfo {title} {The nature of
  statistical learning theory}}}\ (\bibinfo  {publisher} {Springer science \&
  business media},\ \bibinfo {year} {2013})\BibitemShut {NoStop}%
\bibitem [{\citenamefont {Anthony}\ and\ \citenamefont
  {Bartlett}(2000)}]{anthony_bartlett_2000}%
  \BibitemOpen
  \bibfield  {author} {\bibinfo {author} {\bibfnamefont {M.}~\bibnamefont
  {Anthony}}\ and\ \bibinfo {author} {\bibfnamefont {P.~L.}\ \bibnamefont
  {Bartlett}},\ }\href {https://doi.org/10.1017/S0963548300004247} {\bibfield
  {journal} {\bibinfo  {journal} {Combinatorics, Probability and Computing}\
  }\textbf {\bibinfo {volume} {9}},\ \bibinfo {pages} {213–225} (\bibinfo
  {year} {2000})}\BibitemShut {NoStop}%
\bibitem [{\citenamefont {{Shawe-Taylor}}\ \emph {et~al.}(1998)\citenamefont
  {{Shawe-Taylor}}, \citenamefont {{Bartlett}}, \citenamefont {{Williamson}},\
  and\ \citenamefont {{Anthony}}}]{Shawe-Taylor1998structural}%
  \BibitemOpen
  \bibfield  {author} {\bibinfo {author} {\bibfnamefont {J.}~\bibnamefont
  {{Shawe-Taylor}}}, \bibinfo {author} {\bibfnamefont {P.~L.}\ \bibnamefont
  {{Bartlett}}}, \bibinfo {author} {\bibfnamefont {R.~C.}\ \bibnamefont
  {{Williamson}}},\ and\ \bibinfo {author} {\bibfnamefont {M.}~\bibnamefont
  {{Anthony}}},\ }\href@noop {} {\bibfield  {journal} {\bibinfo  {journal}
  {IEEE Transactions on Information Theory}\ }\textbf {\bibinfo {volume}
  {44}},\ \bibinfo {pages} {1926} (\bibinfo {year} {1998})}\BibitemShut
  {NoStop}%
\bibitem [{\citenamefont {Bartlett}\ and\ \citenamefont
  {Shawe-Taylor}(1999)}]{BST1999}%
  \BibitemOpen
  \bibfield  {author} {\bibinfo {author} {\bibfnamefont {P.}~\bibnamefont
  {Bartlett}}\ and\ \bibinfo {author} {\bibfnamefont {J.}~\bibnamefont
  {Shawe-Taylor}},\ }\bibinfo {title} {Generalization performance of support
  vector machines and other pattern classifiers},\ in\ \href@noop {} {\emph
  {\bibinfo {booktitle} {Advances in Kernel Methods: Support Vector
  Learning}}}\ (\bibinfo  {publisher} {MIT Press},\ \bibinfo {address}
  {Cambridge, MA, USA},\ \bibinfo {year} {1999})\ p.\ \bibinfo {pages}
  {43–54}\BibitemShut {NoStop}%
\bibitem [{\citenamefont {{Shawe-Taylor}}\ and\ \citenamefont
  {{Cristianini}}(2002)}]{Shawe-Taylor2002generalization}%
  \BibitemOpen
  \bibfield  {author} {\bibinfo {author} {\bibfnamefont {J.}~\bibnamefont
  {{Shawe-Taylor}}}\ and\ \bibinfo {author} {\bibfnamefont {N.}~\bibnamefont
  {{Cristianini}}},\ }\href@noop {} {\bibfield  {journal} {\bibinfo  {journal}
  {IEEE Transactions on Information Theory}\ }\textbf {\bibinfo {volume}
  {48}},\ \bibinfo {pages} {2721} (\bibinfo {year} {2002})}\BibitemShut
  {NoStop}%
\bibitem [{\citenamefont {Kearns}(1990)}]{kearns1990computational}%
  \BibitemOpen
  \bibfield  {author} {\bibinfo {author} {\bibfnamefont {M.~J.}\ \bibnamefont
  {Kearns}},\ }\href@noop {} {\emph {\bibinfo {title} {The computational
  complexity of machine learning}}}\ (\bibinfo  {publisher} {MIT press},\
  \bibinfo {year} {1990})\BibitemShut {NoStop}%
\bibitem [{\citenamefont {Servedio}\ and\ \citenamefont
  {Gortler}(2004)}]{Servedio2004equivalence}%
  \BibitemOpen
  \bibfield  {author} {\bibinfo {author} {\bibfnamefont {R.~A.}\ \bibnamefont
  {Servedio}}\ and\ \bibinfo {author} {\bibfnamefont {S.~J.}\ \bibnamefont
  {Gortler}},\ }\href {https://doi.org/10.1137/S0097539704412910} {\bibfield
  {journal} {\bibinfo  {journal} {SIAM J. Comput.}\ }\textbf {\bibinfo {volume}
  {33}},\ \bibinfo {pages} {1067–1092} (\bibinfo {year} {2004})}\BibitemShut
  {NoStop}%
\bibitem [{\citenamefont {Sweke}\ \emph {et~al.}(2020)\citenamefont {Sweke},
  \citenamefont {Seifert}, \citenamefont {Hangleiter},\ and\ \citenamefont
  {Eisert}}]{sweke2020quantum}%
  \BibitemOpen
  \bibfield  {author} {\bibinfo {author} {\bibfnamefont {R.}~\bibnamefont
  {Sweke}}, \bibinfo {author} {\bibfnamefont {J.-P.}\ \bibnamefont {Seifert}},
  \bibinfo {author} {\bibfnamefont {D.}~\bibnamefont {Hangleiter}},\ and\
  \bibinfo {author} {\bibfnamefont {J.}~\bibnamefont {Eisert}},\ }\href@noop {}
  {\bibinfo {title} {On the quantum versus classical learnability of discrete
  distributions}} (\bibinfo {year} {2020}),\ \Eprint
  {https://arxiv.org/abs/2007.14451} {arXiv:2007.14451 [quant-ph]} \BibitemShut
  {NoStop}%
\bibitem [{\citenamefont {Gao}\ \emph {et~al.}(2018)\citenamefont {Gao},
  \citenamefont {Zhang},\ and\ \citenamefont {Duan}}]{Gao2018quantum}%
  \BibitemOpen
  \bibfield  {author} {\bibinfo {author} {\bibfnamefont {X.}~\bibnamefont
  {Gao}}, \bibinfo {author} {\bibfnamefont {Z.-Y.}\ \bibnamefont {Zhang}},\
  and\ \bibinfo {author} {\bibfnamefont {L.-M.}\ \bibnamefont {Duan}},\
  }\bibfield  {journal} {\bibinfo  {journal} {Science Advances}\ }\textbf
  {\bibinfo {volume} {4}},\ \href {https://doi.org/10.1126/sciadv.aat9004}
  {10.1126/sciadv.aat9004} (\bibinfo {year} {2018})\BibitemShut {NoStop}%
\bibitem [{\citenamefont {Kearns}\ and\ \citenamefont
  {Vazirani}(1994)}]{kearns1994introduction}%
  \BibitemOpen
  \bibfield  {author} {\bibinfo {author} {\bibfnamefont {M.~J.}\ \bibnamefont
  {Kearns}}\ and\ \bibinfo {author} {\bibfnamefont {U.~V.}\ \bibnamefont
  {Vazirani}},\ }\href@noop {} {\emph {\bibinfo {title} {An introduction to
  computational learning theory}}}\ (\bibinfo  {publisher} {MIT press},\
  \bibinfo {year} {1994})\BibitemShut {NoStop}%
\bibitem [{\citenamefont {Shor}(1997)}]{Shor1997polynomial}%
  \BibitemOpen
  \bibfield  {author} {\bibinfo {author} {\bibfnamefont {P.~W.}\ \bibnamefont
  {Shor}},\ }\href {https://doi.org/10.1137/S0097539795293172} {\bibfield
  {journal} {\bibinfo  {journal} {SIAM Journal on Computing}\ }\textbf
  {\bibinfo {volume} {26}},\ \bibinfo {pages} {1484} (\bibinfo {year}
  {1997})}\BibitemShut {NoStop}%
\bibitem [{\citenamefont {Blum}\ and\ \citenamefont
  {Micali}(1984)}]{blum1984how}%
  \BibitemOpen
  \bibfield  {author} {\bibinfo {author} {\bibfnamefont {M.}~\bibnamefont
  {Blum}}\ and\ \bibinfo {author} {\bibfnamefont {S.}~\bibnamefont {Micali}},\
  }\href {https://doi.org/10.1137/0213053} {\bibfield  {journal} {\bibinfo
  {journal} {SIAM J. Comput.}\ }\textbf {\bibinfo {volume} {13}},\ \bibinfo
  {pages} {850–864} (\bibinfo {year} {1984})}\BibitemShut {NoStop}%
\bibitem [{\citenamefont {Aharonov}\ and\ \citenamefont
  {Ta‐Shma}(2007)}]{Aharonov2007adiabatic}%
  \BibitemOpen
  \bibfield  {author} {\bibinfo {author} {\bibfnamefont {D.}~\bibnamefont
  {Aharonov}}\ and\ \bibinfo {author} {\bibfnamefont {A.}~\bibnamefont
  {Ta‐Shma}},\ }\href {https://doi.org/10.1137/060648829} {\bibfield
  {journal} {\bibinfo  {journal} {SIAM Journal on Computing}\ }\textbf
  {\bibinfo {volume} {37}},\ \bibinfo {pages} {47} (\bibinfo {year}
  {2007})}\BibitemShut {NoStop}%
\bibitem [{\citenamefont {Bartlett}\ and\ \citenamefont
  {Long}(1998)}]{bartlett1998prediction}%
  \BibitemOpen
  \bibfield  {author} {\bibinfo {author} {\bibfnamefont {P.~L.}\ \bibnamefont
  {Bartlett}}\ and\ \bibinfo {author} {\bibfnamefont {P.~M.}\ \bibnamefont
  {Long}},\ }\href {https://doi.org/https://doi.org/10.1006/jcss.1997.1557}
  {\bibfield  {journal} {\bibinfo  {journal} {Journal of Computer and System
  Sciences}\ }\textbf {\bibinfo {volume} {56}},\ \bibinfo {pages} {174 }
  (\bibinfo {year} {1998})}\BibitemShut {NoStop}%
\bibitem [{\citenamefont {Daniel}(1973)}]{Daniel1973}%
  \BibitemOpen
  \bibfield  {author} {\bibinfo {author} {\bibfnamefont {J.~W.}\ \bibnamefont
  {Daniel}},\ }\href {https://doi.org/10.1007/BF01580110} {\bibfield  {journal}
  {\bibinfo  {journal} {Mathematical Programming}\ }\textbf {\bibinfo {volume}
  {5}},\ \bibinfo {pages} {41} (\bibinfo {year} {1973})}\BibitemShut {NoStop}%
\bibitem [{\citenamefont {Temme}\ \emph {et~al.}(2017)\citenamefont {Temme},
  \citenamefont {Bravyi},\ and\ \citenamefont {Gambetta}}]{temme2017error}%
  \BibitemOpen
  \bibfield  {author} {\bibinfo {author} {\bibfnamefont {K.}~\bibnamefont
  {Temme}}, \bibinfo {author} {\bibfnamefont {S.}~\bibnamefont {Bravyi}},\ and\
  \bibinfo {author} {\bibfnamefont {J.~M.}\ \bibnamefont {Gambetta}},\ }\href
  {https://doi.org/10.1103/PhysRevLett.119.180509} {\bibfield  {journal}
  {\bibinfo  {journal} {Phys. Rev. Lett.}\ }\textbf {\bibinfo {volume} {119}},\
  \bibinfo {pages} {180509} (\bibinfo {year} {2017})}\BibitemShut {NoStop}%
\bibitem [{\citenamefont {Li}\ and\ \citenamefont
  {Benjamin}(2017)}]{li2017efficient}%
  \BibitemOpen
  \bibfield  {author} {\bibinfo {author} {\bibfnamefont {Y.}~\bibnamefont
  {Li}}\ and\ \bibinfo {author} {\bibfnamefont {S.~C.}\ \bibnamefont
  {Benjamin}},\ }\href {https://doi.org/10.1103/PhysRevX.7.021050} {\bibfield
  {journal} {\bibinfo  {journal} {Phys. Rev. X}\ }\textbf {\bibinfo {volume}
  {7}},\ \bibinfo {pages} {021050} (\bibinfo {year} {2017})}\BibitemShut
  {NoStop}%
\bibitem [{\citenamefont {Kandala}\ \emph {et~al.}(2019)\citenamefont
  {Kandala}, \citenamefont {Temme}, \citenamefont {C{\'o}rcoles}, \citenamefont
  {Mezzacapo}, \citenamefont {Chow},\ and\ \citenamefont
  {Gambetta}}]{kandala2019error}%
  \BibitemOpen
  \bibfield  {author} {\bibinfo {author} {\bibfnamefont {A.}~\bibnamefont
  {Kandala}}, \bibinfo {author} {\bibfnamefont {K.}~\bibnamefont {Temme}},
  \bibinfo {author} {\bibfnamefont {A.~D.}\ \bibnamefont {C{\'o}rcoles}},
  \bibinfo {author} {\bibfnamefont {A.}~\bibnamefont {Mezzacapo}}, \bibinfo
  {author} {\bibfnamefont {J.~M.}\ \bibnamefont {Chow}},\ and\ \bibinfo
  {author} {\bibfnamefont {J.~M.}\ \bibnamefont {Gambetta}},\ }\href
  {https://doi.org/10.1038/s41586-019-1040-7} {\bibfield  {journal} {\bibinfo
  {journal} {Nature}\ }\textbf {\bibinfo {volume} {567}},\ \bibinfo {pages}
  {491} (\bibinfo {year} {2019})}\BibitemShut {NoStop}%
\bibitem [{\citenamefont {Mosca}\ and\ \citenamefont
  {Zalka}(2004)}]{Mosca2004exact}%
  \BibitemOpen
  \bibfield  {author} {\bibinfo {author} {\bibfnamefont {M.}~\bibnamefont
  {Mosca}}\ and\ \bibinfo {author} {\bibfnamefont {C.}~\bibnamefont {Zalka}},\
  }\href {https://doi.org/10.1142/S0219749904000109} {\bibfield  {journal}
  {\bibinfo  {journal} {International Journal of Quantum Information}\ }\textbf
  {\bibinfo {volume} {02}},\ \bibinfo {pages} {91} (\bibinfo {year}
  {2004})}\BibitemShut {NoStop}%
\bibitem [{\citenamefont {Burges}(1998)}]{Burges1998}%
  \BibitemOpen
  \bibfield  {author} {\bibinfo {author} {\bibfnamefont {C.~J.}\ \bibnamefont
  {Burges}},\ }\href {https://doi.org/10.1023/A:1009715923555} {\bibfield
  {journal} {\bibinfo  {journal} {Data Mining and Knowledge Discovery}\
  }\textbf {\bibinfo {volume} {2}},\ \bibinfo {pages} {121} (\bibinfo {year}
  {1998})}\BibitemShut {NoStop}%
\bibitem [{\citenamefont {Smola}\ and\ \citenamefont
  {Sch{\"o}lkopf}(2004)}]{Smola2004}%
  \BibitemOpen
  \bibfield  {author} {\bibinfo {author} {\bibfnamefont {A.~J.}\ \bibnamefont
  {Smola}}\ and\ \bibinfo {author} {\bibfnamefont {B.}~\bibnamefont
  {Sch{\"o}lkopf}},\ }\href
  {https://doi.org/10.1023/B:STCO.0000035301.49549.88} {\bibfield  {journal}
  {\bibinfo  {journal} {Statistics and Computing}\ }\textbf {\bibinfo {volume}
  {14}},\ \bibinfo {pages} {199} (\bibinfo {year} {2004})}\BibitemShut
  {NoStop}%
\bibitem [{\citenamefont {Gidney}\ and\ \citenamefont
  {Ekerå}(2019)}]{gidney2019factor}%
  \BibitemOpen
  \bibfield  {author} {\bibinfo {author} {\bibfnamefont {C.}~\bibnamefont
  {Gidney}}\ and\ \bibinfo {author} {\bibfnamefont {M.}~\bibnamefont
  {Ekerå}},\ }\href@noop {} {\bibinfo {title} {How to factor 2048 bit rsa
  integers in 8 hours using 20 million noisy qubits}} (\bibinfo {year}
  {2019}),\ \Eprint {https://arxiv.org/abs/1905.09749} {arXiv:1905.09749
  [quant-ph]} \BibitemShut {NoStop}%
\bibitem [{Note1()}]{Note1}%
  \BibitemOpen
  \bibinfo {note} {Note that here the halfspace learning problem is defined in
  feature space; our original concept class is not a halfspace learning
  problem. Also, in our case the noise is defined in terms of additive error in
  the kernel, which is different from the well-studied question of \protect
  \emph {learning halfspaces with noise} in computational learning
  theory.}\BibitemShut {Stop}%
\bibitem [{\citenamefont {Gr{\o}nlund}\ \emph {et~al.}(2020)\citenamefont
  {Gr{\o}nlund}, \citenamefont {Kamma},\ and\ \citenamefont
  {Larsen}}]{gronlund2020near}%
  \BibitemOpen
  \bibfield  {author} {\bibinfo {author} {\bibfnamefont {A.}~\bibnamefont
  {Gr{\o}nlund}}, \bibinfo {author} {\bibfnamefont {L.}~\bibnamefont {Kamma}},\
  and\ \bibinfo {author} {\bibfnamefont {K.~G.}\ \bibnamefont {Larsen}},\
  }\href@noop {} {\bibfield  {journal} {\bibinfo  {journal} {Proceedings of
  International Conference on Machine Learning, ICML}\ } (\bibinfo {year}
  {2020})}\BibitemShut {NoStop}%
\end{thebibliography}%


%

\onecolumngrid
\appendix

\section{Supervised learning and the discrete log problem}\label{app:definitions}
We work in the same setting as in standard computational learning theory~\cite{kearns1990computational,kearns1994introduction}. The \emph{data space} $\calX\subseteq \mathbb{R}^d$ is a fixed subset of $d$-dimensional Euclidean space. In this paper we will be concerned with \emph{distribution-dependent} learning, i.e., we fix our \emph{data distribution} to be the uniform distribution over $\calX$. A \emph{concept class} $\mc C$ is a set of functions that maps data vectors to binary labels, i.e., every $f\in \mc C$ is a function $f:\calX\to\{-1,1\}$.

A learning algorithm is given a set of \emph{training samples} $S=\{(x_i,f(x_i))\}_{i=1}^m$, where each $x_i$ is independently drawn from the uniform distribution over $\calX$, and $f\in\mc C$ is an unknown concept in the concept class.
The goal of the learning algorithm is to return a classifier $f^*$ that runs in polynomial time. The \emph{test accuracy} of the learned classifier is defined as the probability of agreeing with the unknown concept,
\begin{equation}
    \mathrm{acc}_{f}(f^*)=\Pr_{x\sim \calX}\left[f^*(x)=f(x)\right].
\end{equation}

\begin{definition}[Efficient learning of $\Cc$]\label{def:efficientlearning}
Let $\calX\subseteq \mathbb{R}^d$. A concept class $\mc C\subseteq \{f:\calX\rightarrow \pmset{}\}$ is \emph{efficiently learnable}, if there exists a learning algorithm $\mc A$ that satisfies the following: for every $f\in \Cc$, algorithm $\mc A$ takes as input $\poly(d)$ many training samples $S$ and with probability at least~$2/3$ (over the choice of random training samples and randomness of the algorithm), outputs a classifier in time $\poly(d)$ that achieves 99\% test accuracy.
\end{definition}

The concept class we construct for showing our quantum advantage is based on the \emph{discrete log problem} ($\DLP$) which we define first:
\begin{quote}
    $\DLP$: given a prime $p$, a primitive element $g$ of $\ints_p^*=\{1,2,\dots,p-1\}$, and $y\in \ints_p^*$, find $x\in \ints_p^*$ such that $g^x\equiv y\pmod p$.
\end{quote}
For a fixed $p,g$, we let $\DLP(p,g)$ be the discrete log problem with inputs $y\in \ints_p^*$. The input to the  $\DLP$ problem can be described by $n=\lceil\log_2 p\rceil$ bits. It is shown~\cite{blum1984how} that $\DLP$ is reducible to the following \emph{decision problem} $\DLP_{\frac{1}{2}}$:
\begin{itemize}
    \item \textbf{Input:} prime $p$, generator $g$ of $\ints_p^*$, $y\in \ints_p^*$.
    \item \textbf{Output:} 1 if $\log_g y\leq\frac{p-1}{2}$ and 0 otherwise.
\end{itemize}

\begin{lemma}[{\cite[Theorem~3]{blum1984how}}]\label{lemma:dlpsmallbiashardness}
For every prime $p$ and generator $g$, if there exists a polynomial time algorithm that correctly decides $\DLP_{\frac{1}{2}}(p,g)$ for at least $\frac{1}{2}+\frac{1}{\poly(n)}$ fraction of the inputs $y\in \ints_p^*$, then there exists a polynomial time algorithm for $\DLP(p,g)$.
\end{lemma}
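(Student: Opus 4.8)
The plan is to establish the contrapositive by exhibiting a randomized polynomial-time reduction: from a $\poly(n)$-time algorithm $\calA$ that decides $\DLP_{\frac{1}{2}}(p,g)$ correctly on a $\frac{1}{2}+\frac{1}{\poly(n)}$ fraction of inputs $y\in\ints_p^*$, I would build a $\poly(n)$-time algorithm that outputs $\log_g y$ for \emph{every} $y$. Since $\DLP(p,g)$ is random self-reducible (to solve instance $y$, pick a uniform $r$, solve the instance $y g^{r}$, and subtract $r$ modulo $p-1$), it suffices to solve $\DLP(p,g)$ on a uniformly random input. So the task reduces to turning a weak ``half-predicate'' oracle into a procedure that recovers the full discrete logarithm of a random element.

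The first ingredient is that a \emph{reliable} half-predicate oracle already suffices, via binary search. Write $H(z)=1$ iff $\log_g z\le \frac{p-1}{2}$. Squaring doubles exponents modulo $p-1$, so if $\log_g y$ lies in the ``lower'' or ``upper'' half of $[0,p-1)$ as reported by $H(y)$, then $H(y^2)$, computed from the known element $y^2$, reports which half of \emph{that} interval $\log_g y$ lies in; in general the bits $H(y),H(y^2),H(y^4),\dots,H(y^{2^{n-1}})$ perform a bisection that pins $\log_g y$ down to a single value. Each $y^{2^i}$ is obtained by repeated squaring, so with a perfect oracle this extracts $\log_g y$ in $\poly(n)$ time, and one then verifies the answer against $g^{\log_g y}\equiv y$.

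The main obstacle is amplification: $\calA$ may err on almost half of its inputs, whereas the elements $y^{2^i}$ fed to $H$ are specific, structured elements on which $\calA$'s guarantee says nothing. The standard fix is a randomized self-reduction: to evaluate $H$ on a target $z$, query $\calA$ on many multiplicative translates $z g^{r_t}$ with uniform $r_t$ (each of which \emph{is} uniform over $\ints_p^*$), and combine the answers by a majority vote after ``undoing'' the known shifts $r_t$. The difficulty is that $H$ is not invariant under $z\mapsto z g^{r}$ because of wraparound past $p-1$, so correcting a noisy answer requires partial knowledge of the very exponent being computed. This is the technical heart of Blum and Micali's argument~\cite{blum1984how}: it is resolved by restricting the amplified queries to targets that are provably bounded away from the interval endpoints $0$ and $\frac{p-1}{2}$ (threading this invariant through the bisection), together with additional randomization, so that a $\frac{1}{\poly(n)}$ advantage on uniform inputs boosts, over $\poly(n)$ queries, to an error of at most $1/\poly(n)$ on each worst-case target.

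Finally I would assemble the pieces: the amplified half-predicate makes $\poly(n)$ calls to $\calA$ per query and errs with probability $\le 1/\poly(n)$; the bisection makes $O(n)$ such queries; a union bound yields a $\poly(n)$-time algorithm computing $\log_g y$ for a random $y$ with high probability, and checking $g^{\log_g y}\equiv y$ makes it Las Vegas, which together with the random self-reduction solves $\DLP(p,g)$ in polynomial time, contradicting its assumed hardness. Since every step is carried out for the fixed pair $(p,g)$, the per-instance phrasing of the statement is respected; the expected main difficulty, as noted, is the wraparound-aware amplification of the weak oracle.
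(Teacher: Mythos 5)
The paper does not prove this lemma; it is imported as~\cite[Theorem~3]{blum1984how} and used as a black box, so there is no internal proof to compare against. Your sketch reconstructs the right skeleton of the Blum--Micali argument: random self-reducibility of $\DLP$ reduces worst-case to average-case; a reliable half-predicate $H$ recovers $\log_g y$ by bisection, using repeated squaring ($H(y), H(y^2), H(y^4), \dots$) to probe successive halvings of the candidate interval, since squaring doubles the exponent modulo $p-1$; and the weak predictor $\calA$ is upgraded toward a reliable $H$ by majority vote over random multiplicative shifts $z \mapsto z g^r$. You also correctly single out the genuinely delicate step---$H(zg^r)$ is not determined by $H(z)$ and $r$ alone because of wraparound modulo $p-1$, so ``undoing the shift'' requires partial knowledge of the very exponent being computed---and you defer its resolution (keeping the working exponent bounded away from the boundary points $0$ and $(p-1)/2$, interleaved with the bisection) to the original reference. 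That is the honest place to stop for a black-box citation: your outline is consistent with how the lemma is used downstream, contains no incorrect steps, and names the one real obstacle rather than glossing over it. As a fully self-contained proof it would still need the quantitative amplification analysis from~\cite{blum1984how}, but the paper itself relies on that reference at exactly the same point.
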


Furthermore,~\cite{blum1984how} showed that $\DLP$ can be further reduced to the following promise discrete logarithm problem~$\DLP_c$ for $\frac{1}{\poly(n)}\leq c\leq\frac{1}{2}$:
\begin{itemize}
    \item \textbf{Input:} prime $p$, generator $g$ of $\ints_p^*$, $y\in \ints_p^*$.
    \item \textbf{Promise:} $\log_g y\in[1,c(p-1)]$ or $[\frac{p-1}{2}+1,\frac{p-1}{2}+c(p-1)]$
    \item \textbf{Output:} $-1$ if $\log_g y\in [1,c(p-1)]$ and $+1$ if $\log_g y\in [\frac{p-1}{2}+1,\frac{p-1}{2}+c(p-1)]$.
\end{itemize}

\begin{lemma}[\cite{blum1984how}]\label{lemma:dlpreducetopromise}
For every prime $p$, generator $g$ and $\frac{1}{\poly(n)}\leq c \leq\frac{1}{2}$, if there exists a polynomial time algorithm for $\DLP_{c}(p,g)$, then there exists a polynomial time algorithm for $\DLP(p,g)$.
\end{lemma}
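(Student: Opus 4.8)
The plan is to reduce $\DLP(p,g)$ to $\DLP_c(p,g)$ by combining the random self-reducibility of the discrete logarithm with the Blum--Micali self-correction already packaged in Lemma~\ref{lemma:dlpsmallbiashardness}. Given a polynomial-time algorithm $A$ for $\DLP_c(p,g)$, I would build from it a polynomial-time algorithm $B$ that, on a \emph{uniformly random} input $y\in\ints_p^*$, outputs the most significant bit of $\log_g y$ (equivalently, decides $\DLP_{1/2}(p,g)$) correctly on a $\tfrac12+\tfrac{1}{\poly(n)}$ fraction of inputs; feeding $B$ into Lemma~\ref{lemma:dlpsmallbiashardness} then yields a polynomial-time algorithm for $\DLP(p,g)$. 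Two simplifications are available up front: since an algorithm correct on the $\DLP_c$ promise is also correct on the $\DLP_{c'}$ promise for any $c'<c$, we may assume $c$ is at most a convenient small constant; and since a candidate logarithm $\hat a$ can be \emph{verified for free} by checking $g^{\hat a}\equiv y\pmod p$, it suffices for any sub-procedure to output the correct value with probability only $1/\poly(n)$.

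The engine is a ``detection'' observation. For a target $y$ with $a=\log_g y$ and a shift $r$, the element $yg^{r}$ has logarithm $a+r \bmod (p-1)$; if $r$ is drawn uniformly then $a+r$ is uniform in $\ints_{p-1}$, so with probability exactly $2c\geq \tfrac{1}{\poly(n)}$ it lands either in $[1,c(p-1)]$ or in $[\tfrac{p-1}{2}+1,\tfrac{p-1}{2}+c(p-1)]$, in which case $A(yg^r)$ is \emph{guaranteed correct} and tells us which of the two intervals contains $a+r$, hence pins $a$ down to one explicitly known interval of width $c(p-1)$. Querying $A$ additionally at the correlated shift $r+\tfrac{p-1}{2}$ (whose logarithm lands in the other interval, in swapped order, exactly when the first one is in range) yields a consistency pattern that every in-promise trial satisfies. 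Repeating with $\poly(n)$ independent shifts, recording the width-$c(p-1)$ window implied by each consistent trial, and keeping the $O(1/c)$ windows that accumulate the most ``votes'' produces a $\poly(n)$-size list of candidate windows that contains $a$ with high probability; shifting and rescaling (raising $y$ to a suitable power so a width-$w$ window is stretched back to width $\approx (p-1)/2$) and iterating narrows the window geometrically, so after $O(n)$ rounds one extracts $\poly(n)$ candidate values of $a$, each then checked by exponentiation. At the end one packages the resulting solver into the $\tfrac12+\tfrac1{\poly(n)}$-advantage form required by Lemma~\ref{lemma:dlpsmallbiashardness}.

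The main obstacle — the place that needs care rather than routine bookkeeping — is robustness against the oracle's \emph{arbitrary behavior off the promise}: on the $1-2c$ fraction of shifts whose logarithm avoids the two intervals, $A$ may return adversarially chosen, mutually consistent-looking answers, so the voting step cannot be analyzed by a naive concentration bound alone. The mitigating facts are that (i) windows forced on in-promise trials genuinely contain $a$, so the true window cannot be ``outvoted into irrelevance'' as long as we keep the top $O(1/c)$ candidates rather than a single winner; (ii) a counting/measure argument — roughly, $2c(p-1)$ times the number of pairwise disjoint forced windows is at most $p-1$ — bounds by $O(1/c)$ the number of logarithm values with which a single adversarial transcript can be simultaneously consistent, so the shortlist genuinely has to contain $a$; and (iii) every shortlisted candidate is ultimately tested against $g^{\hat a}\equiv y$, so a $1/\poly(n)$ chance that the true value survives the recursion is all that is required. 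Making (i)--(iii) quantitative, and choosing the number of shifts per round, the rescaling factors, and the branching of the recursion so that the procedure stays polynomial-time with a $1/\poly(n)$ overall success probability, is the technical heart; this is precisely the style of argument used by Blum and Micali~\cite{blum1984how}, with Lemma~\ref{lemma:dlpsmallbiashardness} invoked only at the last step to turn the weak $\DLP_{1/2}$ solver into a full algorithm for $\DLP(p,g)$.
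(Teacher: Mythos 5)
The paper itself gives no proof of this lemma: it only remarks that the statement is implicit in Blum--Micali's proofs of their Lemma~3 and Theorem~3. So there is nothing internal to compare your proposal against, and it has to stand on its own --- and there I see a genuine gap.

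The gap is in the voting/rescaling step. After $T$ random shifts you keep the top $O(1/c)$ width-$c(p-1)$ windows, but $O(1/c)$ such windows cover $O(1/c)\cdot c(p-1)=O(p-1)$ elements of $\ints_{p-1}$, a constant fraction of the whole space, so a single round does not localize $a$. Moreover, the proposed recursion outputs $\Theta(1/c)$ candidate windows \emph{per candidate input}, so after the $\Theta(n/\log(1/c))$ rounds needed to shrink from width $\approx p$ to width $O(1)$ you have $\Theta(1/c)^{\,n/\log(1/c)}=2^{\Theta(n)}$ branches --- exponential, not polynomial. The consistency check at $r$ and $r+\tfrac{p-1}{2}$ does not repair this: a worst-case $A$ can answer consistently off the promise, and for any fixed bin there are enough off-promise shifts pointing at it to give it order $cT$ votes, matching the $\Theta(cT)$ in-promise votes the true bin is guaranteed; your observation that only $O(1/c)$ bins can clear that threshold is correct but too weak. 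Finally, the closing detour through Lemma~\ref{lemma:dlpsmallbiashardness} is superfluous: if your procedure produced a $\poly(n)$-size candidate list for $\log_g y$ even for a $1/\poly(n)$ fraction of $y$, one could already solve $\DLP$ by random self-reducibility plus verification by exponentiation, with no need to package it as a weak $\DLP_{1/2}$ predictor.

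The Blum--Micali-style route is deterministic bit extraction, not randomized voting, and it never queries $A$ off the promise. Given $y=g^a$, enumerate the top $\lceil\log_2(1/c)\rceil+O(1)$ bits of $a$ --- a $\poly(n)$-size outer loop since $1/c=\poly(n)$ --- so that for each guess $a\in[L,L+w)$ with $w\leq c(p-1)/2$. Then peel off bits from the bottom: compute the least significant bit of $a$ from $y^{(p-1)/2}\bmod p$ (Euler's criterion), divide $y$ by $g$ if that bit is $1$, compute the two square roots $\pm z$ (Tonelli--Shanks), whose logs are $a/2$ and $a/2+\tfrac{p-1}{2}$, and multiply both by $g^{-\lfloor L/2\rfloor+1}$ so that one candidate's log lies in $[1,\,w/2+O(1)]\subseteq[1,c(p-1)]$ and the other's in $[\tfrac{p-1}{2}+1,\,\tfrac{p-1}{2}+c(p-1)]$ --- both inside the $\DLP_c$ promise. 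One query to $A$ identifies the principal root; update $(y,L,w)\leftarrow(\text{principal root},\lfloor L/2\rfloor,w/2+O(1))$, iterate $O(n)$ times, and finish by checking the $O(1)$ surviving candidates via $g^{\hat a}\stackrel{?}{=}y$. Every query stays in promise, so the arbitrary off-promise behavior of $A$ never needs to be controlled, and the only branching is the initial $\poly(n)$-size guess.
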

The proof of this fact is implicitly implied by the proof of Lemma 3 and Theorem 3 in~\cite{blum1984how}.

\section{A concept class reducible to discrete log}\label{app:classicalhardness}
In this section we construct a concept class, wherein learning the concept class is as hard as solving $\DLP_{\frac{1}{2}}(p,g)$. Therefore, assuming the hardness of $\DLP(p,g)$, no classical polynomial time algorithm can learn this concept class. On the other hand, the learning problem is a simple clustering problem in 1D after taking the  discrete logarithm, and therefore is easy to learn using a quantum~computer.

We work in the setting introduced in the previous section, where we use standard definitions (see Definition~\ref{def:efficientlearning}) from computational learning theory, and we assume a fixed $p,g$ such that computing discrete log in $\ints_p^*$ is classically hard. Our concept class is defined as~follows.

\begin{definition}[Concept class]\label{def:conceptclass}
We define a concept class over the data space $\calX=\ints_p^*\subseteq\{0,1\}^n$, where $n=\lceil\log_2 p\rceil$. For any $s\in \ints_p^*$, define a concept $f_s:\ints_p^*\to \pmset{}$ as
    \begin{equation}
        f_s(x)=\begin{cases}+1,&\text{if }\log_g x\in[s,s+\frac{p-3}{2}],\\ -1, &\text{else.}\end{cases}
    \end{equation}
    Note that in interval $[s,s+\frac{p-3}{2}]$, $s+i$ denotes addition within $\ints_p^*$. By definition, $f_s$ maps half the elements in $\ints_p^*$ to $+1$ and half of them to $-1$. The concept class is defined as $\mc C=\{f_s\}_{s\in \ints_p^*}$.
\end{definition}

A target concept in $\mc C$ can be specified by choosing an element $s\in \ints_p^*$, which can be understood as the ``secret key" for the concept $f_s$.
We can also efficiently generate training samples for every~concept.
\begin{lemma}\label{lemma:efficientdatageneration}
    For every concept $f\in\mc C$, there exists an efficient classical algorithm that can generate samples $(x,f(x))$, where $x$ is uniformly random in $\ints_p^*$.
\end{lemma}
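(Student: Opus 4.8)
The plan is to run the hardness reduction "in reverse": although recovering $\log_g x$ from $x$ is conjectured to be classically hard, producing a uniformly random $x\in\ints_p^*$ \emph{together with} its discrete logarithm is trivial. Fix the target concept $f=f_s$; the sampler is allowed to depend on $f$, i.e.\ it knows the secret key $s$. To output one labeled example, first draw an exponent $a$ uniformly at random from $\{1,2,\dots,p-1\}$ (matching the convention of Definition~\ref{def:conceptclass}, under which $\log_g$ is valued in $\ints_p^*$, with $g^{p-1}=g^0=1$), then compute $x=g^{a}\bmod p$ by fast modular exponentiation in time $\poly(n)$, and finally emit the pair $\big(x,\, f_s(x)\big)$ where the label is computed as described below. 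Repeating this independently yields i.i.d.\ samples as required by the learning model of Definition~\ref{def:efficientlearning}.

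Two things need checking. First, that $x$ is distributed exactly uniformly on $\ints_p^*$: since $g$ is a generator of the cyclic group $\ints_p^*$, the map $a\mapsto g^{a}\bmod p$ is a bijection from the $(p-1)$-element exponent group onto $\ints_p^*$, so it pushes the uniform distribution on exponents forward to the uniform distribution on $\ints_p^*$ — this is precisely where "$g$ is a generator" is used. Second, that the label can be produced efficiently and correctly: by construction $\log_g x = a$, so $f_s(x)$ is obtained directly from $a$ and $s$ without solving any $\DLP$ instance, by testing whether $a$ lies in the interval $[s,\,s+\frac{p-3}{2}]$ using the same wrap-around addition within $\ints_p^*$ that appears in Definition~\ref{def:conceptclass}, and setting the label to $+1$ or $-1$ accordingly; this is a constant number of comparisons on $n$-bit integers, hence $\poly(n)$ time.

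There is no serious obstacle here — the only care needed is in the bookkeeping: making sure the random exponent ranges over exactly one representative of each group element (so the output is exactly, not approximately, uniform) and that the interval-membership test faithfully implements the wrap-around convention of Definition~\ref{def:conceptclass}, so that the generated labels agree with $f_s$ everywhere rather than on most inputs. This lemma is exactly what powers the interactive verification protocol mentioned after Theorem~\ref{thm:mainthm1}: the verifier, knowing $s$, can hand out correctly labeled training and test data despite $f_s$ being hard to evaluate for a party that does not know $s$.
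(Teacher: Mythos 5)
Your proposal is correct and takes the same route as the paper: draw a uniformly random exponent $a\in\ints_p^*$, output $x=g^a\bmod p$, and compute the label directly from $a$ by the interval test in Definition~\ref{def:conceptclass}, relying on $g$ being a generator so that $a\mapsto g^a$ is a bijection and hence pushes the uniform distribution forward to the uniform distribution on $\ints_p^*$. The paper's proof is the same argument, only more tersely stated.
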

\begin{proof}
    To generate a sample $(x,f_s(x))$ for a concept $f_s$, first generate a random $y\sim \ints_p^*$, and let 
    \begin{equation}
        b=\begin{cases}+1,&\text{if }y\in[s,s+\frac{p-3}{2}],\\ -1, &\text{else.}\end{cases}
    \end{equation}
    Then return $(g^y,b)$. Since $g^y$ is also uniformly distributed in $\ints_p^*$, this procedure correctly generates a sample from the data distribution.
\end{proof}

Using the quantum algorithm for discrete logarithm problem~\cite{Shor1997polynomial,Mosca2004exact}, the concept class~$\mc C$ is polynomially learnable in $\BQP$ (in fact with probability $1$). On the other hand, the result of Blum and Micali~\cite{blum1984how} implies that no efficient classical algorithm can do better than random guessing.

\begin{theorem}\label{thm:conceptclasshardness}
The concept class $\mc C$ is efficiently learnable in $\BQP$. On the other hand, suppose there exists an efficient classical algorithm that, for every concept $f\in\mc C$, can achieve $\frac{1}{2}+\frac{1}{\poly(n)}$ test accuracy, with probability at least $2/3$ over the choice of random training samples and randomness of the algorithm. Then there exists an efficient classical algorithm for $\DLP$.
\end{theorem}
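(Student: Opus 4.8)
The plan is to prove the two halves of Theorem~\ref{thm:conceptclasshardness} separately; the $\BQP$ direction is short, and the classical hardness direction is the substantive part, which I would obtain by a reduction to the biased decision problem $\DLP_{\frac12}(p,g)$ followed by an appeal to Lemma~\ref{lemma:dlpsmallbiashardness}. Two structural facts drive the reduction: (i) the concept $f_1\in\mc C$ is \emph{literally} the $\DLP_{\frac12}$ predicate, since $f_1(x)=+1$ iff $\log_g x\in[1,1+\tfrac{p-3}{2}]=[1,\tfrac{p-1}{2}]$; and (ii) by Lemma~\ref{lemma:efficientdatageneration} one can classically and efficiently produce correctly labeled examples $(x,f_1(x))$ with $x$ uniform in $\ints_p^*$, even though $f_1$ is conjectured to be hard to evaluate pointwise. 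Fact (ii) is what lets us not only \emph{train} a hypothetical classical learner but also \emph{test} the classifiers it returns, which turns out to be essential.

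For the $\BQP$ direction, given a training set $\{(x_i,f_s(x_i))\}_{i\le m}$ with $m=\poly(n)$ the quantum learner computes $e_i=\log_g x_i$ for every $i$ using Shor's algorithm (its exact variant~\cite{Shor1997polynomial,Mosca2004exact} succeeds with certainty). Since $f_s(x)=+1$ exactly when $\log_g x$ lies in the arc $A_s=[s,s+\tfrac{p-3}{2}]$ — a block of $\tfrac{p-1}{2}$ consecutive residues on the cycle $\ints_{p-1}$ — the sample exponents, read in cyclic order, form one run of $+1$'s followed by one run of $-1$'s. The learner identifies the two transition gaps between these runs, places an estimated endpoint at the midpoint of each gap, and outputs the corresponding arc classifier (which on a new input recomputes the discrete log by Shor and tests membership in the learned arc). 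The only misclassified points are those whose discrete log lands on the wrong side of a midpoint inside a transition gap, so the error probability is at most half the total length of the two transition gaps, which has expectation $O((p-1)/m)$; hence the error rate is $O(1/m)$ in expectation, and choosing $m=\poly(n)$ large enough gives test accuracy $\ge 0.99$ with probability $\ge 2/3$ (indeed $1-o(1)$) over the sample draw.

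For the classical hardness direction, suppose $\mc A$ is an efficient classical learner that, for every $f\in\mc C$ and in particular for $f_1$, returns with probability $\ge 2/3$ a classifier of test accuracy $\ge \tfrac12+\varepsilon$, where $\varepsilon=\tfrac1{\poly(n)}$. The obstacle is that a single run is worthless as a $\DLP_{\frac12}$ solver: outputting $f^*(y)$ succeeds on a uniform $y$ with probability $\E[\mathrm{acc}_{f_1}(f^*)]$, which can be as small as $\tfrac23(\tfrac12+\varepsilon)<\tfrac12$ since on the failing $\tfrac13$ of runs $\mc A$ may return an accuracy-$0$ classifier. I would resolve this by amplification enabled by Fact (ii): run $\mc A$ on $T=O(\log(1/\varepsilon))=O(\log n)$ independently and freshly generated training sets for $f_1$, obtaining classifiers $f^*_1,\dots,f^*_T$; draw $\poly(n)$ further labeled examples and use them to estimate each $\mathrm{acc}_{f_1}(f^*_j)$ to additive error $\varepsilon/4$; on input $y$, output $f^*_{\hat{j}}(y)$ where $\hat{j}$ maximizes the estimate. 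By a Chernoff bound some run yields accuracy $\ge\tfrac12+\varepsilon$ except with probability $3^{-T}$, and by Hoeffding and a union bound all $T$ estimates are accurate except with small probability; on the intersection of these events the selected $f^*_{\hat{j}}$ has true accuracy $\ge\tfrac12+\varepsilon/2$. Tuning $T$ and the number of test examples so that the total failure probability $\delta$ satisfies $\delta\le\varepsilon/3$, the algorithm decides $\DLP_{\frac12}(p,g)$ (equivalently, evaluates $f_1$) correctly on at least a $(1-\delta)(\tfrac12+\tfrac\varepsilon2)\ge\tfrac12+\tfrac\varepsilon6=\tfrac12+\tfrac1{\poly(n)}$ fraction of inputs $y\in\ints_p^*$, and it runs in $\poly(n)$ time because $\mc A$ and the classifiers it outputs are efficient (if one insists on a deterministic solver, fix the internal coins to an averaging-optimal value; alternatively, the Blum--Micali reduction applies verbatim to randomized solvers). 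Lemma~\ref{lemma:dlpsmallbiashardness} then yields an efficient classical algorithm for $\DLP(p,g)$, as claimed. The degenerate cases — the element $x=1$, and the parity of $p-1$ — touch only an $O(1/p)$ fraction of the domain and are absorbed into the error terms.
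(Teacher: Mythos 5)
Your overall architecture matches the paper's: both directions are handled by the same decomposition (BQP via Shor plus a 1D clustering argument after taking discrete logs; classical hardness by converting a good classifier for a concept in $\mc C$ into a $\DLP_{\frac12}(p,g)$ decider and invoking Lemma~\ref{lemma:dlpsmallbiashardness}), with Lemma~\ref{lemma:efficientdatageneration} providing the classically generated training set in both. There are two genuine but benign departures worth noting. First, you specialize to $f_1$, which happens to \emph{be} the $\DLP_{\frac12}$ predicate, so no input shift is needed; the paper instead runs the argument for an arbitrary $f_s$ and feeds the classifier the shifted query $y\cdot g^{s-1}$. Your version is slightly cleaner for Theorem~\ref{thm:conceptclasshardness} as stated, whereas the paper's version directly yields the stronger Remark~\ref{rmk:classicalhardness} (hardness of learning \emph{any} fixed concept) without a separate observation that $f_s(y\cdot g^{1-s})=f_1(y)$.

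Second, and more substantively, you treat the $2/3$ success probability of the learner as an obstacle that must be amplified away, and you do so by running the learner $T=O(\log n)$ times, validating each returned classifier on fresh labeled data (again via Lemma~\ref{lemma:efficientdatageneration}), and selecting the best. This is a correct and fully explicit construction. The paper's proof handles the same issue more tersely: it conditions on the $2/3$-probability event that $\mc A$ returns a good classifier, observes that conditioned on this event $\mc A'$ decides $\DLP_{\frac12}(p,g)$ on a $\tfrac12+\tfrac1{\poly(n)}$ fraction of inputs, applies Lemma~\ref{lemma:dlpsmallbiashardness}, and then waves at ``a simple union bound.'' The hidden point in the paper's version is that once you run the learner \emph{once} and fix its output, the fraction of correctly decided $y$ is a property of that fixed classifier (not an average over the learner's coins), so the premise of Lemma~\ref{lemma:dlpsmallbiashardness} is met on the good event; the residual $1/3$ failure probability can be absorbed because $\DLP$ solutions are verifiable (check $g^x\equiv y$) and hence a $\ge 2/3-o(1)$ success probability suffices. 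Your worry that the \emph{expected} accuracy $\E[\mathrm{acc}_{f_1}(f^*)]$ could drop below $1/2$ is only relevant if one resamples a fresh classifier per query, which neither proof does; the amplification-by-validation step is therefore not strictly necessary, though it does make the argument self-contained without appealing to verifiability. Your BQP direction (midpoints of the two transition gaps in log space) is a perfectly good alternative to the paper's nearest-cluster rule, and both are stated at a similar level of informality. In short: the proof is correct, the route is essentially the paper's, and your amplification step is an alternative (somewhat more careful) way to discharge a probability bookkeeping point that the paper dismisses in one line.
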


\begin{remark}\label{rmk:classicalhardness}
In the following we prove a stronger statement for classical hardness. We show that assuming the classical hardness of $\DLP$, no efficient classical algorithm can achieve $\frac{1}{2}+\frac{1}{\poly(n)}$ test accuracy with probability $\frac{2}{3}$ for \emph{any} concept in the concept class $\mc C$.
\end{remark}
\begin{proof}
We first show quantum learnability. For every concept $f_s\in\mc C$, use a quantum computer to take the discrete logarithm of the classical training data samples. Then, after taking the discrete logarithm, the training data samples are clustered in two intervals: $[s,s+\frac{p-3}{2}]$ with label $+1$, and $[s+\frac{p-1}{2},s+p-2]$ with label $-1$, for the unknown $s\in\ints_p^*$ (which defines the unknown concept $f_s$). For a new data sample $x$, use a quantum computer to take its discrete log. Then, compute the average distance $d_+/d_-$ between $\log_g x$ and the $+1$/$-1$ labeled clusters, respectively. Assign label to $x$ based on which cluster is closer to $\log_g x$. This algorithm can achieve 99\% accuracy for any concept $f_s$, with high probability over random training samples. We omit the detailed proof here.

To show classical hardness as stated in Remark~\ref{rmk:classicalhardness}, consider an arbitrary concept $f_s$ and polynomially many training samples. By Lemma~\ref{lemma:efficientdatageneration} this can be generated classically in polynomial time. By assumption, an efficient classical algorithm $\mc A$ can learn this concept with $\frac{1}{2}+\frac{1}{\poly(n)}$ accuracy (call $\mc A$ a good classifier if it satisfies this), with probability at least $2/3$.  We use this learned classifier to solve $\DLP_{\frac{1}{2}}(p,g)$.

\noindent\textbf{Algorithm $\mc A'$ for $\DLP_{\frac{1}{2}}(p,g)$:}
\begin{enumerate}
    \item On input $y\in \ints_p^*$ such that $\log_g y\in[1,\frac{p-1}{2}]$ or $[\frac{p-1}{2}+1,p-1]$.
    \item Send $y\cdot g^{s-1}$ to the classifier $\mc A$, decide $\log_g y\in[1,\frac{p-1}{2}]$ if the classifier returns $+1$, and decide $\log_g y\in[\frac{p-1}{2}+1,p-1]$ if the classifier returns $-1$.
\end{enumerate}
To see that this procedure correctly decides $\DLP_{\frac{1}{2}}(p,g)$ with a non-trivial bias, for a good classifier~$\mc A$ we have
\begin{nalign}
    &\Pr_{y\sim\ints_p^*}\left[\mc A'\text{ correctly decides }\DLP_{\frac{1}{2}}(p,g)\text{ on input }y\right]\\
    &=\Pr_{y\sim\ints_p^*}\left[\mc A\text{ correctly classifies }y\cdot g^{s-1}\text{ for concept }f_s\right]\\
    &=\Pr_{y\sim\ints_p^*}\left[\mc A\text{ correctly classifies }y\text{ for concept }f_s\right]\\
    &=\frac{1}{2}+\frac{1}{\poly(n)}.
\end{nalign}
By Lemma~\ref{lemma:dlpsmallbiashardness}, once we have an algorithm that can correctly decide $\DLP_{\frac{1}{2}}(p,g)$ on $\frac{1}{2}+\frac{1}{\poly(n)}$ fraction of the inputs, it can be used to solve $\DLP(p,g)$ with high success probability. Finally by a simple union bound, we have a polynomial time algorithm that with high probability solves~$\DLP(p,g)$. 
\end{proof}

An advantage of our supervised learning task is that it is efficiently verifiable by a classical verifier. Consider the following challenge:
\begin{enumerate}
    \item A classical verifier picks a random concept $f_s\sim\mc C$ (it can do so by choosing a uniformly random $s\sim \ints_p^*$). Then, generate polynomial-sized samples $(S,T)$ where the data labels in~$T$ are removed.
    \item The verifier sends $(S,T)$ to a prover, and the prover returns a set of $\{-1,1\}$ labels for $T$.
    \item The verifier accepts if more than 99\% of the labels returned by the prover are correct.
\end{enumerate}

Say a prover passes the challenge if the verifier accepts with probability at least $2/3$. Our hardness result implies the following Corollary:
\begin{corollary}
    There exists a $\BQP$ prover that can pass the above challenge. Assuming the classical hardness of $\DLP$, no polynomial-time classical prover can pass the above challenge.
\end{corollary}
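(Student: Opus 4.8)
The plan is to read off both halves of the corollary from Theorem~\ref{thm:conceptclasshardness}, Lemma~\ref{lemma:efficientdatageneration}, and the shift symmetry $f_s(x)=f_1(x\,g^{s-1})$ of $\Cc$; the only genuinely new ingredient is a way to convert a prover that labels an entire test set into a single-point classifier. For the quantum prover: on input $(S,T)$ it runs the quantum learner of Theorem~\ref{thm:conceptclasshardness} --- apply Shor's algorithm to compute $\log_g$ of every point of $S$ and of $T$, which exposes the two clusters $[s,s+\frac{p-3}{2}]$ and its complement, and label each $x\in T$ by the nearer cluster. With a sufficiently large polynomial number of training points the secret $s$ is pinned down to an interval of length $p/\poly(n)$ with high probability, so the learned classifier has true test accuracy $1-1/\poly(n)\ge 0.999$ with high probability; since $T$ consists of polynomially many independent points, a Chernoff bound then shows that more than $99\%$ of the returned labels are correct with probability at least $2/3$, so the verifier accepts.

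For the converse, I would suppose that a polynomial-time classical prover $P$ passes the challenge, i.e.\ the verifier accepts with probability $\ge 2/3$ over the random $s\sim\ints_p^*$, the samples $(S,T)$, and $P$'s internal coins, and use $P$ to build an efficient classical learner $\mathcal A$ for the fixed concept $f_1$, exploiting $f_1(x)=f_s(x\,g^{s-1})$. Given a training set $\{(x_i,f_1(x_i))\}$ for $f_1$ --- which, inside the reduction to $\DLP$, is produced classically via Lemma~\ref{lemma:efficientdatageneration} --- and a test point $x\sim\ints_p^*$: pick $s\sim\ints_p^*$ uniformly, replace each $x_i$ by $x_i\,g^{s-1}$ while keeping the label, which yields a correctly distributed training set $S'$ for $f_s$; form a test set $T'$ consisting of $x\,g^{s-1}$ together with the right number of fresh uniform points, placed in a uniformly random order; hand $(S',T')$ to $P$; and output the label $P$ assigns to $x\,g^{s-1}$ as the prediction for $f_1(x)$. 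Because $(S',T')$ is distributed exactly like a genuine challenge instance for a uniformly random concept, with probability $\ge 2/3$ more than $99\%$ of $P$'s labels match $f_s$; conditioned on that event the position of $x\,g^{s-1}$ in $T'$ is uniform and independent of which positions $P$ got right, so that one label is correct with probability $>0.99$ (up to the negligible chance of a collision among the test points). Hence $\mathcal A$ predicts $f_1(x)$ correctly with probability at least $\frac{2}{3}\cdot 0.99>\frac12+\frac{1}{\poly(n)}$ over $x$ and its internal randomness, and the identical construction with base concept $f_{s_0}$ works for every $f_{s_0}\in\Cc$. This is exactly the hypothesis of Theorem~\ref{thm:conceptclasshardness}, so it produces an efficient classical algorithm for $\DLP$, contradicting its assumed hardness; therefore no such $P$ exists.

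I expect the main obstacle to be making this transfer rigorous: one must argue both that the embedded instance $(S',T')$ is statistically identical to a true challenge instance --- which is precisely why the random shift $s$ and the fresh uniform filler points are needed --- and that the single embedded query occupies a position in $T'$ that is uniform and independent of the set of positions the prover happens to label correctly, so that the ``$>99\%$ correct'' guarantee applies to it; randomly permuting $T'$ and ruling out collisions (which fail only with probability $O(\poly(n)^2/p)$) handles this, at the cost of only the harmless constant factor $0.99$, which still leaves an inverse-polynomial advantage. A secondary technical point, already dispatched above, is bridging the quantum learner's guarantee on \emph{true} accuracy and the challenge's requirement on \emph{empirical} accuracy over $T$, which is why we push the true accuracy safely above $99\%$ before applying concentration.
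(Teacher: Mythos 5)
Your proposal is correct and supplies the reduction the paper leaves implicit --- the paper states the corollary as an immediate consequence of Theorem~\ref{thm:conceptclasshardness} without writing out a proof, and your construction (turn a challenge-passing prover into a single-point classifier by a random shift $s$, embed the query $x\,g^{s-1}$ among fresh uniform filler points in a randomly permuted $T'$, and then apply the hardness theorem) is exactly the argument one would fill in. One small caveat on the classical half: your final bound is an \emph{unconditional} success probability of at least $\tfrac{2}{3}\cdot 0.99$ jointly over $x$, the training set, and the prover's coins, whereas the hypothesis of Theorem~\ref{thm:conceptclasshardness} is stated with a split quantifier (with probability $\geq 2/3$ over training and coins, achieve accuracy $\geq \tfrac12+\tfrac{1}{\poly(n)}$ over $x$); these are not literally the same statement, so ``this is exactly the hypothesis'' overclaims slightly. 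However, your unconditional bound is precisely the form needed to apply Lemma~\ref{lemma:dlpsmallbiashardness} (Blum--Micali) directly when deciding $\DLP_{\frac{1}{2}}$, so the conclusion still goes through, and the gap is one of phrasing rather than substance.
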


\section{Support vector machine and the quantum kernel estimation algorithm}\label{section:tutorial}
\subsection{Support vector machines}
We give a brief overview of support vector machine and the quantum kernel estimation algorithm~\cite{Havlicek2019,Schuld2019quantum}. Along the way, we also establish properties that are useful for the analysis of our algorithm in the next section. We refer to Ref.~\cite{Burges1998,Smola2004} for a more detailed introduction to support vector machines.

A support vector machine ($\SVM$) is a classification algorithm that takes as input a set of training samples $S=\{(x_1,y_1),\dots,(x_m,y_m)\}$ where $x_i\in\mathbb{R}^d$, $y\in \pmset{}$ and in time $\poly(d,m)$ (assume that the training set has polynomial size, $m=\poly(d)$) returns a set of parameters $(w,b)\in \mathbb{R}^d\times \mathbb{R}$ which define a linear classifier $f^*:\calX\to \pmset{}$ as follows
\begin{equation}
    y_{\text{pred}}=f^*(x)=\sign\left(\langle w,x\rangle+b\right),
\end{equation}
where $\langle w,x\rangle=\sum_ix_i w_i$. For a data vector $x$ with true label $y$, it is easy to see that the classifier is correct on this point \emph{if and only if} $y\left(\langle w,x\rangle+b\right)>0$. We say a training set $S$ is \emph{linearly separable} if there exists $(w,b)$ such~that
\begin{equation}
    y_i\left(\langle w,x_i\rangle+b\right)>0,\,\,\,\,\text{ for every } (x_i,y_i)\in S,
\end{equation}
and such a $(w,b)$ is called a \emph{separating hyperplane} for $S$ in $\mathbb{R}^d$. When the training set is linearly separable, the $\SVM$ algorithm can efficiently find a separating hyperplane by running the so-called \emph{hard margin primal~program}
\begin{nalign}\label{eq:svmhardmarginprimal}
     \min_{w,b}\,\,\,\,\,\,\,\,&\frac{1}{2}\|w\|_2^2\\
     \text{s.t.}\,\,\,\,\,\,\,\,&y_i\left(\langle x_i,w\rangle+b\right)\geq 1,
\end{nalign}
a convex quadratic program whose optimal solution can be obtained in polynomial time. One important property of $\SVM$ is that it is a maximum margin classifier.
For a general unnormalized hyperplane $(w,b)$, define its \emph{normalized margin} on a training data $(x,y)$ as
\begin{equation}
    \hat{\gamma}_{(w,b)}(x,y)=\frac{1}{\|w\|_2}y\left(\langle w,x\rangle+b\right)
\end{equation}
and let $\gamma_{(w,b)}(x,y)=y\left(\langle w,x\rangle+b\right)$ denote the unnormalized margin. It is easy to see that Eq.~\eqref{eq:svmhardmarginprimal} returns a classifier that maximizes
\begin{equation}
    \min_{(x,y)\in S}\hat{\gamma}_{(w,b)}(x,y),
\end{equation}
which is the minimum distance from any training point to the hyperplane. The general intuition that $\SVM$ maximizes the margin is useful for understanding the generalization bounds that we will prove in the next section.

However, for most ``practical" purposes, assuming $S$ is linearly separable is a strong assumption. Additionally, when the training set $S$ is not linearly separable, Eq.~\eqref{eq:svmhardmarginprimal} does not have a feasible solution. To find a good linear classifier with the presence of outliers, we introduce the \emph{soft margin primal~program}
\begin{nalign}\label{eq:generalsoftmarginprimal}
     \min_{w,\xi,b}\,\,\,\,\,\,\,\,&\frac{1}{2}\|w\|_2^2 + \frac{\lambda}{2}\sum_i\xi_i^p\\
     \text{s.t.}\,\,\,\,\,\,\,\,&y_i\left(\langle x_i,w\rangle+b\right)\geq 1-\xi_i\\
     &\xi_i\geq 0,
\end{nalign}
where $\xi_i$ are the \emph{slack variables} introduced to relax the margin constraints, with an additional penalty term $\frac{\lambda}{2}\sum_i\xi_i^p$. For any positive integer $p$, Eq.~\eqref{eq:generalsoftmarginprimal} is a convex program and is feasible. In this work, we focus on choosing $p=2$, which becomes a quadratic program. In practice it is also common to use $p=1$.

For the $p=2$ case, we further derive the Wolfe dual program of \eqref{eq:generalsoftmarginprimal} based on Lagrangian duality, resulting in the \emph{$L2$ soft margin dual program}
\begin{nalign}\label{eq:generalL2dual}
     \max_{\alpha}\,\,\,\,\,\,\,\,&\sum_i\alpha_i-\frac{1}{2}\sum_{i,j}\alpha_i\alpha_j y_i y_j \langle x_i,x_j\rangle-\frac{1}{2\lambda}\sum_i\alpha_i^2\\
     \text{s.t.}\,\,\,\,\,\,\,\,&\alpha_i\geq0\\
     &\sum_i\alpha_i y_i=0.
\end{nalign}
Here the primal Lagrangian is given by
\begin{equation}
    L_P=\frac{1}{2}\|w\|_2^2+\frac{\lambda}{2}\sum_i\xi_i^2-\sum_i\alpha_i\left(y_i\left(\langle x_i,w\rangle+b\right)- 1+\xi_i\right)-\sum_i \mu_i\xi_i,
\end{equation}
and the primal and dual optimal solutions can be connected via the Karush-Kuhn-Tucker (KKT) conditions
\begin{nalign}\label{eq:kktconditions}
    &w=\sum_i \alpha_i y_i x_i\\
    &\sum_i\alpha_i y_i=0\\
    &\lambda\xi_i-\alpha_i-\mu_i=0\\
    &\alpha_i\left(y_i\left(\langle x_i,w\rangle+b\right)- 1+\xi_i\right)=0\\
    &\mu_i\xi_i=0\\
    &\alpha_i\geq 0\\
    &\mu_i\geq 0\\
    &\xi_i\geq 0\\
    &y_i\left(\langle x_i,w\rangle+b\right)-1+\xi_i\geq 0.
\end{nalign}
An immediate corollary of Eq.~\eqref{eq:kktconditions} is
\begin{equation}
    \alpha_i=\lambda\xi_i
\end{equation}
at the optimal solution. This means that when $\lambda$ is a constant, the Lagrangian multipliers $\alpha_i$ is proportional to the slack variables $\xi_i$. This is a useful property for our analysis later. 
In addition, the bias parameter $b$ can be determined by the equality $y_i\left(\langle x_i,w\rangle+b\right)- 1+\xi_i=0$ for any $\alpha_i\neq 0$.

Finally, it will be convenient for us to work with optimizations without the bias parameter $b\in \mathbb{R}$.  
We show that we can assume this is without loss of generality, as we can add one extra dimension $\tilde{x}= (x,1)/\sqrt{2}$ to the data vectors so that the bias parameter is absorbed into $w$. In this case, the $L2$ soft margin dual program becomes
\begin{nalign}\label{eq:L2dualnobias}
     \max_{\alpha}\,\,\,\,\,\,\,\,&\sum_i\alpha_i-\frac{1}{4}\sum_{i,j}\alpha_i\alpha_j y_i y_j \left(\langle x_i,x_j\rangle+1\right)-\frac{1}{2\lambda}\sum_i\alpha_i^2\\
     \text{s.t.}\,\,\,\,\,\,\,\,&\alpha_i\geq0.
\end{nalign}
Here we used the fact that $\langle\tilde{x}_i,\tilde{x}_j\rangle=\frac{1}{2}\left(\langle x_i,x_j\rangle+1\right)$, and notice that the equality constraint $\sum_i\alpha_i y_i=0$ is removed. This is because of the new KKT conditions
\begin{nalign}\label{eq:kktconditionsnobias}
    &w=\sum_i \alpha_i y_i x_i\\
    &\lambda\xi_i-\alpha_i-\mu_i=0\\
    &\alpha_i\left(y_i\left(\langle x_i,w\rangle\right)- 1+\xi_i\right)=0\\
    &\mu_i\xi_i=0\\
    &\alpha_i\geq 0\\
    &\mu_i\geq 0\\
    &\xi_i\geq 0\\
    &y_i\left(\langle x_i,w\rangle\right)-1+\xi_i\geq 0,
\end{nalign}
where $\alpha_i=\lambda\xi_i$ still hold at optimality.

\subsection{Non-linear classification}
In this section we generalize support vector machines for \emph{non-linear} classification, i.e., we map the $d$-dimensional data vectors into a $n$-dimensional feature space ($n\gg d$) via a \emph{feature map}:
\begin{equation}
    \phi:\calX\to \mathbb{R}^n,
\end{equation}
where we assume that $\phi$ is normalized, i.e., it maps a unit vector to a unit vector. The feature map is chosen prior to seeing the training data. Notice that in the dual program, training data is only accessed via the \emph{kernel}  matrix $ K\in \mathbb{R}^{m\times m}$ (recall that $m$ denotes the number of training samples) defined as
\begin{equation}
K(x_i,x_j)=\langle \phi(x_i),\phi(x_j)\rangle.
\end{equation}
Therefore, it's possible to work with an exponentially large (or even infinite-dimensional) feature space (i.e., when $n$ is large), as long as the kernel is computable in $\poly(d)$ time. 

In addition, for any feature map $\phi_0:\calX\to \mathbb{R}^n$, we can always use a new feature map $\phi:\calX\to \mathbb{R}^{n+1}$ such that $\phi(x)=(\phi_0(x),1)/\sqrt{2}$, which allows us to remove the bias parameter $b$. This can be done via changing the kernel as $K(x_i,x_j)=\frac{1}{2}\left(K_0(x_i,x_j)+1\right)$ as shown in the previous section. Therefore, with a suitable kernel transformation, we can run the following dual program without loss of generality:
\begin{nalign}\label{eq:L2dualnobias2}
     \max_{\alpha}\,\,\,\,\,\,\,\,&\sum_i\alpha_i-\frac{1}{2}\sum_{i,j}\alpha_i\alpha_j y_i y_j K(x_i,x_j)-\frac{1}{2\lambda}\sum_i\alpha_i^2\\
     \text{s.t.}\,\,\,\,\,\,\,\,&\alpha_i\geq0.
\end{nalign}
Let $Q\in\mathbb{R}^{m\times m}$ be a matrix such that $Q_{ij}=y_i y_j K(x_i,x_j)$. Then we can write the dual program in vectorized form
\begin{nalign}\label{eq:L2dualnobiasvectorized}
     \max_{\alpha}\,\,\,\,\,\,\,\,&1^T\alpha-\frac{1}{2}\alpha^T\left(Q+\frac{1}{\lambda}\mathbb{I}\right)\alpha\\
     \text{s.t.}\,\,\,\,\,\,\,\,&\alpha\geq0.
\end{nalign}
It is easy to see that for every $\lambda>0$,  Eq.~\eqref{eq:L2dualnobiasvectorized} is a strongly convex quadratic program and has a unique optimal solution. After training is finished (i.e., we obtain training samples, compute the kernel $K$ and solve Eq.~\eqref{eq:L2dualnobiasvectorized}), when a learner is presented with a  new test example $x$, the classifier~returns
\begin{equation}
    y_{\text{pred}}=\sign\left(\langle w,\phi(x)\rangle\right)=\sign\left(\sum_i \alpha_i y_i K(x_i,x)\right),
\end{equation}
where we used the KKT condition $w=\sum_i\alpha_i y_i \phi(x_i)$ (which can be derived by simply replacing $x_i$ with $\phi(x_i)$ in Eq.~\eqref{eq:kktconditionsnobias}). So a classifier needs to evaluate the kernel function on the new test example and output $y_{\text{pred}}$.

\subsection{Quantum kernel estimation}
Different from the above standard approaches, the kernel used in our quantum algorithm is constructed by a quantum feature map. 
The main idea in the \emph{quantum kernel estimation} algorithm~\cite{Havlicek2019,Schuld2019quantum} is to map classical data vectors into quantum states:
\begin{equation}
    x\to \ketbra{\phi(x)},
\end{equation}
where we use the density matrix representation to avoid global phase. Then, the kernel function is the Hilbert-Schmidt inner product between density matrices,
\begin{equation}
    K(x_i,x_j)=\Tr\big[\ketbra{\phi(x_i)}\cdot \ketbra{\phi(x_j)}\big]=\left|\braket{\phi(x_i)}{\phi(x_j)}\right|^2.
\end{equation}
This \emph{quantum feature map} is implemented via a quantum circuit parameterized by $x$,
\begin{equation}
    \ket{\phi(x)}=U(x)\ket{0^n},
\end{equation}
where we assume the feature map uses $n$ qubits. Therefore, to obtain the kernel function
\begin{equation}
    K(x_i,x_j)=\left|\bra{0^n}U^\dag(x_i)U(x_j)\ket{0^n}\right|^2,
\end{equation}
we can run the quantum circuit $U^\dag(x_i)U(x_j)$ on input $\ket{0^n}$, and measure the probability of the~$0^n$~output.

\begin{algorithm}[H]
\caption{Support vector machine with quantum kernel estimation ($\SVM$-$\QKE$ training)}\label{alg:qketraining}
\textbf{Input:} a training set $S=\{(x_i,y_i)\}_{i=1}^m$\\
\textbf{Output:} $\alpha_1,\ldots,\alpha_m$ (solution to the $L2$ soft margin dual program~\eqref{eq:L2dualnobiasvectorized})
\begin{algorithmic}[1]
\For {$i=1\dots m$}
    \State $K_0(x_i,x_i):=1$
\EndFor
\For {$i=1\dots m$}\Comment{quantum kernel estimation}
    \For {$j=i+1\dots m$}
        \State Apply $U^\dag(x_i)U(x_j)$ on input $\ket{0^n}$
        \State Measure the output probability of $0^n$ with $R$ shots, denoted as $p$
        \State $K_0(x_i,x_j)=K_0(x_j,x_i):=p$
\EndFor\EndFor
\State $K:=\frac{1}{2}\left(K_0+\mathbf{1}_{m\times m}\right)$\Comment{$\SVM$ training}
\State Run the dual program~\eqref{eq:L2dualnobiasvectorized}, record solution as $\alpha$
\State\textbf{Return} $\alpha$
\end{algorithmic}
\end{algorithm}

\begin{algorithm}[H]
\caption{Support vector machine with quantum kernel estimation ($\SVM$-$\QKE$ testing)}\label{alg:qketesting}
\textbf{Input:} a new example $x\in \calX$, a training set $S=\{(x_i,y_i)\}_{i=1}^m$, training parameters $\alpha_1,\ldots,\alpha_m$\\
\textbf{Output:} $y\in \pmset{}$
\begin{algorithmic}[1]
\State $t:=0$
\For {$i=1\dots m$}\Comment{quantum kernel estimation}
    \State Apply $U^\dag(x)U(x_i)$ on input $\ket{0^n}$
    \State Measure the output probability of $0^n$ with $R$ shots, denoted as $p$
    \State $t:=t+\alpha_i y_i\cdot \frac{p+1}{2}$
\EndFor
\State\textbf{Return} $\sign(t)$
\end{algorithmic}
\end{algorithm}

We describe the full support vector machine algorithm with quantum kernel estimation ($\SVM$-$\QKE$), with training (Algorithm~\ref{alg:qketraining}) and testing (Algorithm~\ref{alg:qketesting}) phases. Here, $\mathbf{1}$ denotes the all-one matrix, and $R$ denotes the number of measurement shots for each kernel estimation circuit. 

The main differences between $\QKE$ and classical kernels are two-fold:
\begin{itemize}
    \item On the one hand, quantum feature maps are more expressive than classical feature maps. Therefore, a $\SVM$ trained with a quantum kernel may achieve better performance than with classical kernels.
    \item On the other hand, a fundamental feature in $\QKE$ is that we only have a noisy estimate of the quantum kernel entries in both training and testing, due to finite sampling error in experiment. More specifically, for each $K_0(x_i,x_j)$ as defined in Algorithm~\ref{alg:qketraining}, we have access to a noisy estimator $p$ with mean equals $K_0(x_i,x_j)$ and variance $\frac{1}{R}$.
\end{itemize}
Therefore, noise robustness is an important property for provable quantum advantage with $\QKE$. We will formally prove this in the next section.

\section{Efficient learnability with quantum kernel estimation}\label{app:qkelearnability}

\subsection{Quantum feature map}\label{app:qfeaturemap}
We now define our quantum feature map for learning the concept class $\mc C$ based on the discrete logarithm problem (recall Definition~\ref{def:conceptclass}). This family of states, whose construction is based on the discrete logarithm problem, was first introduced in Ref.~\cite{Aharonov2007adiabatic} to study the complexity of quantum state generation and statistical zero knowledge. 

For a prime $p$, let $n=\lceil\log_2 p\rceil$ be the number of bits needed to represent $\{0,1,\dots,p-1\}$. 
For $y\in \ints_p^*$, $k\in\{1,2,\dots,n-1\}$, define a polynomial-sized classical circuit family $\{C_{y,k}\}$ as follows:
\begin{nalign}
&C_{y,k}:\{0,1\}^k\to \{0,1\}^n,\\
&C_{y,k}(i)=y\cdot g^i\pmod p,\forall i\in\{0,1\}^k.
\end{nalign}
It's easy to see that $C_{y,k}$ is injective, i.e.,  for all $ i\neq j$, we have $C_{y,k}(i)\neq C_{y,k}(j)$. Furthermore, $C_{y,k}(i)$ can be computed using $\mc O(n)$ multiplications within $\ints_p^*$. We now show how to prepare a uniform superposition over the elements of $C_{y,k}$ on a quantum computer, which we refer to as the $n$-qubit feature state
\begin{nalign}
\ket{C_{y,k}}=\frac{1}{\sqrt{2^k}}\sum_{i\in\{0,1\}^k}\ket{y\cdot g^i}.
\end{nalign}
 First construct the reversible extension of $C_{y,k}$ as
\begin{nalign}
    &\tilde{C}_{y,k}:\{0,1\}^{2n}\to \{0,1\}^{2n},\\
    &\tilde{C}_{y,k}\ket{i}\ket{0^n}=\ket{i}\ket{C_{y,k}(i)},
\end{nalign}
where $\ket{i}$ uses the least significant bits of the $n$-bit register. Then, construct a quantum circuit~$U_y$ using the quantum algorithm for discrete log~\cite{Shor1997polynomial,Mosca2004exact,gidney2019factor} which uses $\Tilde{\mc O}(n^3)$ gates (we use $\Tilde{\mc O}(\cdot)$ to hide $\polylog$ factors), 
\begin{nalign}
    U_y\ket{C_{y,k}(i)}\ket{0}=\ket{i}\ket{C_{y,k}(i)}.
\end{nalign}
The overall procedure for preparing $\ket{C_{y,k}}$ is as follows, up to adding/discarding auxiliary qubits:
\begin{nalign}
    \ket{0^n}&\xrightarrow{H^{\otimes k}}\frac{1}{\sqrt{2^k}}\sum_{i\in\{0,1\}^k}\ket{i}\xrightarrow{\tilde{C}_{y,k}}\frac{1}{\sqrt{2^k}}\sum_{i\in\{0,1\}^k}\ket{i}\ket{C_{y,k}(i)}\xrightarrow{U_y^\dag}\frac{1}{\sqrt{2^k}}\sum_{i\in\{0,1\}^k}\ket{C_{y,k}(i)}.
\end{nalign}
\begin{definition}\label{def:featurestate}
Define the family of \emph{feature states} via the map $(y,k)\to\ket{C_{y,k}}$ that takes classical data $y\in \ints_p^*,k\in\{1,2,\dots,n-1\}$ and maps it to a $n$-qubit feature state 
\begin{nalign}
\ket{C_{y,k}}=\frac{1}{\sqrt{2^k}}\sum_{i\in\{0,1\}^k}\ket{y\cdot g^i}.
\end{nalign}
Such a procedure can be implemented in $\BQP$ using $\Tilde{\mc O}(n^3)$ gates.
\end{definition}

In Ref.~\cite{Aharonov2007adiabatic}, it was proven that constructing these feature states is as hard as solving discrete log, where they show that $\DLP_{1/6}$ can be reduced to estimating the inner product between $\ket{C_{y,k}}$ with different $y$ and $k$. In this work, our quantum kernel is constructed via the feature map with a fixed~$k$, which is chosen prior to running the quantum kernel estimation algorithm. More specifically, for different training samples $y,y'\in\ints_p^*$, the corresponding kernel entry is given by
\begin{equation}
    K_0(y,y')=\left|\braket{C_{y,k}}{C_{y',k}}\right|^2.
\end{equation}

We note that these feature states have a special structure: after taking the discrete log for each basis state, the feature state becomes the superposition of an interval. Therefore, computing the inner product between feature states is equivalent to computing the intersection between their corresponding intervals. We provide more details in the following definition.

\begin{definition}[Interval states]
For a fixed $g,p$, suppose $y= g^x$. The feature state can be written as $\ket{C_{y,k}}=\frac{1}{\sqrt{2^k}}\sum_{i=0}^{2^k-1}\ket{g^{x+i}}$. This can be understood as ``interval states" in the log space, where the exponent spans an interval $[x,\dots,x+2^k-1]$ of length $2^k$. As a consequence, since $y=g^x$ is a one-to-one mapping, computing the inner products between feature states is equivalent to computing intersection of the corresponding intervals.
\end{definition}

By definition, our kernel $K_0$ is constructed using interval states with a fixed length. In order for our quantum algorithm to solve a classically hard problem, a necessary condition is that the kernel entries $K_0(y,y')$ cannot be efficiently estimated up to additive error. Otherwise, the quantum kernel estimation procedure can be efficiently simulated classically. Next we show that estimating the kernel entries is as hard as solving the discrete log problem. Although this is implied by our main results (Theorem~\ref{thm:conceptclasshardness} and \ref{thm:qkelearnability}), here we give a direct proof which is a generalization of Ref.~\cite{Aharonov2007adiabatic}.
\begin{lemma}\label{lemma:kernelestimationhardness}
For an arbitrary (fixed) prime $p$ and generator $g$, if there exists a polynomial time algorithm such that, on input $y,y'\in \ints_p^*$, computes $K_0(y,y')$ up to 0.01 additive error, then there exists a polynomial time algorithm for $\DLP(p,g)$.
\end{lemma}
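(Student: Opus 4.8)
The plan is to reduce $\DLP(p,g)$ to the promise problem $\DLP_c(p,g)$ for a suitable constant $c$ (say $c = 1/6$, following~\cite{Aharonov2007adiabatic}), and then solve $\DLP_c$ using a black-box subroutine that estimates kernel entries $K_0(y,y')$ to additive error $0.01$. By Lemma~\ref{lemma:dlpreducetopromise} it suffices to give a polynomial-time algorithm for $\DLP_c(p,g)$, so the whole task is to turn kernel estimates into a decision for the promise problem.

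The key observation is the interval-state picture. Writing $y = g^x$, the feature state $\ket{C_{y,k}}$ is the uniform superposition over the interval $[x, x+2^k-1]$ in log space (indices taken mod $p-1$), so for two inputs $y = g^x$, $y' = g^{x'}$ the kernel entry is
\begin{equation}
K_0(y,y') = \left|\braket{C_{y,k}}{C_{y',k}}\right|^2 = \left(\frac{|I_x \cap I_{x'}|}{2^k}\right)^2,
\end{equation}
where $I_x = [x,x+2^k-1]$. In particular $K_0(y,y')$ depends only on the cyclic distance between $x$ and $x'$: it equals $1$ when $x = x'$, decreases linearly to $0$ as the intervals slide apart, and stays $0$ once the gap exceeds $2^k$. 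So a kernel estimate against a \emph{known} reference point $y_0 = g^{x_0}$ (which we can always produce, since we may pick $x_0$ and compute $g^{x_0}$) gives us a coarse estimate of the cyclic distance $|x - x_0|$, up to the two-fold ambiguity of which side of $x_0$ the point $x$ lies on.

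On input $y$ with the promise that $\log_g y$ lies in $[1,c(p-1)]$ or in $[\tfrac{p-1}{2}+1,\tfrac{p-1}{2}+c(p-1)]$, we want to decide which. First I would choose $k$ so that $2^k$ is comparable to $c(p-1)$ — roughly $2^k \approx \tfrac{1}{2}c(p-1)$ — so that the two promise intervals are each of length at most a constant multiple of $2^k$ but the two intervals are separated by a gap of order $p$, far more than $2^k$. Then I would query the kernel between $y$ and a small set of reference points $y_0 = g^{x_0}$ for a few well-chosen $x_0$ (for instance $x_0$ near the centers and endpoints of the two candidate intervals). Since the two promise intervals are separated by distance $\gg 2^k$, the interval of $y$ overlaps reference intervals centered near the correct half and is disjoint from those centered near the wrong half; the corresponding $K_0$ values differ by a constant, so the $0.01$-accurate estimates distinguish the two cases. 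One then outputs $-1$ or $+1$ according to which half the estimates are consistent with. This runs in polynomial time: preparing reference states is in $\BQP$ by Definition~\ref{def:featurestate}, only $\mathcal{O}(1)$ kernel calls are made, and each call is a polynomial-time subroutine by hypothesis. Feeding this $\DLP_c$ solver into Lemma~\ref{lemma:dlpreducetopromise} yields a polynomial-time algorithm for $\DLP(p,g)$.

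The main obstacle is the choice of $k$ and the set of reference points: $2^k$ must be large enough that a single query resolves the promise intervals with a constant margin in $K_0$ (so the $0.01$ error is harmless), yet $k \le n-1$ must hold and the intervals must not wrap around $\ints_{p-1}$ in a way that creates spurious overlaps — this forces $c$ to be a sufficiently small constant and requires care when the promise interval endpoints are near $x_0$ (where $K_0$ is in its linear regime rather than pinned at $0$ or $1$). Handling this cleanly is exactly what~\cite{Aharonov2007adiabatic} does for $\DLP_{1/6}$, and I would follow that argument, adapting it to our fixed-$k$ kernel and to the additive-error (rather than exact) access model; the additive error only costs us a constant factor in how large the separation margin needs to be, which is absorbed into the choice of $c$.
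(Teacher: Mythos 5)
Your proposal is correct and follows essentially the same route as the paper: reduce $\DLP$ to the promise problem $\DLP_c$ via Lemma~\ref{lemma:dlpreducetopromise}, then decide the promise using the interval-overlap picture $K_0(y,y') = (|I_x \cap I_{x'}|/2^k)^2$ against a reference point $y_0 = g^{x_0}$ with known exponent, exploiting a constant gap in $K_0$ to absorb the $0.01$ additive error. One small clarification: you initially frame $k$ as a parameter you get to choose, but $k$ is fixed by the kernel $K_0$ itself (the paper takes $k = n-3$ and notes it generalizes to $k = n - t\log n$); the free parameter is $c$, which you must shrink to fit the given $k$ — you do acknowledge this at the end, and the paper's concrete instantiation is $c = 1/16$ with the single reference $y_0 = g^{(p+1)/2}$, which is slightly leaner than your multi-reference sketch but equivalent in substance.
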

\begin{proof}
We show this lemma by using an algorithm that estimates the kernel entries well to solve $\DLP_{\frac{1}{16}}(p,g)$ which in turn (by Lemma~\ref{lemma:dlpreducetopromise}) implies an efficient algorithm for $\DLP(p,g)$. In the following we assume $k=n-3$, but the proof can be generalized to any $k=n-t\log n$ for some constant $t$.

Consider an input $y=g^x$ for the problem $\DLP_{\frac{1}{16}}(p,g)$, where we are promised that either $x\in[1,\frac{p-1}{16}]$ or $x\in[\frac{p-1}{2}+1,\frac{p-1}{2}+\frac{p-1}{16}]$. Let $y'=g^{(p+1)/2}$ and consider feature states $\ket{C_y}$ and~$\ket{C_{y'}}$. Then for the two cases,
\begin{enumerate}
    \item If $x\in[1,\frac{p-1}{16}]$, $\ket{C_y}$ corresponds to a subinterval of $[1,\frac{p-1}{2}]$ and therefore $K_0(y,y')=0$.
    \item If $x\in[\frac{p-1}{2}+1,\frac{p-1}{2}+\frac{p-1}{16}]$, the intersection of the corresponding intervals is at least $\frac{p}{16}$, so $K_0(y,y')\geq \frac{1}{16}$.
\end{enumerate}
Therefore, an algorithm that can approximate $K_0(y,y')$ within $0.01$ additive error can decide the promise problem $\DLP_{\frac{1}{16}}(p,g)$. Lemma~\ref{lemma:dlpreducetopromise} now shows the lemma statement.
\end{proof}

\subsection{Mapping to high dimensional Euclidean space}
Now we are ready to apply the feature map in our support vector machine algorithm using quantum kernel estimation. We recall the definition of $\mc C$ (Definition~\ref{def:conceptclass}) here for convenience: $\mc C=\{f_s\}_{s\in \ints_p^*}$, where
\begin{equation}
    f_s(x)=\begin{cases}+1,&\text{if }\log_g x\in[s,s+\frac{p-3}{2}],\\ -1, &\text{else.}\end{cases}
\end{equation}
Consider the mapping from $x\in\ints_p^*$ to the quantum feature states described in the previous~section (renamed here as $\ket{\phi(x)}$),
\begin{equation}
    x\to\ket{\phi(x)}=\frac{1}{\sqrt{2^k}}\sum_{i=0}^{2^k-1}\ket{x\cdot g^i},
\end{equation}
where $k=n-t\log n$ for some constant $t$ to be specified later (recall that $n=\lceil\log_2 p\rceil$). 
It was shown in Definition~\ref{def:featurestate} that $\ket{\phi(x)}$ can be prepared in $\BQP$. Let $\Delta=\frac{2^{k+1}}{p}=\mc O(n^{-t})$. Then the feature states span a $\frac{\Delta}{2}=\mc O(n^{-t})$ fraction of the elements in $\ints_p^*$.

Also define the \emph{halfspace state} $\ket{\phi_s}$ corresponding to every concept $c_s\in \Cc$ as follows
\begin{equation}
    \ket{\phi_s}=\frac{1}{\sqrt{(p-1)/2}}\sum_{i=0}^{(p-3)/2}\ket{g^{s+i}},\,\,\,\,\text{ for every }s\in \ints_p^*.
\end{equation}
Observe that the halfspace state spans a $\frac{1}{2}$-fraction of the full space $\ints_p^*$. The following property shows that $\ket{\phi_s}$ is a separating hyperplane in Hilbert space.
\begin{itemize}
    \item $|\braket{\phi_s}{\phi(x)}|^2=\Delta$, for $1-\Delta$ fraction of $x$ in $\{x:f_s(x)=+1\}$.
    \item $|\braket{\phi_s}{\phi(x)}|^2=0$, for $1-\Delta$ fraction of $x$ in $\{x:f_s(x)=-1\}$.
\end{itemize}

Notice that $\ket{\phi_s}$ has a \emph{large margin} property: it separates training samples with label $+1$ from those with label $-1$. The probability of having an outlier (a data point that lies inside the margin or on the wrong side of the hyperplane) is small, which equals $\Delta=1/\poly(n)$. Recall that the goal of an $\SVM$ algorithm is to find a hyperplane that maximizes the margin on the training set, and the above property shows that such a good hyperplane exists.

In general, learning a separating hyperplane that separates $+1/-1$ examples is called a \emph{halfspace learning} problem. Rigorously speaking, our data vectors are represented by quantum states with the Hilbert-Schmidt inner product. For simplicity, we now show that our learning problem is equivalent to learning a halfspace in $4^n$-dimensional Euclidean space. For that, we first express a Hermitian matrix $W\in \mathbb{C}^{2^n\times 2^n}$ uniquely in terms of the orthonormal Pauli basis as follows

\begin{equation}
    W=\frac{1}{\sqrt{2^n}}\sum_{p\in\{0,1,2,3\}^n}w_p\sigma_p,
\end{equation}
where $\sigma_p\in\{\textsf{I},\textsf{X},\textsf{Y},\textsf{Z}\}^{\otimes n}$ are $n$-qubit Pauli operators and $w_p=\frac{1}{\sqrt{2^n}}\Tr[\sigma_p W]\in\mathbb{R}$ are the \emph{Fourier coefficients}. We can use the $4^n$ dimensional \emph{Pauli vector} $w=(w_p)$ to represent $W$, as the Hilbert-Schmidt inner product $\langle W,W'\rangle=\langle w,w'\rangle$ is equivalent to the Euclidean inner product. Also note that a pure quantum state in Hilbert space corresponds to a unit Pauli vector in Euclidean space.

The large margin property can be recast in Euclidean space with the Pauli basis representation.
\begin{lemma}\label{lemma:largemargin}
    For any concept $f_s\in\mc C$, let $w_s$ be the Pauli vector of $\ketbra{\phi_s}$, $b=-\frac{\Delta}{2}$, and $\hat{x}$ be the Pauli vector of $\ketbra{\phi(x)}$. Then
    \begin{itemize}
        \item $\langle w_s,\hat{x}\rangle + b =\frac{\Delta}{2}$, for $1-\Delta$ fraction of $x$ in $\{x:f_s(x)=+1\}$,
        \item $\langle w_s,\hat{x}\rangle + b =-\frac{\Delta}{2}$, for $1-\Delta$ fraction of $x$ in $\{x:f_s(x)=-1\}$,
    \end{itemize}
    where $\langle\cdot,\cdot\rangle$ denotes Euclidean inner product.
\end{lemma}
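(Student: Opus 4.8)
The plan is to read Lemma~\ref{lemma:largemargin} as nothing more than the Euclidean (Pauli-vector) restatement of the separating-hyperplane property listed immediately above it, so that the proof splits into (i) the isometry between the Hilbert--Schmidt inner product on Hermitian operators and the Euclidean inner product on their Pauli vectors, and (ii) the interval-overlap calculation that produced the values $\Delta$ and $0$ in that property.

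First I would record the basic fact behind the Pauli-vector encoding: since $\{\sigma_p/\sqrt{2^n}\}_{p\in\{0,1,2,3\}^n}$ is an orthonormal basis of the real vector space of Hermitian $2^n\times 2^n$ matrices under the Hilbert--Schmidt inner product, for Hermitian $W,W'$ with Pauli vectors $w,w'$ one has $\langle w,w'\rangle=\Tr[W W']$. Applying this with $W=\ketbra{\phi_s}$ and $W'=\ketbra{\phi(x)}$ gives $\langle w_s,\hat x\rangle=\Tr[\ketbra{\phi_s}\ketbra{\phi(x)}]=|\braket{\phi_s}{\phi(x)}|^2$. Taking $b=-\Delta/2$, the two bullets of the lemma then follow line by line from the two bullets of the property stated before it: on the $1-\Delta$ fraction of $+1$-labeled $x$ with $|\braket{\phi_s}{\phi(x)}|^2=\Delta$ we get $\langle w_s,\hat x\rangle+b=\Delta-\Delta/2=\Delta/2$, and on the $1-\Delta$ fraction of $-1$-labeled $x$ with $|\braket{\phi_s}{\phi(x)}|^2=0$ we get $\langle w_s,\hat x\rangle+b=-\Delta/2$.

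For completeness I would also justify the input property itself, which is where the actual content lies. Writing $x=g^a$, the feature state $\ket{\phi(x)}$ is the uniform superposition over the length-$2^k$ window $[a,a+2^k-1]$ of exponents, and $\ket{\phi_s}$ is the uniform superposition over the length-$\frac{p-1}{2}$ window $[s,s+\frac{p-3}{2}]$; because $y\mapsto\log_g y$ is a bijection on $\ints_p^*$, $|\braket{\phi_s}{\phi(x)}|^2$ equals the squared size of the intersection of these two exponent-windows divided by $2^k\cdot\frac{p-1}{2}$. If $f_s(x)=+1$, i.e.\ $a\in[s,s+\frac{p-3}{2}]$, then unless $a$ falls within $2^k-1$ of the right endpoint of that window --- an event covering only an $\mc O(2^k/p)=\Delta$ fraction of the $+1$-labeled inputs --- the whole window $[a,a+2^k-1]$ sits inside $[s,s+\frac{p-3}{2}]$, the intersection has size $2^k$, and $|\braket{\phi_s}{\phi(x)}|^2=\frac{2^k}{(p-1)/2}=\Delta$; symmetrically, if $f_s(x)=-1$ then $a$ lies in the complementary half-window and, outside an $\mc O(\Delta)$ fraction of those inputs near its right endpoint, the two windows are disjoint and the overlap vanishes.

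The main obstacle is bookkeeping rather than anything conceptual: the exponent arithmetic is modulo $p-1$, so one must treat the wrap-around at window endpoints carefully to certify that the exceptional set has relative measure at most $\Delta$, and one must be consistent about whether $\Delta$ denotes $2^{k+1}/p$ or $2^{k+1}/(p-1)$ --- these agree up to a $1+\mc O(1/p)$ factor, but the constant in the margin $\Delta/2$ is what feeds the downstream generalization bounds, so it is worth tracking exactly. Once the window combinatorics are nailed down, the Pauli-vector translation of the first two paragraphs closes the proof.
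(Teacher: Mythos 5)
Your proposal is correct and matches the paper's (largely implicit) argument: the paper states the lemma as a direct Pauli-vector restatement of the preceding bullet-point overlap property, and your Hilbert--Schmidt-to-Euclidean translation $\langle w_s,\hat x\rangle=\Tr[\ketbra{\phi_s}\ketbra{\phi(x)}]=|\braket{\phi_s}{\phi(x)}|^2$ together with the interval-intersection calculation in log-space is exactly the content being invoked. You are also right to flag the $p$ versus $p-1$ bookkeeping: strictly, the overlap equals $\frac{2^{k+1}}{p-1}$ rather than the paper's $\Delta=\frac{2^{k+1}}{p}$, and the good fraction is $1-\frac{2^{k+1}-2}{p-1}$ rather than exactly $1-\Delta$; these discrepancies are $1+\mc O(1/p)$ multiplicative and harmless for the downstream $\Tilde{\mc O}(m^{-1/3})$ bound, but the lemma's stated equalities are only exact if $\Delta$ is redefined with denominator $p-1$.
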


Our $\SVM$ algorithm that uses the kernel $K_0(x,x')=\left|\braket{\phi(x)}{\phi(x')}\right|^2$ can be equivalently understood as using the kernel $K_0(x,x')=\langle \hat{x},\hat{x}'\rangle$ based on a feature map that maps $x\in\ints_p^*$ to $\hat{x}$, a $4^n$ dimensional vector in Euclidean space. Finally, recall that we only have access to a noisy estimate of $K_0(x,x')$ using quantum kernel estimation. The noisy estimator that we obtain from a quantum computer has mean $K_0(x,x')$ and variance $\frac{1}{R}$, where $R$ denotes the number of measurement shots for each kernel estimation circuit.

To summarize, here we have shown that the original problem of learning the concept class $\Cc$ can be mapped to a noisy halfspace learning problem~\footnote{Note that here the halfspace learning problem is defined in feature space; our original concept class is not a halfspace learning problem. Also, in our case the noise is defined in terms of additive error in the kernel, which is different from the well-studied question of \emph{learning halfspaces with noise} in computational learning theory.} in high dimensional Euclidean space with the following properties. For simplicity, below we do not specify the concept, since everything holds equivalently for each concept in $\Cc$. From now on our analysis is restricted to the high dimensional Euclidean space, and for notation simplicity we overload $x$ to represent the Pauli vector $\hat{x}$.

\vspace{0.2cm}
\noindent\textbf{Properties of noisy halfspace learning.}
\begin{enumerate}
    \item \emph{Data space}: $\calX\subseteq \mathbb{R}^{4^n}$ with unit length $\|x\|_2=1$ for every $x\in \calX$. Each $x$ is associated with a label $y\in \pmset{}$.
    \item \emph{Separability}: the data points lie outside a margin of $\frac{\Delta}{2}$ with high probability over the uniform distribution on $\calX$. 
    That is, there exists a hyperplane $(w,b)$ where $w\in \mathbb{R}^{4^n}$, $\|w\|_2=1$, and $b\in\mathbb{R}$, such that
    \begin{equation}
        \Pr_{x\sim \calX}\left[y(\langle w,x\rangle+b)\geq\frac{\Delta}{2}\right]=1-\Delta.
    \end{equation}
    \item \emph{Bounded distance}: all data points are close to the above hyperplane:
    \begin{equation}
        \left|y(\langle w,x\rangle+b)\right|\leq\frac{\Delta}{2},\,\,\,\,\text{ for every } x\in \calX.
    \end{equation}
    \item \emph{Noisy kernel}: instead of having the ideal kernel $K_0(x_i,x_j)=\langle x_i,x_j\rangle$, we have access to a noisy kernel $K_0'$, where $K_0'(x_i,x_j)=K_0(x_i,x_j)+e_{ij}$. Here, $e_{ij}$ are independent random variables satisfying
    \begin{itemize}
        \item $e_{ij}\in[-1,1]$
        \item $\E [e_{ij}]=0$
        \item $\Var[e_{ij}]\leq \frac{1}{R}$, where $R$ denotes the number of measurement shots.
    \end{itemize}
\end{enumerate}
These properties are simple corollaries of the definition of $\ket{\phi(x)}$, $\ket{\phi_s}$ and Lemma~\ref{lemma:largemargin}.

In order to further simplify our analysis, we perform an additional transform which allows us to remove the bias parameter $b$ without loss of generality. This will help us simplify our notations in later proofs. More specifically, we replace $x$ with $(x,1)/\sqrt{2}$ and $w$ with $(w,b)/\sqrt{w^2+b^2}$. This corresponds to replacing the original kernel $K_0$ with a new kernel $K=\frac{1}{2}\left(K_0+\mathbf{1}_{m\times m}\right)$ where $\mathbf{1}$ denotes the all-one matrix. These steps are explained in more detail in Section~\ref{section:tutorial}. 
The final form of our halfspace learning problem is given below.

\begin{lemma}\label{lemma:noisyhalfspacelearning}
    We have mapped the original problem of learning the concept class $\mc C$ into the following noisy halfspace learning problem. Below we do not specify the concept, as these properties hold for every concept in $\mc C$.
    \begin{enumerate}
    \item \emph{Data space:} $\calX\subseteq \mathbb{R}^{4^n+1}$ with unit length $\|x\|_2=1$, $\forall x\in \calX$. Each $x$ is associated with a label $y\in \pmset{}$.
    \item \emph{Separability:} the data points lie outside a $\mc O(\Delta)$ margin with high probability over the uniform distribution. That is, there exists a hyperplane $w$ where $w\in \mathbb{R}^{4^n+1}$, $\|w\|_2=1$, such that
    \begin{equation}
        \Pr_{x\sim \calX}\left[y\langle w,x\rangle\geq\frac{\Delta}{\sqrt{8+2\Delta^2}}\right]=1-\Delta.
    \end{equation}
    \item \emph{Bounded distance:} all data points are close to the above hyperplane:
    \begin{equation}
        \left|y\langle w,x\rangle\right|\leq\frac{\Delta}{\sqrt{8+2\Delta^2}},\,\,\,\,\text{ for every } x\in \calX.
    \end{equation}
    \item \emph{Noisy kernel:} let $K_{ij}=\frac{1}{2}\left(1+K_0(x_i,x_j)\right)$. We have access to a noisy kernel $K'$, where $K_{ij}'=K_{ij}+e_{ij}$. Here, $e_{ij}$ are independent random variables satisfying
    \begin{itemize}
        \item $e_{ij}\in[-1/2,1/2]$
        \item $\E [e_{ij}]=0$
        \item $\Var[e_{ij}]\leq \frac{1}{R}$, where $R$ denotes the number of measurement shots.
    \end{itemize}
\end{enumerate}
\end{lemma}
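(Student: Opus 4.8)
The plan is to derive Lemma~\ref{lemma:noisyhalfspacelearning} in two moves. First I would establish the unnormalized-with-bias version of all four items --- these are precisely the ``Properties of noisy halfspace learning'' listed just above and, as indicated there, follow directly from Lemma~\ref{lemma:largemargin} together with the definitions of $\ket{\phi(x)}$ and $\ket{\phi_s}$. Second I would push each property through the homogenization map $x\mapsto\tilde x=(x,1)/\sqrt2$, $w\mapsto\tilde w=(w,b)/\sqrt{1+b^2}$ that absorbs the bias $b$ into the weight vector, keeping track of how every constant transforms.

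For the first move: let $w_s$ be the Pauli vector of $\ketbra{\phi_s}$ and $\hat x$ that of $\ketbra{\phi(x)}$. Both are pure-state projectors, hence have unit Hilbert--Schmidt norm, so $\|w_s\|_2=\|\hat x\|_2=1$ (item~1 in $\mathbb{R}^{4^n}$). Lemma~\ref{lemma:largemargin} with $b=-\Delta/2$ gives $y(\langle w_s,\hat x\rangle+b)=\Delta/2$ for a $1-\Delta$ fraction of $x$ (separability), and from the interval picture $\langle w_s,\hat x\rangle=|\braket{\phi_s}{\phi(x)}|^2\in[0,\Delta]$, since the interval underlying $\ket{\phi(x)}$ has length $2^k$ and can meet the halfspace interval $I_s$ in at most $2^k$ points; hence $|\langle w_s,\hat x\rangle+b|\le\Delta/2$ for \emph{every} $x$ (bounded distance). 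For the noise, the $\QKE$ circuit for a pair $(x_i,x_j)$ with $i\neq j$ outputs the empirical $0^n$-frequency over $R$ shots, an average of $R$ i.i.d.\ Bernoulli$(K_0(x_i,x_j))$ variables; this equals $K_0(x_i,x_j)+e_{ij}$ with the $e_{ij}$ independent across distinct pairs, $e_{ij}\in[-1,1]$, $\E[e_{ij}]=0$, $\Var[e_{ij}]\le 1/R$, while the diagonal entries are hard-coded to $1$ and carry no error (see Algorithm~\ref{alg:qketraining}).

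For the second move: $\|\tilde x\|_2^2=(\|x\|_2^2+1)/2=1$, so item~1 holds in $\mathbb{R}^{4^n+1}$, and $\langle\tilde w,\tilde x\rangle=(\langle w,x\rangle+b)/\sqrt{2(1+b^2)}$. With $b=-\Delta/2$ one has $\sqrt{2(1+b^2)}=\sqrt{2+\Delta^2/2}=\tfrac12\sqrt{8+2\Delta^2}$, so both the separability threshold and the uniform distance bound, equal to $\Delta/2$ beforehand, rescale to $\Delta/\sqrt{8+2\Delta^2}$ --- exactly the claimed form, giving items~2 and~3. Homogenization also replaces $K_0$ by $K_{ij}=\tfrac12(1+K_0(x_i,x_j))$, so the noisy kernel becomes $K_{ij}'=\tfrac12(1+K_0(x_i,x_j)+e_{ij})=K_{ij}+\tfrac12 e_{ij}$; the new error $\tfrac12 e_{ij}$ lies in $[-1/2,1/2]$, has mean zero, and variance at most $1/(4R)\le 1/R$, which is item~4. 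There is no substantive obstacle; the only points needing care are the constant-chasing through the normalization (getting the $\sqrt{8+2\Delta^2}$ denominator right) and the interval-intersection bound $|\braket{\phi_s}{\phi(x)}|^2\le\Delta$, since it is this \emph{uniform} estimate --- not merely the statement on a $1-\Delta$ fraction --- that yields the bounded-distance property for all $x\in\calX$, which the robustness analysis in the next section will use.
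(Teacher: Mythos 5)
Your proposal is correct and follows the paper's own route exactly: establish the unnormalized-with-bias properties as corollaries of Lemma~\ref{lemma:largemargin} (together with the interval picture of $\ket{\phi(x)}$ and $\ket{\phi_s}$, and the $\QKE$ sampling model from Algorithm~\ref{alg:qketraining}), then apply the bias-absorbing map $x\mapsto(x,1)/\sqrt2$, $w\mapsto(w,b)/\sqrt{\|w\|^2+b^2}$ as described in Section~\ref{section:tutorial}. Your constant-chasing ($\sqrt{2(1+\Delta^2/4)}=\tfrac12\sqrt{8+2\Delta^2}$, and the halved error lying in $[-1/2,1/2]$ with variance $\le 1/(4R)$) and your emphasis on the \emph{uniform} interval-intersection bound $|\braket{\phi_s}{\phi(x)}|^2\le\Delta$ are the right details, and they simply spell out what the paper compresses into the remark that the properties are ``simple corollaries'' of the preceding definitions.
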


The hyperplane specified in the above lemma is particularly useful for our analysis later. We define its unnormalized version as the ``ground truth hyperplane" as follows.

\begin{definition}\label{def:groundtruth}
Consider the halfspace learning problem defined in Lemma~\ref{lemma:noisyhalfspacelearning}. Define $w^*\in\mathbb{R}^{4^n +1}$ as the (unnormalized) ground truth hyperplane as given in Lemma~\ref{lemma:noisyhalfspacelearning}, that satisfies the following~properties:
\begin{nalign}
    &\Pr_{x\sim \calX}\left[y\langle w^*,x\rangle\geq 1\right]=1-\Delta,\\
    &|y\langle w^*,x\rangle|\leq 1,\,\,\,\,\text{ for every } x\in \calX.
\end{nalign}
Note that the norm of $w^*$ is $\|w^*\|_2=\mc O(\Delta^{-1})$.
\end{definition}

\subsection{Generalization of the noisy classifier}
Next, we focus on the noisy halfspace learning problem as given by Lemma~\ref{lemma:noisyhalfspacelearning}. We show that the four properties established in Lemma~\ref{lemma:noisyhalfspacelearning} are sufficient for formally proving the efficient learnability of the concept class $\mc C$ using our quantum algorithm.

Consider the primal optimization problem in the support vector machine used by Algorithm~\ref{alg:qketraining}:
\begin{nalign}\label{eq:L2primal}
     \min_{w,\xi}\,\,\,\,\,\,\,\,&\frac{1}{2}\|w\|_2^2 + \frac{\lambda}{2}\sum_i\xi_i^2\\
     \text{s.t.}\,\,\,\,\,\,\,\,&y_i\langle x_i,w\rangle\geq 1-\xi_i\\
     &\xi_i\geq 0
\end{nalign}
with the dual form
\begin{nalign}\label{eq:L2dual}
     \max_{\alpha}\,\,\,\,\,\,\,\,&\sum_i\alpha_i-\frac{1}{2}\sum_{i,j}\alpha_i\alpha_j y_i y_j K_{ij}-\frac{1}{2\lambda}\sum_i\alpha_i^2\\
     \text{s.t.}\,\,\,\,\,\,\,\,&\alpha_i\geq0.
\end{nalign}
The above duality follows from the KKT conditions $w=\sum_i \alpha_i y_i x_i$ and $\alpha_i=\lambda\xi_i$. The kernel matrix $K$ is a positive semidefinite matrix. Let $Q$ be a matrix such that $Q_{ij}=y_iy_jK_{ij}$, which is also positive semidefinite. Then the dual program \eqref{eq:L2dual} is equivalent to the following convex quadratic program:
\begin{nalign}\label{eq:quadraticprogram}
     \min_{\alpha}\,\,\,\,\,\,\,\,&\frac{1}{2}\alpha^T\left(Q+\frac{1}{\lambda}\mathbb{I}\right)\alpha-1^T\alpha\\
     \text{s.t.}\,\,\,\,\,\,\,\,&\alpha\geq0.
\end{nalign}

Recall that we have small additive perturbations in $K$, which in turn gives additive perturbations in $Q$. One useful property of the dual program \eqref{eq:quadraticprogram} is that it is robust to perturbations in $Q$. More specifically, we use the following lemma from standard perturbation analysis.

\begin{lemma}[{\cite[Theorem 2.1]{Daniel1973}}]\label{lemma:quadraticprogramrubostness}
Let $x_0$ be the solution to the quadratic program
\begin{nalign}\label{eq:robustquadraticprogram}
     \min_{x}\,\,\,\,\,\,\,\,&\frac{1}{2}x^T K x-c^T x\\
     \text{s.t.}\,\,\,\,\,\,\,\,&Gx\leq g\\
     &Dx=d,
\end{nalign}
where $K$ is positive definite with smallest eigenvalue $\lambda>0$. Let $K'$ be a positive definite matrix such that $\|K'-K\|_F\leq \varepsilon<\lambda$. Let $x_0'$ be the solution to \eqref{eq:robustquadraticprogram} with $K$ replaced by $K'$. Then 
\begin{equation}
    \|x_0'-x_0\|_2\leq \frac{\varepsilon}{\lambda-\varepsilon}\|x_0\|_2.
\end{equation}
\end{lemma}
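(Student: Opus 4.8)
The plan is to prove the bound directly from strong convexity and the first-order optimality conditions of constrained convex optimization, using nothing about the constraints beyond convexity of the feasible set; this essentially reproduces the argument of~\cite{Daniel1973}. Write $f(x)=\tfrac12 x^T K x - c^T x$ and $f'(x)=\tfrac12 x^T K' x - c^T x$, and let $C=\{x: Gx\le g,\ Dx=d\}$, which is closed, convex, and (for the problem to make sense) nonempty. Since $K\succeq\lambda I$ and $K'$ is positive definite, both objectives are strongly convex, so $x_0$ and $x_0'$ are uniquely defined, and the standard variational characterization of a constrained minimizer applies: $\langle\nabla f(x_0),\,x-x_0\rangle\ge 0$ and $\langle\nabla f'(x_0'),\,x-x_0'\rangle\ge 0$ for every $x\in C$, where $\nabla f(x)=Kx-c$ and $\nabla f'(x)=K'x-c$.

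The key step is to instantiate the first inequality at $x=x_0'\in C$, the second at $x=x_0\in C$, and add them. The linear terms cancel because $c$ is unperturbed, leaving $\langle Kx_0-K'x_0',\ x_0'-x_0\rangle\ge 0$. Splitting $Kx_0-K'x_0'=K(x_0-x_0')+(K-K')x_0'$ and using $\langle K(x_0'-x_0),\,x_0'-x_0\rangle\ge\lambda\|x_0'-x_0\|_2^2$ — the only place the eigenvalue bound on $K$ enters — this rearranges to
\begin{equation}
\lambda\,\|x_0'-x_0\|_2^2\ \le\ \langle (K-K')x_0',\ x_0'-x_0\rangle.
\end{equation}
By Cauchy--Schwarz together with $\|K-K'\|_2\le\|K-K'\|_F\le\varepsilon$, the right-hand side is at most $\varepsilon\|x_0'\|_2\|x_0'-x_0\|_2$; dividing through (the case $x_0'=x_0$ being trivial) gives $\|x_0'-x_0\|_2\le\tfrac{\varepsilon}{\lambda}\|x_0'\|_2$.

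Finally, to replace $\|x_0'\|_2$ by $\|x_0\|_2$ I would apply the triangle inequality $\|x_0'\|_2\le\|x_0\|_2+\|x_0'-x_0\|_2$ and solve the resulting inequality for $\|x_0'-x_0\|_2$; since $\varepsilon<\lambda$ one may divide by the positive quantity $1-\varepsilon/\lambda$ to obtain $\|x_0'-x_0\|_2\le\tfrac{\varepsilon}{\lambda-\varepsilon}\|x_0\|_2$, as claimed.

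I do not expect a serious obstacle: the content is entirely the variational-inequality trick plus bookkeeping. The points that need care are (i) checking that the feasible set $C$ is literally the same convex set in both problems, so that each optimality inequality can legitimately be tested against the other problem's optimizer; (ii) confirming $x_0'$ is well-defined, for which positive-definiteness of $K'$ (hence coercivity of $f'$) together with feasibility of $C$ suffices; and (iii) using the Frobenius bound only to control the spectral norm $\|K-K'\|_2$, which is what actually appears after Cauchy--Schwarz, so that the hypothesis $\|K'-K\|_F\le\varepsilon$ is not wasted.
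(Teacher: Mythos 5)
Your proof is correct. The paper itself gives no proof of this lemma---it is cited verbatim as Theorem 2.1 of Daniel (1973)---and your argument is the standard variational-inequality derivation of that result: testing the first-order optimality condition of each problem against the other's minimizer, adding so that the common linear term cancels, splitting $Kx_0 - K'x_0' = K(x_0 - x_0') + (K-K')x_0'$, lower-bounding the quadratic form by $\lambda\|x_0'-x_0\|_2^2$, and then upper-bounding via Cauchy--Schwarz and $\|K-K'\|_2 \le \|K-K'\|_F$. The final triangle-inequality step to trade $\|x_0'\|_2$ for $\|x_0\|_2$, and the use of $\varepsilon < \lambda$ to divide by $1 - \varepsilon/\lambda$, are exactly right; the remarks about uniqueness of $x_0'$ from positive-definiteness of $K'$, and about the feasible set $C$ being the same polyhedron in both problems, correctly identify the only places where the hypotheses are actually needed.
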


Before going into the analysis of robustness against noise, notice that the bound in Lemma~\ref{lemma:quadraticprogramrubostness} is multiplicative. Therefore, it is useful to establish an upper bound on $\|\alpha\|_2$, the norm of the solution to the noiseless quadratic program~\eqref{eq:L2dual}. Recall from the KKT conditions that $\alpha_i=\lambda\xi_i$, where the dual variables are directly related to the slack variables in the primal program~\eqref{eq:L2primal}. The following lemma establishes a useful property for $\xi_i^*$ for the ground truth hyperplane.

\begin{lemma}\label{lemma:boundxi}
    For the ground truth hyperplane $w^*$ as defined in Definition~\ref{def:groundtruth}, the corresponding slack variables $\xi_i^*$ in the primal program~\eqref{eq:L2primal} satisfy 
    \begin{equation}
        \E\left[\|\xi^*\|_2^2\right]\leq \mc O(m\Delta),
    \end{equation}
    where the expectation is taken over the training set.
\end{lemma}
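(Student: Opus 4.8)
The plan is to bound $\E[\|\xi^*\|_2^2]$ by analyzing the slack variables induced by the ground truth hyperplane $w^*$ of Definition~\ref{def:groundtruth}. Recall that $w^*$ satisfies $y\langle w^*,x\rangle\ge 1$ for a $1-\Delta$ fraction of $x$, and $|y\langle w^*,x\rangle|\le 1$ for every $x\in\calX$, with $\|w^*\|_2=\mc O(\Delta^{-1})$. For a fixed training sample $x_i$ with label $y_i$, the primal program~\eqref{eq:L2primal} (given the candidate $w^*$) requires $\xi_i^*=\max\{0,\,1-y_i\langle w^*,x_i\rangle\}$. The first step is to observe that $0\le\xi_i^*\le 2$ always: the lower bound is by definition, and the upper bound follows from the bounded-distance property $y_i\langle w^*,x_i\rangle\ge -1$, so $1-y_i\langle w^*,x_i\rangle\le 2$. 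Moreover $\xi_i^*=0$ whenever $x_i$ lies outside the margin, i.e. whenever $y_i\langle w^*,x_i\rangle\ge 1$, which happens with probability $1-\Delta$ over the random draw of $x_i$.

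The second step is to combine these two facts. For each $i$, $\xi_i^{*2}\le 4\cdot\mathbbm{1}[x_i\text{ is an outlier}]$, where an outlier is a point with $y_i\langle w^*,x_i\rangle<1$. Since each $x_i$ is drawn i.i.d.\ uniformly from $\calX$, we have $\Pr[x_i\text{ is an outlier}]=\Delta$, hence $\E[\xi_i^{*2}]\le 4\Delta$. Summing over $i=1,\dots,m$ and using linearity of expectation gives
\begin{equation}
    \E\left[\|\xi^*\|_2^2\right]=\sum_{i=1}^m\E\left[\xi_i^{*2}\right]\le 4m\Delta=\mc O(m\Delta),
\end{equation}
which is the claimed bound.

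I do not expect any serious obstacle here; the statement is essentially a first-moment computation once one identifies the correct feasible primal point (namely $w^*$ together with its minimal slacks). The only mild subtlety worth spelling out is the distinction between $\xi^*$ as ``the slack variables associated with $w^*$'' versus ``the slack variables at the primal optimum'': the lemma should be read as referring to the former (the feasible point $(w^*,\xi^*)$), which is what Definition~\ref{def:groundtruth} and the surrounding text set up, and this is exactly what is needed downstream to upper bound the optimal objective value and hence $\|\alpha\|_2$ via $\alpha_i=\lambda\xi_i$ and the KKT relations. One should also note that the bound requires $\Delta=\mc O(n^{-t})$ to be small enough that it is meaningful, but no such assumption is needed for the inequality itself. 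If one instead wanted the statement for the true optimizer $\xi$ of~\eqref{eq:L2primal}, it would follow immediately from optimality of the objective $\tfrac12\|w\|_2^2+\tfrac{\lambda}{2}\|\xi\|_2^2$ evaluated at the feasible point $(w^*,\xi^*)$, giving $\tfrac{\lambda}{2}\E[\|\xi\|_2^2]\le\tfrac12\|w^*\|_2^2+\tfrac{\lambda}{2}\E[\|\xi^*\|_2^2]=\mc O(\Delta^{-2})+\mc O(m\Delta)$, though that is a weaker bound and presumably not what the lemma intends.
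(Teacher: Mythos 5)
Your proof is correct and follows essentially the same approach as the paper: write $\xi_i^*=\max\{0,1-y_i\langle w^*,x_i\rangle\}$, use the two properties of $w^*$ from Definition~\ref{def:groundtruth} to get $\Pr[\xi_i^*=0]=1-\Delta$ and $\xi_i^*\le 2$, then bound $\E[\xi_i^{*2}]\le 4\Delta$ and sum over $i$. Your closing remark correctly resolves the one ambiguity (the lemma concerns the slacks induced by the feasible point $w^*$, not the primal optimizer), which matches how the paper uses the lemma downstream.
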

\begin{proof}
We can write the slack variables as
\begin{equation}
    \xi_i^*=\max\{1-y_i\langle x_i,w^*\rangle,0\}.
\end{equation}
By the properties given in Definition~\ref{def:groundtruth}, we have
\begin{nalign}
    &\Pr\left[\xi_i^*=0\right]=1-\Delta,\\
    &\xi_i^*\leq 2.
\end{nalign}
Therefore,
\begin{equation}
    \E\left[\|\xi^*\|_2^2\right]= m\E\left[\xi_1^{*2}\right]\leq 4m\Delta.
\end{equation}
\end{proof}

Now we are ready to bound the norm of the dual variables $\alpha_i$. Intuitively, we can do so because $\|\xi\|_2^2$ is part of the training loss in the primal program~\eqref{eq:L2primal}. The loss of the solution returned by the program can only be smaller than the loss of the ground truth hyperplane, which is guaranteed to be small. 

\begin{lemma}\label{lemma:boundalpha}
    Let $\alpha_0$ be the solution returned by the dual program~\eqref{eq:L2dual}. We have
    \begin{equation}
        \E\left[\|\alpha_0\|_2^2\right]=\mc O\left(\frac{1}{\Delta^2}+m\Delta\right),
    \end{equation}
    where the expectation is over the training set.
\end{lemma}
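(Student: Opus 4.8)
The plan is to exploit strong duality together with the KKT stationarity relation $\alpha_i=\lambda\xi_i$ (both established in Section~\ref{section:tutorial}) in order to convert a lower bound on the optimal dual objective in terms of $\|\alpha_0\|_2$ into an upper bound via the primal value, and then to control the primal value by simply plugging the ground-truth hyperplane $w^*$ of Definition~\ref{def:groundtruth} into the primal program~\eqref{eq:L2primal}.

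First I would note that the primal program~\eqref{eq:L2primal} is a convex quadratic program possessing a strictly feasible point (take any $w$ and $\xi_i$ large enough), so Slater's condition holds and strong duality applies: the optimal value $V$ of the dual~\eqref{eq:L2dual} equals the optimal value of~\eqref{eq:L2primal}. Writing $(w_0,\xi_0)$ for a primal optimum, $V=\tfrac12\|w_0\|_2^2+\tfrac{\lambda}{2}\|\xi_0\|_2^2$, and since $\alpha_0=\lambda\xi_0$ at the joint optimum we get $V\ge \tfrac{\lambda}{2}\|\xi_0\|_2^2=\tfrac{1}{2\lambda}\|\alpha_0\|_2^2$, i.e. $\|\alpha_0\|_2^2\le 2\lambda V$. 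Next I would upper bound $V$ by evaluating the primal objective at the ground-truth hyperplane: setting $\xi_i^*=\max\{1-y_i\langle x_i,w^*\rangle,0\}$ makes $(w^*,\xi^*)$ feasible for~\eqref{eq:L2primal}, hence $V\le \tfrac12\|w^*\|_2^2+\tfrac{\lambda}{2}\|\xi^*\|_2^2$. By Definition~\ref{def:groundtruth}, $\|w^*\|_2^2=\mc O(\Delta^{-2})$, and by Lemma~\ref{lemma:boundxi}, $\E[\|\xi^*\|_2^2]\le \mc O(m\Delta)$. Taking expectations over the training set and using that $\lambda$ is a fixed constant,
\[
\E\!\left[\|\alpha_0\|_2^2\right]\le 2\lambda\,\E[V]\le \lambda\|w^*\|_2^2+\lambda^2\,\E\!\left[\|\xi^*\|_2^2\right]=\mc O\!\left(\frac{1}{\Delta^2}+m\Delta\right),
\]
which is the claimed bound.

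I do not expect a genuine obstacle here; this lemma is essentially the one-line consequence of "$\|\xi\|_2^2$ is part of the primal loss" anticipated in the paragraph preceding the statement. The only points requiring (mild) care are: (i) checking that the primal/dual pair~\eqref{eq:L2primal}--\eqref{eq:L2dual} meets the hypotheses for strong duality and that the stationarity identity $\alpha_0=\lambda\xi_0$ holds at the joint optimum, which follows from the KKT analysis already carried out; and (ii) confirming that the slack vector $\xi^*$ used above is exactly the one bounded in Lemma~\ref{lemma:boundxi}, so that its expectation bound can be invoked verbatim. The multiplicative constant $\lambda=\Theta(1)$ is absorbed into the $\mc O(\cdot)$ throughout.
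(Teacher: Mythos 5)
Your proposal is correct and takes essentially the same route as the paper: the KKT identity $\alpha_0=\lambda\xi_0$ ties the dual solution to the primal slacks, the primal optimum is then bounded by the objective value at the feasible point $(w^*,\xi^*)$, and Lemma~\ref{lemma:boundxi} finishes the argument. The explicit appeal to strong duality is a harmless extra framing; the paper reaches the same chain of inequalities directly via the KKT conditions.
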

\begin{proof}
Let $w_0=\sum_i \alpha_{0i}y_i x_i$ be the hyperplane which corresponds to $\alpha_0$, and let $\xi_0$ be the corresponding slack variable. Then
\begin{nalign}
    \|\alpha_0\|_2^2&=\lambda^2\|\xi_0\|_2^2\leq \lambda\left(\|w_0\|_2^2+\lambda\|\xi_0\|_2^2\right)\leq \lambda\left(\|w^*\|_2^2+\lambda\|\xi^*\|_2^2\right).
\end{nalign}
Here, the first line follows from the KKT condition $\alpha_i=\lambda\xi_i$, and the third line is because $w_0$ is the optimal solution to \eqref{eq:L2primal}. Therefore by Lemma~\ref{lemma:boundxi}, $\E\left[\|\alpha_0\|_2^2\right]=\mc O\left(\frac{1}{\Delta^2}+m\Delta\right)$.
\end{proof}

\begin{remark}\label{rmk:boundalpha}
Recall that we have the freedom to choose $\Delta=\mc O(n^{-t})$ for any constant $t$. Suppose we have polynomially many training samples $m\approx n^c$, and let $t=c/3$. Then the above bound gives $\E\left[\|\alpha_0\|_2^2\right]=\mc O\left(m^{2/3}\right)$.
\end{remark}

Having established the above lemmas, now we are ready to prove our key result for noise robustness (Lemma~\ref{lemma:noiserobustness}). Let $Q'$ be the noisy kernel measured by a quantum computer, and let $\lambda\in (0,1)$ be a constant. Here we briefly recall the steps in Algorithm~\ref{alg:qketraining} and \ref{alg:qketesting}. The classifier is constructed in two steps. First, use a classical computer to run the dual program \eqref{eq:quadraticprogram} with $Q$ replaced by the experimental estimate $Q'$, and let $\alpha'$ be the solution returned by the program. 
Second, given a new data sample $x$, use a quantum computer to obtain noisy estimates $K'(x,x_i)$ for all $i$, and output
\begin{equation}
    y_{\text{pred}}=\sign\left(\sum_i\alpha_i' y_i K'(x,x_i)\right).
\end{equation}

Let $h(x)=\sum_i\alpha_i y_i K(x,x_i)$ and $h'(x)=\sum_i\alpha_i' y_i K'(x,x_i)$, which corresponds to the value of the noiseless/noisy classifier before taking the sign. We will prove the following result which establishes the noise robustness of $h$.

\begin{lemma}[Noise robustness]\label{lemma:noiserobustness}
Suppose we take $R=\mc O(m^4)$ measurement shots for each quantum kernel estimation circuit. Then, with probability at least 0.99 (over the choice of random training samples and measurement noise), for every $x\in \calX$ we have
\begin{equation}
    \left|h(x)-h'(x)\right|\leq 0.01.
\end{equation}
\end{lemma}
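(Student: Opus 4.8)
The plan is to control the difference $|h(x)-h'(x)|$ by splitting it into two contributions: the error from using the perturbed dual solution $\alpha'$ instead of $\alpha$, and the error from using the noisy kernel values $K'(x,x_i)$ instead of $K(x,x_i)$ in the final evaluation. Write
\begin{nalign}
    h(x)-h'(x) = \sum_i (\alpha_i-\alpha_i') y_i K(x,x_i) + \sum_i \alpha_i' y_i \big(K(x,x_i)-K'(x,x_i)\big),
\end{nalign}
and bound each term by Cauchy--Schwarz. For the first term this gives $\|\alpha-\alpha'\|_2 \cdot \big(\sum_i K(x,x_i)^2\big)^{1/2} \le \|\alpha-\alpha'\|_2 \cdot \sqrt{m}$, since each $|K(x,x_i)|\le 1$; for the second term it gives $\|\alpha'\|_2 \cdot \big(\sum_i e(x,x_i)^2\big)^{1/2}$ where $e(x,x_i)$ are the fresh i.i.d.\ testing-phase errors. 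So the task reduces to (a) bounding $\|\alpha-\alpha'\|_2$, (b) bounding $\|\alpha'\|_2$, and (c) bounding the norm of the testing-phase noise vector, each with high probability, and then choosing $R$ so the product is below $0.01$.

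For step (a) I would invoke the perturbation bound Lemma~\ref{lemma:quadraticprogramrubostness}: the noiseless program \eqref{eq:quadraticprogram} has matrix $Q+\frac{1}{\lambda}\mathbb{I}$ with smallest eigenvalue at least $\frac{1}{\lambda}$, and the noisy program replaces it with $Q'+\frac{1}{\lambda}\mathbb{I}$. The perturbation in Frobenius norm is $\|Q'-Q\|_F = \big(\sum_{i,j} e_{ij}^2\big)^{1/2}$, a sum of $O(m^2)$ independent mean-zero bounded variables each of variance $\le 1/R$, so $\E\|Q'-Q\|_F^2 \le m^2/R$ and by Markov $\|Q'-Q\|_F \le O(m/\sqrt{R})$ with high constant probability; with $R = \Theta(m^4)$ this is $O(1/m)$, comfortably below $\lambda$. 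Lemma~\ref{lemma:quadraticprogramrubostness} then yields $\|\alpha-\alpha'\|_2 \le \frac{\varepsilon}{1/\lambda - \varepsilon}\|\alpha\|_2 = O(\varepsilon \|\alpha\|_2) = O\big(\frac{m}{\sqrt{R}}\|\alpha\|_2\big)$. For step (b), note $\|\alpha'\|_2 \le \|\alpha\|_2 + \|\alpha-\alpha'\|_2 = O(\|\alpha\|_2)$, and by Lemma~\ref{lemma:boundalpha} together with Remark~\ref{rmk:boundalpha} (choosing the constant $t$ as in the remark), $\E\|\alpha\|_2^2 = O(m^{2/3})$, so by Markov $\|\alpha\|_2 = O(m^{1/3})$ with high probability. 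Step (c): the testing error vector has $\E\sum_i e(x,x_i)^2 \le m/R$, and although this must hold simultaneously for all $x\in\calX$, each test point only reuses a fixed training set — I would argue the classifier's behavior on a new $x$ depends on the $m$ fresh estimates, and a union bound over the (finitely many, at most $p-1$) elements of $\calX=\ints_p^*$ costs only a $\poly(n)$ factor, still absorbed by a polynomially larger $R$; alternatively one observes $\sum_i e(x,x_i)^2 \le m$ deterministically since $|e|\le 1/2 \le 1$, which already gives $\big(\sum_i e(x,x_i)^2\big)^{1/2}\le \sqrt{m}$ with no probabilistic argument needed.

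Putting the pieces together: the first term is $O\big(\frac{m}{\sqrt{R}} \cdot m^{1/3} \cdot \sqrt{m}\big) = O\big(m^{11/6}/\sqrt{R}\big)$ and the second is $O\big(m^{1/3}\cdot\sqrt{m}\big)/\sqrt{R}\cdot(\text{const})$ — wait, more carefully the second term is $\|\alpha'\|_2\cdot\big(\sum_i e(x,x_i)^2\big)^{1/2}$ where if we use the crude deterministic bound $\sqrt{m}$ this does not shrink with $R$, so I must instead use the probabilistic bound $\big(\sum_i e(x,x_i)^2\big)^{1/2} = O(\sqrt{m/R})$ (with the union bound over $\calX$), giving second term $= O(m^{1/3}\cdot\sqrt{m/R}) = O(m^{5/6}/\sqrt{R})$. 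Both terms are then $O(\mathrm{poly}(m)/\sqrt{R})$, and taking $R = \Theta(m^4)$ (or a slightly larger polynomial to accommodate the union bound and the $0.99$ success probability via adjusting constants) makes the total at most $0.01$. The main obstacle I anticipate is the uniform-over-$\calX$ control in step (c): the naive Cauchy--Schwarz gives a deterministic $\sqrt{m}$ that does not decay, so one genuinely needs either the union bound over $\ints_p^*$ (which is fine since $|\calX|\le 2^n$ and each bad event has exponentially small probability once $R$ is a large enough polynomial times $n$) or a more careful concentration argument; getting the dependence on $R$ to remain polynomial while covering all test points is the delicate bookkeeping step. Everything else is a routine chain of Cauchy--Schwarz, Markov, and the cited perturbation lemma.
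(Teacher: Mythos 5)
Your decomposition, the invocation of Lemma~\ref{lemma:quadraticprogramrubostness} to bound $\|\alpha-\alpha'\|_2$, the use of Lemma~\ref{lemma:boundalpha} (via Remark~\ref{rmk:boundalpha}) for $\|\alpha\|_2 = \mc O(m^{1/3})$, the Cauchy--Schwarz bookkeeping, and the final scaling $\mc O(m^{11/6}/\sqrt{R})$ all match the paper's proof essentially term for term; the paper simply expands $h'(x)$ as $\sum_i(\alpha_i+\delta_i)y_i(K(x,x_i)+\nu_i)$ rather than adding and subtracting $\sum_i\alpha_i'y_iK(x,x_i)$, but the resulting three-term bound is identical.

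The one place your proposal goes beyond the paper is the quantifier over $x$, and you are right to flag it. The paper bounds $\|\nu\|_2$ by Markov's inequality, which is a per-$x$ statement, and then writes ``Overall for any $x\in\calX$'' as if the bound were uniform; as stated the lemma claims ``for every $x\in\calX$,'' so some extra argument is indeed owed. Your union bound over $\ints_p^*$ is the natural fix, but be careful: Markov only gives an inverse-polynomial per-$x$ failure probability, which cannot survive a union over $2^n$ points. You need the stronger tail that the boundedness of $\nu_i$ affords (Hoeffding or Bernstein on the $R$-shot averages), which gives $\|\nu\|_2\leq\mc O(\sqrt{mn/R})$ with failure probability $2^{-\Omega(n)}$ per $x$; that only costs a $\sqrt{n}$ factor in the bound and a correspondingly modest polynomial blow-up in $R$, so the conclusion survives. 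Alternatively, one can observe that the downstream use in Theorem~\ref{thm:qkelearnability} only ever evaluates $\Pr_{x\sim\calX}[\,\cdot\,]$, so a per-$x$ (in-probability over a fresh random test point and its fresh test noise) version of the lemma would actually suffice, at the cost of rephrasing both the lemma and the theorem's union-bound step. Either repair works; the paper's proof as written does not clearly supply one.
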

\begin{proof}
Consider the (noisy) quadratic program \eqref{eq:L2dual}. The Frobenius norm is given by
\begin{equation}
    \|Q'-Q\|_F^2=2\sum_{i<j}e_{ij}^2,
\end{equation}
where $e_{ij}$ are independent random variables satisfying $\E[e_{ij}]=0$ and $\E\left[e_{ij}^2\right]\leq \frac{1}{R}$. Therefore $\E\left[\|Q'-Q\|_F^2\right]\leq\mc O\left(\frac{m^2}{R}\right)$. Now we invoke Lemma~\ref{lemma:quadraticprogramrubostness} and Lemma~\ref{lemma:boundalpha} (see Remark~\ref{rmk:boundalpha}). Using  Markov's inequality: with probability at least 0.999 (over the choice of training samples and measurement noise), we have that \begin{nalign}
    &\|Q'-Q\|_F^2\leq\mc O\left(\frac{m^2}{R}\right), \text{ and }\hspace{2mm} \|\alpha\|_2^2=\mc O\left(m^{2/3}\right).
\end{nalign}
 Let $\delta_i=\alpha_i'-\alpha_i$. Since  $\lambda_{\min}\left(Q+\frac{1}{\lambda}\mathbb{I}\right)\geq\frac{1}{\lambda}$ is lower bounded by a constant, Lemma~\ref{lemma:quadraticprogramrubostness} gives
\begin{equation}
    \|\delta\|_2\leq \mc O\left(\frac{m^{4/3}}{\sqrt{R}}\right).
\end{equation}
Then, let $\nu_i=K'(x,x_i)-K(x,x_i)$ for $i=1,\dots,m$. Similarly, 
$\nu_i$ are independent random variables satisfying $\E[\nu_i]=0$ and $\E\left[\nu_i^2\right]\leq \frac{1}{R}$. By Markov's inequality, we have $\|\nu\|_2\leq\mc O\left(\frac{\sqrt{m}}{\sqrt{R}}\right)$ with probability at least 0.999. Overall for any $x\in \calX$, the error bound gives
\begin{nalign}
    \left|h(x)-h'(x)\right|&=\left|\sum_i(\alpha_i+\delta_i) y_i (K(x,x_i)+\nu_i)-\sum_i\alpha_i y_i K(x,x_i)\right|\\
    &\leq \sum_i\left|\alpha_i\nu_i+\delta_i K(x,x_i)+\delta_i\nu_i\right|\\
    &\leq \|\alpha\|_2\cdot\|\nu\|_2+\sqrt{m}\|\delta\|_2+\|\delta\|_2\cdot\|\nu\|_2\\
    &\leq\mc O\left(\frac{m^{11/6}}{\sqrt{R}}\right),
\end{nalign}
where the third line uses Cauchy–Schwarz inequality. Therefore, $R=\mc O(m^4)$ measurement shots is sufficient for achieving $\left|h(x)-h'(x)\right|\leq 0.01$, and by a simple union bound this holds with probability at least 0.99.
\end{proof}

Having established noise robustness, it remains to prove a generalization error bound for the noisy classifier: if the classifier has small training error/loss, it should also have small test error,~which is referred to as \emph{generalization error} in learning theory. The main idea is a two-step~argument:
\begin{enumerate}
    \item The noiseless classifier $y=\sign\left(h(x)\right)$ (we have defined $h(x)=\sum_i\alpha_i y_i K(x,x_i)=\langle w,x\rangle$) has small generalization error, which follows from standard generalization bounds for soft margin classifiers.
    \item We have established that the noisy classifier is close to the noiseless classifier. Therefore, the noisy classifier should also have small generalization error.
\end{enumerate}
For the first step, we refer to standard results on the generalization of soft margin classifiers (see, for example~\cite{anthony_bartlett_2000,bartlett1998prediction,Shawe-Taylor1998structural,BST1999,Shawe-Taylor2002generalization,gronlund2020near}). Recall that a hyperplane $w$ correctly classifies a data point $(x,y)$ if and only if $y\langle w,x\rangle>0$. Therefore for a specific concept $f\in\mc C$, the test accuracy of $f^*(x)=\sign\left(\langle w,x\rangle\right)$ is given by 
\begin{equation}
    \mathrm{acc}_{f}(f^*)=\Pr_{x\sim \calX}\left[f^*(x)=f(x)\right]=1-\Pr_{x\sim \calX}\left[y\langle w,x\rangle<0\right],
\end{equation}
where we have used $y=f(x)$. Our results will be given in the form of an upper bound on $\Pr_{x\sim \calX}\left[y\langle w,x\rangle<0\right]$. The following result gives a generalization bound that coincides with our $L2$ training loss up to $\polylog$ factors, as indicated by the $\Tilde{\mc O}$ notation, and therefore is directly applicable to the noiseless classifier.

\begin{lemma}[{\cite[Theorem~VII.11]{Shawe-Taylor2002generalization}}]
For any hyperplane $w$ satisfying the constraints of the primal program~\eqref{eq:L2primal}, with probability $1-\delta$ over randomly drawn training set $S$ of size $m$, the generalization error is bounded by
\begin{equation}\label{eq:L2softmargingeneralization}
    \Pr_{x\sim \calX}[y\langle w,x\rangle<0]\leq \frac{1}{m}\Tilde{\mc O}\left(\|w\|_2^2+\|\xi\|_2^2+\log\frac{1}{\delta}\right).
\end{equation}
\end{lemma}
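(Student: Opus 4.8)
The plan is to follow the classical reduction of Shawe-Taylor and Cristianini: convert the $L2$ soft-margin situation into a \emph{hard}-margin classification problem in an enlarged feature space, and then quote a margin-based generalization bound for hard-margin classifiers. The hard-margin ingredient I would invoke is the standard fat-shattering / covering-number bound: if a feature map $\psi$ sends data into a real inner-product space with $\|\psi(x)\|_2\le R$ for all $x$, and some unit vector $v$ satisfies $y\,\langle v,\psi(x_i)\rangle\ge\gamma$ on every point of an i.i.d.\ training set of size $m$, then with probability at least $1-\delta$ over $S$,
\begin{equation}
  \Pr_{x\sim\calX}\big[y\,\langle v,\psi(x)\rangle<0\big]\;\le\;\frac{1}{m}\,\Tilde{\mc O}\!\left(\frac{R^2}{\gamma^2}+\log\frac1\delta\right),
\end{equation}
where the $\Tilde{\mc O}$ swallows the $\polylog(m)$ factors from the covering-number estimate together with the $\log$ overhead of stratifying over a dyadic grid of candidate values of $\gamma$ (necessary because $\gamma$ is not known before seeing $S$). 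This is exactly the content of the references cited alongside the statement, so I would treat it as a black box and concentrate on the reduction.

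Next I would set up the augmentation. Fix any $w,\xi$ feasible for the primal~\eqref{eq:L2primal}, i.e.\ $y_i\langle w,x_i\rangle\ge 1-\xi_i$ with $\xi_i\ge0$, and put $C>0$ to be a constant (for concreteness $C=1$). Enlarge the feature space by $m$ extra coordinates indexed by the training points and define
\begin{equation}
  \psi(x)=\big(x\,;\ \sqrt{C}\,e_i\ \text{ if } x=x_i,\ \text{ else }0\big),\qquad \hat w=\Big(w\,;\ \tfrac{1}{\sqrt C}\,(y_1\xi_1,\dots,y_m\xi_m)\Big).
\end{equation}
The routine checks are: $y_i\langle\hat w,\psi(x_i)\rangle=y_i\langle w,x_i\rangle+\xi_i\ge 1$, so $\hat w$ attains hard (unnormalized) margin $\ge1$ on the enlarged training points; $\|\hat w\|_2^2=\|w\|_2^2+\tfrac1C\|\xi\|_2^2$; and $\|\psi(x_i)\|_2^2=1+C$ while any other point has zero extra coordinates, so $\|\psi(x)\|_2^2=\|x\|_2^2=1\le 1+C$. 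Hence the normalized margin of $\hat w$ is $\gamma=1/\|\hat w\|_2$, the radius bound is $R=\sqrt{1+C}$, and $R^2/\gamma^2=(1+C)\big(\|w\|_2^2+\tfrac1C\|\xi\|_2^2\big)=\Theta\big(\|w\|_2^2+\|\xi\|_2^2\big)$ since $C$ is a constant (the choice $C=1/\lambda$ matches the primal objective, which is why $\lambda$ being constant is all that is used).

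To conclude I would apply the hard-margin bound to $(\hat w,\psi)$, using the key observation that a fresh point $x$ distinct from all $x_i$ has $\langle\hat w,\psi(x)\rangle=\langle w,x\rangle$, so $\sign\langle\hat w,\psi(\cdot)\rangle$ and $\sign\langle w,\cdot\rangle$ agree on it; the probability that a uniform $x\in\ints_p^*$ coincides with one of the $m$ training points is at most $m/(p-1)$ and is absorbed into the error term. Substituting $R^2/\gamma^2=\Theta(\|w\|_2^2+\|\xi\|_2^2)$ then yields the claimed bound. The main obstacle, and the step I would spend the most care on, is \emph{not} the algebra but the legitimacy of the reduction: the augmented map $\psi$ depends on $S$, so one must be sure the hard-margin bound is genuinely uniform — over all norm-bounded linear functionals on an $R$-ball and over the data-dependent margin level — so that it still applies to the $S$-dependent pair produced by the construction. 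This is precisely where the dyadic stratification over $\gamma$ (and the resulting extra $\log$ factor hidden in $\Tilde{\mc O}$) enters, and it is the part of the argument I would write out most carefully when citing the underlying bound.
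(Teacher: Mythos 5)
Your proposal is correct and matches the paper's approach essentially line for line: the paper cites this particular lemma to Shawe-Taylor and Cristianini but then proves the closely related, strengthened version Lemma~\ref{lemma:margingeneralization} in Section~\ref{section:generalizationbound} via exactly the augmentation-to-hard-margin construction you describe (mapping $x\mapsto(x,\delta_x)$ and $w\mapsto\tilde w$, verifying $\|\tilde w\|^2=\|w\|_2^2+\|\xi\|_2^2$ and the margin-$1$ condition on training points, bounding the fat-shattering dimension, and invoking the Anthony--Bartlett generalization bound). Your use of $m$ extra coordinates indexed by training points, versus the paper's indicator functions $\delta_x$ over all of $\calX$, is only a presentational difference, and your concern about the $S$-dependence of the augmented functional is the correct subtlety to flag, resolved by the uniformity of the fat-shattering bound over the margin level.
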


However, although this result establishes step 1, it cannot be directly applied to step 2: our noise robustness result, which states that $h'(x)$ is close to $h(x)$, does not guarantee that $\sign(h'(x))$ agrees well with $\sign(h(x))$. The above lemma implies that $h(x)$ is on the correct side of the origin with high probability, but it could still be very close to the origin, which may lead to a bad noisy classifier $\sign(h'(x))$.

A simple solution to this problem is to show a stronger generalization bound, which in addition to $h(x)$ being correct, also shows that $h(x)$ is bounded away from the origin. We indeed prove such a result, by combining ideas from the aforementioned references. Notice that the only difference between the following lemma and the previous lemma is that we replaced 0 with 0.1.

\begin{lemma}\label{lemma:margingeneralization}
For any hyperplane $w$ satisfying the constraints of the primal program~\eqref{eq:L2primal}, with probability $1-\delta$ over randomly drawn training set $S$ of size $m$, the generalization error is bounded~by
\begin{equation}\label{eq:strongL2softmargingeneralization}
    \Pr_{x\sim \calX}[y\langle w,x\rangle<0.1]\leq \frac{1}{m}\Tilde{\mc O}\left(\|w\|_2^2+\|\xi\|_2^2+\log\frac{1}{\delta}\right).
\end{equation}
\end{lemma}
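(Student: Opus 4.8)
The plan is to reproduce the proof of the preceding bound~\eqref{eq:L2softmargingeneralization} (Theorem~VII.11 of~\cite{Shawe-Taylor2002generalization}) while carrying a query threshold $\theta$ through the argument, and then to verify that instantiating $\theta = 0.1$ in place of $\theta = 0$ weakens the bound only by a multiplicative constant. The point is that the target margin of the primal program~\eqref{eq:L2primal} is $1$, so $0.1$ is a constant bounded away from it, and the only place $\theta$ enters the argument is through a covering scale proportional to the \emph{gap} $1 - \theta$.

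First I would absorb the slack variables into the feature map, reducing to a hard-margin problem exactly as in the usual $L2$ soft-margin analysis. Given a feasible pair $(w,\xi)$ for~\eqref{eq:L2primal} --- it suffices to treat $\xi_i = \max\{0,\,1 - y_i\langle x_i,w\rangle\}$, the smallest feasible slack, since the right-hand side of~\eqref{eq:strongL2softmargingeneralization} is monotone in $\|\xi\|_2$ --- lift each training point $x_i\in\calX\subseteq\mathbb{R}^{4^n+1}$ to $\tilde x_i = (x_i,\tfrac{1}{\sqrt\lambda}e_i)\in\mathbb{R}^{4^n+1+m}$, where $e_i$ is the $i$-th standard basis vector of $\mathbb{R}^m$; lift a generic test point $x$ to $\tilde x = (x,0)$; and lift $w$ to $\tilde w = (w,\sqrt\lambda\,(y_1\xi_1,\dots,y_m\xi_m))$. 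Then $y_i\langle\tilde w,\tilde x_i\rangle = y_i\langle w,x_i\rangle + \xi_i \geq 1$ for every $i$ (hard margin $1$ on the lifted training set), while $y\langle\tilde w,\tilde x\rangle = y\langle w,x\rangle$ on any test point, so it suffices to bound $\Pr_{x\sim\calX}[\,y\langle\tilde w,\tilde x\rangle < 0.1\,]$. Moreover $\|\tilde w\|_2^2 = \|w\|_2^2 + \lambda\|\xi\|_2^2$, and every lifted point has Euclidean norm at most $R := \sqrt{1 + 1/\lambda} = \mathcal{O}(1)$ since $\lambda$ is a fixed constant.

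Next I would invoke the standard fat-shattering / covering-number generalization bound for norm-bounded linear functionals in the large-margin regime~\cite{anthony_bartlett_2000,Shawe-Taylor1998structural,BST1999}, which is exactly the tool used inside the proof of Theorem~VII.11: if a linear functional $\tilde w$ has functional margin $\geq 1$ on all $m$ i.i.d.\ training points, then for any $\theta < 1$, with probability at least $1 - \delta$,
\begin{equation}
  \Pr_{x\sim\calX}\big[\,y\langle\tilde w,\tilde x\rangle < \theta\,\big]\ \leq\ \frac{1}{m}\,\tilde{\mathcal{O}}\!\Big(\fat_{(1-\theta)/8}(\mathcal{F}_R) + \log\tfrac{1}{\delta}\Big),
\end{equation}
where $\mathcal{F}_R$ is the class of linear functionals of norm at most $\|\tilde w\|_2$ restricted to the ball of radius $R$, $\fat_\gamma(\mathcal{F}_R) = \mathcal{O}\big((R\|\tilde w\|_2/\gamma)^2\big)$, and $\tilde{\mathcal{O}}$ hides the customary $\polylog(m)$ factor together with the logarithmic cost of union-bounding over a dyadic grid of values of $\|\tilde w\|_2$. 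Setting $\theta = 0.1$ pins the covering scale to $(1-0.1)/8$, a positive constant, so $\fat_{(1-0.1)/8}(\mathcal{F}_R) = \mathcal{O}(R^2\|\tilde w\|_2^2) = \mathcal{O}(\|w\|_2^2 + \lambda\|\xi\|_2^2) = \mathcal{O}(\|w\|_2^2 + \|\xi\|_2^2)$; undoing the lift via $y\langle\tilde w,\tilde x\rangle = y\langle w,x\rangle$ then yields precisely~\eqref{eq:strongL2softmargingeneralization}.

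The only step that requires genuine care --- and the one substantive departure from the cited theorem, which is the case $\theta = 0$ --- is establishing the displayed bound with a \emph{mismatch} between the training margin ($1$, in the lifted space) and the query threshold ($0.1$): one must check that the symmetrization-plus-permutation argument only needs a cover of the restricted function class on the double sample at scale proportional to $1 - \theta$, rather than at scale proportional to the full margin $1$. This ``observed-margin'' refinement is standard --- it is implicit in the margin analyses of~\cite{anthony_bartlett_2000,BST1999} and specializes to the cited theorem at $\theta = 0$ --- and because $1 - 0.1$ is a constant, nothing downstream changes: all $1/\gamma^2$ factors stay $\mathcal{O}(1)$ and the resulting bound matches the $\theta = 0$ bound up to the constant absorbed in $\tilde{\mathcal{O}}$. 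I expect this to be the main (though routine) obstacle; the lift and the norm bookkeeping are straightforward.
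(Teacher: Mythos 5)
Your proposal is essentially the same as the paper's proof. Both lift the soft-margin problem to a hard-margin problem (your explicit $(x_i,\tfrac{1}{\sqrt\lambda}e_i)$ lift is a concrete finite-dimensional instantiation of the paper's $\delta_x$-function lift, differing only by bookkeeping constants in $\lambda$), both bound the fat-shattering dimension of the lifted norm-bounded linear class, and both obtain the $\theta=0.1$ refinement from the observed-margin form of the real-valued generalization bound. The one place where the paper is more explicit than you are is exactly the step you flag as ``the main (though routine) obstacle'': rather than appealing to an implicit observed-margin refinement of the symmetrization argument, the paper invokes~\cite[Corollary~3.3]{anthony_bartlett_2000} directly --- a bound of the form $\Pr[\,|h(x)-f(x)|\geq\eta+\gamma\,]\leq \tfrac{1}{m}\Tilde{\mathcal O}(\fat_{\calH}(\gamma/8)+\log\tfrac1\delta)$ whenever $|h(x_i)-f(x_i)|\leq\eta$ on the training set --- and then chooses $\eta=1-\gamma_0$ and $\gamma=0.9\gamma_0$ with $\gamma_0=1/\|\tilde w\|$, so that the threshold $\eta+\gamma=1-0.1\gamma_0$ unwinds to $y\langle\tilde w,\tilde x\rangle<0.1$ while the fat-shattering scale $\gamma/8$ stays $\Theta(\gamma_0)$ and hence costs only a constant. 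So your identification of the key tool is correct, and the paper's proof is simply the spelled-out version of what you sketched.
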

The proof is presented in Section~\ref{section:generalizationbound}. Combining Lemma~\ref{lemma:noiserobustness} with Lemma~\ref{lemma:margingeneralization}, we arrive at our main theorem for the learnability of $\mc C$ with our quantum algorithm.

\begin{theorem}\label{thm:qkelearnability}
For any concept $f_s\in \Cc$, the $\SVM$-$\QKE$ algorithm returns a classifier with test accuracy at least 0.99 in polynomial time, with probability at least $2/3$ over the choice of random training samples and over noise.
\end{theorem}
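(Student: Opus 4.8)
The plan is to run the two‑step argument sketched in the text on the abstract noisy halfspace problem of Lemma~\ref{lemma:noisyhalfspacelearning}: first bound the noiseless $\SVM$ solution, then transfer its guarantee to the noisy classifier actually output by Algorithms~\ref{alg:qketraining}--\ref{alg:qketesting}. Fix an arbitrary $f_s\in\mc C$. I would take $m=\poly(n)$ training samples, say $m=n^c$, set the feature‑map parameter to $t=c/3$ so that $\Delta=\mc O(n^{-t})$ as in Remark~\ref{rmk:boundalpha}, fix a constant $\lambda\in(0,1)$, and take $R=\mc O(m^4)$ shots per kernel circuit. Let $\alpha$ solve the noiseless dual~\eqref{eq:L2dual}, let $w=\sum_i\alpha_i y_i x_i$ be the associated primal hyperplane with slacks $\xi$, and write $h(x)=\sum_i\alpha_i y_i K(x,x_i)=\langle w,x\rangle$; let $\alpha',w',h'$ be the same objects built from the noisy kernel $K'$, so $h'$ equals the quantity to which $\sign$ is applied in Algorithm~\ref{alg:qketesting}.

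First I would bound the noiseless training loss. Since $w$ minimizes $\tfrac12\|w\|_2^2+\tfrac\lambda2\|\xi\|_2^2$ over feasible hyperplanes and the ground‑truth hyperplane $w^*$ of Definition~\ref{def:groundtruth} is feasible, optimality gives $\|w\|_2^2+\|\xi\|_2^2=\mc O\!\left(\|w^*\|_2^2+\|\xi^*\|_2^2\right)$; substituting $\|w^*\|_2=\mc O(\Delta^{-1})$, Lemma~\ref{lemma:boundxi}, and the choice of $\Delta$ yields $\E\!\left[\|w\|_2^2+\|\xi\|_2^2\right]=\mc O(m^{2/3})$ --- this is exactly the computation behind Lemma~\ref{lemma:boundalpha}. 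By Markov, $\|w\|_2^2+\|\xi\|_2^2=\mc O(m^{2/3})$ with probability at least $0.999$ over $S$. On this event I would apply the margin‑shifted generalization bound Lemma~\ref{lemma:margingeneralization} with $\delta=10^{-3}$ to the feasible hyperplane $w$, obtaining $\Pr_{x\sim\calX}[\,y\langle w,x\rangle<0.1\,]\le\frac1m\,\Tilde{\mc O}(\|w\|_2^2+\|\xi\|_2^2+\log\tfrac1\delta)=\Tilde{\mc O}(m^{-1/3})$, which is below $0.01$ for all sufficiently large $n$ since $m=\poly(n)$.

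Next I would transfer this to the noisy classifier using Lemma~\ref{lemma:noiserobustness}: with $R=\mc O(m^4)$, with probability at least $0.99$ over $S$ and the measurement noise, $|h(x)-h'(x)|\le0.01$ for \emph{every} $x\in\calX$. Intersecting this with the two events above, and union bounding the failure probabilities ($10^{-3}+10^{-3}+10^{-2}<\tfrac13$), I get: for a fresh test point $x$ with true label $y$, whenever $y\langle w,x\rangle\ge0.1$ we have $y\,h'(x)=y\langle w,x\rangle+y\,(h'(x)-h(x))\ge0.1-0.01>0$, so $\sign(h'(x))=y$ and the classifier is correct. Hence the test error is at most $\Pr_x[\,y\langle w,x\rangle<0.1\,]\le0.01$, i.e.\ test accuracy at least $0.99$, with probability at least $2/3$. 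Efficiency is immediate: each of the $\mc O(m^2)$ training kernel entries (and $m$ testing ones) is estimated by a circuit of $\Tilde{\mc O}(n^3)$ gates (Definition~\ref{def:featurestate}) run $R=\poly(n)$ times, and the strongly convex program~\eqref{eq:quadraticprogram} is solved in $\poly(m)$ classical time, all $\poly(n)$.

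The main obstacle is not the bookkeeping above but the fact that one cannot simply cite the classical $L2$ soft‑margin bound~\eqref{eq:L2softmargingeneralization}: it only guarantees $\sign(h(x))=y$ with high probability, and $h'(x)\approx h(x)$ says nothing about $\sign(h'(x))$ when $h(x)$ is near $0$. This is resolved by Lemma~\ref{lemma:margingeneralization} (proved in Section~\ref{section:generalizationbound}), which upgrades the bound so that $y\,h(x)$ exceeds the \emph{constant} $0.1$ off a vanishing fraction of inputs; pairing that constant gap with the $\mc O(m^{11/6}/\sqrt R)$ noise bound of Lemma~\ref{lemma:noiserobustness} makes the output sign robust. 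The secondary delicate point is the joint tuning of $\Delta$ (equivalently $k=n-t\log n$) and $R$: $\Delta$ must be small enough for $\Tilde{\mc O}(m^{-1/3})\to0$ yet large enough that the margin $\Delta/\sqrt{8+2\Delta^2}$ is not destroyed by the kernel error, while $R=\mc O(m^4)$ is forced by requiring the perturbation bound of Lemma~\ref{lemma:quadraticprogramrubostness}, fed with $\|\alpha\|_2^2=\mc O(m^{2/3})$, to deliver the $0.01$ in Lemma~\ref{lemma:noiserobustness}.
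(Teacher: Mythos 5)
Your proposal matches the paper's proof essentially line for line: it bounds the noiseless training loss via the feasibility of $w^*$ (Lemma~\ref{lemma:boundxi}), converts to a high-probability bound by Markov, applies the margin-shifted generalization bound (Lemma~\ref{lemma:margingeneralization}), transfers to the noisy classifier with Lemma~\ref{lemma:noiserobustness}, and closes with a union bound, using the same parameter choices ($t=c/3$, $R=\mc O(m^4)$, constant $\lambda$). The only cosmetic difference is that you bound $\|w\|_2^2+\|\xi\|_2^2$ directly before invoking Lemma~\ref{lemma:margingeneralization}, while the paper bounds $\mathrm{Loss}(w_0)\le\mathrm{Loss}(w^*)$ inside the chain of inequalities; these are the same argument.
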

\begin{proof}
Below we do not specify the concept, as the proof works equivalently for every concept $f_s\in \Cc$. Let $w^*$ be the ground truth hyperplane as in Definition~\ref{def:groundtruth}. Note that $\|w^*\|_2=\mc O(\Delta^{-1})$. Using Lemma~\ref{lemma:boundxi}, we have that with probability at least 0.99 over the choice of training samples, the $L2$ training loss of $w^*$ satisfies
\begin{equation}
    \mathrm{Loss}(w^*):=\frac{1}{2}\|w^*\|_2^2+\frac{\lambda}{2}\|\xi^*\|_2^2\leq \mc O\left(\frac{1}{\Delta^2}+m\Delta\right).
\end{equation}
Let $w_0$ be the optimal solution of the primal program \eqref{eq:L2primal}, and let $h(x)=\langle w_0,x\rangle$. Let $h'$ be the noisy classifier obtained by the dual program~\eqref{eq:L2dual}. By Lemma~\ref{lemma:noiserobustness}, for any $x\in \calX$ we have
\begin{equation}
    |y h'(x)-y h(x)|\leq 0.01,
\end{equation}
with probability at least 0.99 over the choice of training samples and noise. Therefore, by a simple union bound, with probability at least $2/3$, the test error of the noisy classifier is upper bounded~by
\begin{nalign}
    \Pr_{x\sim \calX}[y h'(x)<0]&\leq \Pr_{x\sim \calX}[y h'(x)<0.09]\\
    &\leq \Pr_{x\sim \calX}[y\langle w_0,x\rangle<0.1]\\
    &\leq \frac{1}{m}\Tilde{\mc O}(\mathrm{Loss}(w_0))\\
    &\leq \frac{1}{m}\Tilde{\mc O}(\mathrm{Loss}(w^*))\leq \Tilde{\mc O}\left(\frac{1}{m\Delta^2}+\Delta\right),
\end{nalign}
where in the third line we use Lemma~\ref{lemma:margingeneralization}, and the fourth line is because $w_0$ is the optimal solution to \eqref{eq:L2primal}. Finally, notice that the above bound holds for arbitrary $\Delta=\mc O\left(n^{-t}\right)$ for constant $t$. In order to optimize this bound, we can choose $t=c/3$ for $m=n^c$ (also see Remark~\ref{rmk:boundalpha}). This gives the final~bound
\begin{equation}
    \Pr_{x\sim \calX}[y h'(x)<0]\leq\Tilde{\mc O}\left(m^{-1/3}\right).
\end{equation}
Therefore, polynomially many training samples are sufficient for learning the concept class $\mc C$ with high accuracy.
\end{proof}

\section{Generalization bound for soft margin SVM}\label{section:generalizationbound}
In this section we prove Lemma~\ref{lemma:margingeneralization}, a generalization bound for the L2 soft margin $\SVM$ in Eq.~\eqref{eq:L2primal} (restated below for convenience).
\begin{nalign}\label{eq:L2primalrestate}
     \min_{w,\xi}\,\,\,\,\,\,\,\,&\frac{1}{2}\|w\|_2^2 + \frac{\lambda}{2}\sum_i\xi_i^2\\
     \text{s.t.}\,\,\,\,\,\,\,\,&y_i\langle x_i,w\rangle\geq 1-\xi_i\\
     &\xi_i\geq 0.
\end{nalign}
Let $w$ be an unnormalized feasible solution to Eq.~\eqref{eq:L2primalrestate}. The first step is to use a trick developed by \cite{Shawe-Taylor2002generalization}, that converts the soft margin problem to a hard margin problem by mapping to a larger space. Let $S=\{(x_1,y_1),\dots,(x_m,y_m)\}$ be the training set. Consider the mapping
\begin{nalign}
    &x\mapsto\tilde{x}=\left(x,\delta_x\right)\\
    &w\mapsto\tilde{w}=\left(w,\sum_{(x,y)\in S}y\delta_x \cdot  \xi(x,y,w)\right) 
\end{nalign}
where $\delta_x:\calX\to \{0,1\}$ is a function defined as $\delta_x(x')=1$ if and only if $x'=x$ 
and $\xi(x,y,w)=\max\{0,1-y\langle w,x\rangle\}$ are the slack variables used in Eq.~\eqref{eq:L2primalrestate}. Denote the enlarged space by $L_X$ and $\|\cdot\|$ its induced norm. The following useful properties hold for this transform:
\begin{enumerate}
    \item If $(x,y)\in S$, $y\langle \tilde{w},\tilde{x}\rangle\geq 1$.
    \item If $(x,y)\notin S$, $\langle \tilde{w},\tilde{x}\rangle=\langle w,x\rangle$.
    \item $\|\tilde{w}\|^2=\|w\|_2^2+\|\xi\|_2^2$.
    \item $\|\tilde{x}\|^2=2$.
\end{enumerate}

In the following, we can assume that the training data does not appear when testing the classifier, which is the case with high probability. By property 2, to bound the generalization performance of the hyperplane $w$, we only need to bound the generalization performance of $\tilde{w}$ in the enlarged space, which corresponds to a hard margin problem.

For the generalization error of hard margin classifiers, it is well-known that the generalization error bound is captured by the VC-dimension which characterizes the complexity of the classifier family.
Intuitively, a hard margin classifier corresponds to a ``thick" hyperplane in the data space, which reduces its complexity compared with margin-less hyperplanes. The relevant complexity measure in our results is the so-called \emph{fat-shattering dimension} which we define~now.

\begin{definition}[{\cite[Definition~III.4]{Shawe-Taylor2002generalization}}]
Let $\mc F\subseteq \{f:\calX\to\mathbb{R}\}$ be a set of real-valued functions. We say that a set of points $\hat{X}\subseteq X$ is $\gamma$-shattered by $\mc F$, if there are real
numbers $r_x$ indexed by $x\in \hat{X}$ such that for all binary vectors $b_x$ indexed by $x\in \hat{X}$, there is a function $f_b\in\mc F$ satisfying
\begin{equation}
    f_b(x)\begin{cases}\geq r_x+\gamma, & \text{if }b_x=1,\\\leq r_x-\gamma, & \text{otherwise}.\end{cases}
\end{equation}
The $\gamma$-fat-shattering dimension of ${\mc F}$, denoted $\fat_{\mc F}(\gamma)$, is the size of the largest set $\hat{X}$ that is $\gamma$-shattered by $\mc F$, if this is finite or infinity otherwise.
\end{definition}

Let $\calH$ be a set of linear functions that map from $L_X$ to $\mathbb{R}$, such that their norm equals to~$\|\tilde{w}\|$. Since the data vectors $\tilde{x}$ have bounded norm, the fat-shattering dimension of $\calH$ was shown~\cite{Shawe-Taylor2002generalization} to be bounded by
\begin{equation}
    \fat_{\calH}(\gamma)\leq \mc O\left(\frac{\|\tilde w\|^2}{\gamma^2}\right).
\end{equation}
Next we invoke the following lemma from Ref.~\cite{anthony_bartlett_2000} (also see~\cite{bartlett1998prediction}) which uses fat-shattering dimension to understand generalization bounds for learning real-valued concept classes. 
\begin{lemma}[{\cite[Corollary 3.3]{anthony_bartlett_2000}}]\label{lemma:fatshatteringgeneralization}
Let $\Cc,\calH$ be sets of functions that map from a set $\calX$ to $[0,1]$. Then for all $\eta,\gamma,\delta\in(0,1)$, for every $f\in \Cc$ and for every probability measure $\mc D$ on $\calX$, with probability at least $1-\delta$ (over the choice of $S=\{x_1,\dots,x_m\}$ where $x_i\sim\mc D$), if $h\in \calH$ and $|h(x_i)-f(x_i)|\leq \eta$ for $1\leq i\leq m$, then
\begin{equation}
    \Pr_{x\sim\mc D}\big[|h(x)-f(x)|\geq\eta+\gamma\big]\leq\frac{1}{m}\cdot \Tilde{\mc O}\left(\fat_\calH\left(\frac{\gamma}{8}\right)+\log\frac{1}{\delta}\right).
\end{equation}
\end{lemma}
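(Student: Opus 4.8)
The plan is to reconstruct the standard fat-shattering uniform-convergence argument of Anthony and Bartlett, specialized to the ``interpolation'' regime in which the coarse empirical error is exactly zero. First I would pass to the loss class $\mathcal G=\{g_h:x\mapsto|h(x)-f(x)|\mid h\in\calH\}$, a family of functions $\calX\to[0,1]$: the hypothesis ``$|h(x_i)-f(x_i)|\le\eta$ for all $i$'' says $\max_{i}g_h(x_i)\le\eta$, while the conclusion bounds $\Pr_{x\sim\mc D}[g_h(x)\ge\eta+\gamma]$, so it suffices to show that with probability $\ge 1-\delta$ over $S$, \emph{every} $g\in\mathcal G$ with $\max_{i\le m}g(x_i)\le\eta$ satisfies $\Pr_{\mc D}[g\ge\eta+\gamma]\le\varepsilon$, where $\varepsilon=\tfrac1m\widetilde{\mathcal O}\bigl(\fat_\calH(\gamma/8)+\log(1/\delta)\bigr)$. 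A subtlety I would flag at the outset is that the fat-shattering dimension does \emph{not} compose well with $h\mapsto|h-f|$, but $\ell_\infty$-covering numbers do: if $\{h_j\}$ is an $\ell_\infty$-cover of $\calH$ on a finite sample at scale $\alpha$, then $\{|h_j-f|\}$ is an $\ell_\infty$-cover of $\mathcal G$ at the same scale, so $\mathcal N_\infty(\alpha,\mathcal G,N)\le\mathcal N_\infty(\alpha,\calH,N)$ for every scale and sample size. This is exactly why the proof is routed through covering numbers of $\calH$ rather than a fat-shattering dimension of $\mathcal G$.

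The core is a symmetrization-plus-permutation step. By the usual ghost-sample argument, for $m\varepsilon\ge 2$ the probability of the bad event over $S$ is at most twice the probability, over $S$ together with an independent ghost sample $S'=(x_1',\dots,x_m')$, that some $g\in\mathcal G$ has $\max_{i\le m}g(x_i)\le\eta$ while $|\{j:g(x_j')\ge\eta+\gamma\}|\ge\varepsilon m/2$; the factor $2$ and the halved threshold follow from a binomial-median estimate showing that a function with true error $\ge\varepsilon$ makes $\ge\varepsilon m/2$ ghost errors with probability $\ge 1/2$ when $m\varepsilon\ge 2$. Next I would fix the $2m$-point double sample $Z=S\cup S'$, take a minimal $\ell_\infty$-cover $\mathcal G_0$ of $\mathcal G|_Z$ at scale $\gamma/3$ (so $|\mathcal G_0|\le\mathcal N_\infty(\gamma/3,\calH,2m)$), and observe that any witnessing $g$ has a representative $\tilde g\in\mathcal G_0$ with $\tilde g\le\eta+\gamma/3$ on the whole $S$-half and $\tilde g\ge\eta+2\gamma/3$ on $\ge\varepsilon m/2$ points of the $S'$-half. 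Using exchangeability, drawing $(S,S')$ i.i.d.\ is equivalent to drawing them i.i.d.\ and then independently swapping $x_i\leftrightarrow x_i'$ with probability $1/2$; for fixed $Z$ and fixed $\tilde g$, call $z\in Z$ heavy if $\tilde g(z)>\eta+\gamma/3$. For the swapped configuration to keep the $S$-half below $\eta+\gamma/3$, no pair may have both endpoints heavy, and each ``split'' pair (exactly one heavy endpoint) must send its heavy endpoint to the $S'$-half, an independent probability-$1/2$ event; since the target event forces $\ge\varepsilon m/2$ heavy points into the $S'$-half, there are $\ge\varepsilon m/2$ split pairs, so the conditional swap-probability is $\le 2^{-\varepsilon m/2}$. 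A union bound over $\mathcal G_0$ and averaging over $Z$ gives $\Pr_S[\text{bad}]\le 2\,\mathcal N_\infty(\gamma/3,\calH,2m)\,2^{-\varepsilon m/2}$; setting this $\le\delta$ yields $\varepsilon=\tfrac2m\bigl(\log_2\mathcal N_\infty(\gamma/3,\calH,2m)+\log_2(2/\delta)\bigr)$.

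It then remains to bound the covering number by the fat-shattering dimension, which is where the real external input enters: the theorem of Alon, Ben-David, Cesa-Bianchi, and Haussler (as used by Anthony--Bartlett) gives $\log_2\mathcal N_\infty(\alpha,\calH,N)=\widetilde{\mathcal O}\bigl(\fat_\calH(\alpha/4)\bigr)$, up to factors polylogarithmic in $N$ and $1/\alpha$; plugging $\alpha=\gamma/3$, $N=2m$, and absorbing the absolute constant inside the argument of $\fat_\calH$ (a more careful choice of cover radius $\simeq\gamma/2$ combined with the $1/4$ inside the covering bound recovers exactly $\fat_\calH(\gamma/8)$) produces the claimed $\varepsilon=\tfrac1m\widetilde{\mathcal O}\bigl(\fat_\calH(\gamma/8)+\log(1/\delta)\bigr)$, with the regularity hypothesis $m\varepsilon\ge 2$ harmlessly enforced by the constant hidden in $\widetilde{\mathcal O}$ (otherwise the stated bound is vacuous). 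I expect the two main obstacles to be exactly (i) correctly invoking and parametrizing the Alon--Ben-David--Cesa-Bianchi--Haussler covering-number-versus-fat-shattering theorem, itself a nontrivial combinatorial result, and (ii) the scale bookkeeping --- the ghost-sample error level $\eta+\gamma$, the cover radius $\sim\gamma/2$, and the factor $1/4$ (ultimately $1/8$) lost inside the covering bound --- chosen so that the empirical-zero-on-$S$ / heavy-on-$S'$ separation survives passage to the cover with a strictly positive gap and the final constant lands at $\gamma/8$.
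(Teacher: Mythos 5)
The paper does not prove this lemma at all: it is imported verbatim as Corollary~3.3 of Anthony--Bartlett, so there is no in-paper argument to compare against. Your reconstruction is, in substance, the canonical proof of that cited result, and it is correct as a sketch: the reduction to the loss class $\mathcal G=\{|h-f|\}$ with the $1$-Lipschitz transfer of $\ell_\infty$-covers back to $\calH$ (which is exactly why the bound ends up in terms of $\fat_{\calH}$ rather than a dimension of $\mathcal G$), the ghost-sample symmetrization with the binomial-median step giving the factor $2$ and threshold $\varepsilon m/2$ under $m\varepsilon\geq 2$, the swap/permutation argument with ``heavy'' points yielding $2^{-\varepsilon m/2}$ per cover element, the union bound over the $\ell_\infty$-cover of the double sample, and finally the Alon--Ben-David--Cesa-Bianchi--Haussler covering-number bound are precisely the ingredients of the source proof. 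The two caveats you flag yourself are the genuine ones: the $\gamma/8$ constant depends on taking the cover radius near $\gamma/2$ rather than $\gamma/3$ (with $\gamma/3$ you would land at $\fat_{\calH}(\gamma/12)$, which is a weaker but still $\Tilde{\mc O}$-equivalent statement for the paper's purposes), and at radius $\gamma/2$ the separation between the $S$-half and the heavy $S'$-points becomes non-strict and needs the usual strict/non-strict bookkeeping. Neither affects correctness of the lemma as stated, and for the role this lemma plays in the paper (it is only ever invoked, in Section~\ref{section:generalizationbound}, with $\eta=1-\gamma_0$ and $\gamma=0.9\gamma_0$) the citation suffices; your proof would be needed only if one wanted the appendix to be self-contained.
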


To apply this lemma, consider the function $h(\tilde{x})=\frac{1}{\|\tilde{w}\|}\langle \tilde{w},\tilde{x}\rangle$. Let $\gamma_0=\frac{1}{\|\tilde{w}\|}$ and $\gamma=0.9\gamma_0$. Let $y=f(\tilde{x})\in\{-1,1\}$ be any labeling rule, and $S=\{(\tilde{x}_1,y_1),\dots,(\tilde{x}_m,y_m)\}$ be a training set. By the properties of the mapping, for all $\tilde{x}\in S$ we have $yh(\tilde{x})\geq \gamma_0$, which means that
\begin{equation}
    |h(\tilde{x})-f(\tilde{x})|=|yh(\tilde{x})-1|\leq 1-\gamma_0.
\end{equation}
Applying Lemma~\ref{lemma:fatshatteringgeneralization}, we have with probability at least $1-\delta$,
\begin{equation}
    \Pr\big[|h(\tilde{x})-f(\tilde{x})|\geq 1-0.1\gamma_0\big]\leq\frac{1}{m}\Tilde{\mc O}\left(\|\tilde{w}\|^2+\log\frac{1}{\delta}\right).
\end{equation}
Finally, note that $yh(\tilde{x})\leq 0.1\gamma_0$ implies that $|h(\tilde{x})-f(\tilde{x})|\geq 1-0.1\gamma_0$, therefore
\begin{equation}
    \Pr\left[yh(\tilde{x})\leq 0.1\gamma_0\right]\leq\frac{1}{m}\Tilde{\mc O}\left(\|\tilde{w}\|^2+\log\frac{1}{\delta}\right).
\end{equation}
This concludes the proof of Lemma~\ref{lemma:margingeneralization}, as

\begin{nalign}
    \Pr_{x\sim \calX}[y\langle w,x\rangle<0.1]&=\Pr_{x\sim \calX}[y\langle \tilde{w},\tilde{x}\rangle<0.1]\\
    &=\Pr_{x\sim \calX}[yh(\tilde{x})<0.1\gamma_0]\\
    &\leq\frac{1}{m}\Tilde{\mc O}\left(\|\tilde{w}\|^2+\log\frac{1}{\delta}\right)= \frac{1}{m}\Tilde{\mc O}\left(\|w\|_2^2+\|\xi\|_2^2+\log\frac{1}{\delta}\right).
\end{nalign}
\end{document}